\setlist[tablenotes]{label=\tnote{\alph*},ref=\alph*,itemsep=\z@,topsep=\z@skip,partopsep=\z@skip,parsep=\z@,itemindent=\z@,labelindent=\tabcolsep,labelsep=.2em,leftmargin=*,align=left,before={\footnotesize}}
\newcounter{NN}
\newtheorem{observation}[NN]{Observation}
\newtheorem{remarkk}[NN]{Remark}
\newtheorem{proposition}[NN]{Proposition}
\newtheorem{theorem}[NN]{Theorem}
\newtheorem{corollary}[NN]{Corollary}
\newtheorem{conjecture}[NN]{Conjecture}
\def\mod#1{\,({\rm mod\ }#1)}%   modulus of a congruence
\DeclareMathOperator{\lcm}{lcm}
\newcommand{\ts}{\hspace{0.5pt}}
\newcommand{\QQ}{\mathbb{Q}\ts}
\newcommand{\NN}{\mathbb{N}\ts}
\newcommand{\ZZ}{\mathbb{Z}\ts}
\newcommand{\tikzmark}[1]{\tikz[overlay,remember picture] \node (#1) {};}
\newcommand{\DrawBox}[1][]{%
    \tikz[overlay,remember picture]{
    \draw[red,#1]
      ($(left)+(-0.2em,0.9em)$) rectangle
      ($(right)+(0.2em,-0.3em)$);}
}
\newcommand*\circled[1]{\tikz[baseline=(char.base)]{\node[shape=circle,draw,inner sep=2pt] (char) {#1};}}
\def\N{\mathbb{N}}
\def\Z{\mathbb{Z}}
\def\Q{\mathbb{Q}}
\def\N{\mathbb{N}}
\begin{document}
\title[Vanishing algebraic entropy]
{Algebraic entropy of (integrable) lattice equations and their reductions}
\author{John A.~G.~Roberts, Dinh T.~Tran}
\address{School of Mathematics and Statistics,
University of New South Wales, Sydney NSW 2052, Australia
}

\email{jag.roberts@unsw.edu.au, t.d.tran@unsw.edu.au}

\begin{abstract}
We study the growth of degrees in many autonomous and non-autonomous lattice equations defined by quad rules with corner boundary values,
some of which are known to be integrable by other characterisations.
Subject to an {\em enabling} conjecture, we prove polynomial growth for a  large class of equations which includes the Adler-Bobenko-Suris
equations and Viallet's $Q_V$ and its non-autonomous generalization.
Our technique is to determine the ambient degree growth of the projective version of the lattice equations and to conjecture the growth of their common factors at each lattice vertex,
allowing the true degree growth to be found.
The resulting degrees satisfy a linear partial difference equation which is {\em universal}, i.e. the same for all the integrable lattice equations considered.
When we take periodic reductions of these equations, which includes {\em staircase} initial conditions,
we obtain from this linear partial difference equation an ordinary difference equation for degrees that implies quadratic or linear degree growth.
We also study growth of degree of several non-integrable lattice equations. Exponential growth of degrees of these equations, and their mapping reductions, is also proved
subject to a conjecture.
\end{abstract}

\maketitle

\section{Introduction}
In recent years, there has been growing interest in discrete dynamical systems (e.g. partial or ordinary difference equations)
that are {\em integrable} -- see \cite{HJN} and references therein. Similar to their continuous time and space counterparts,
the term {\em integrable} in discrete dynamical systems refers to possession of one or more signature properties (e.g. singularity confinement,
Lax pairs, consistency around a cube) that
imply  `low complexity' of the dynamics relative to the generic behaviour expected.
As an example, when we iterate a rational map $\phi$ in $n$ dimensions, we typically expect the maximum of the degrees of numerator and denominator to
grow exponentially with the number of iterations. If we work projectively in $n+1$ dimensions, the map transforms to a polynomial map of the
projective space, with components that are homogeneous of degree $d$.
If we let $d_k$ be the degree of $\phi^k$, we can define the {\em algebraic entropy} to be \cite{BV}
\begin{equation} \label{eq:FirstEntropy}
\epsilon:=\lim_{k\rightarrow \infty} \frac{1}{k}\log(d_k).
\end{equation}
Typically, $\epsilon > 0$.
However, in some special cases,  systematic cancellations may occur between numerator and denominator
that may lead to slow (i.e. subexponential) growth whence $\epsilon=0$, the case of {\em vanishing entropy}.
This property in itself may be taken as a definition of integrability
and it has received much recent attention, also in connection to other related integrability properties/detectors
(\cite{BV,Halburd09HowtoDetect,Halburd2005Diophantine-Int, HasselPropp,  Hietarinta1998Singularity-Con, coprimeness, extended_Hietarinta, RGWM16, RobertsTran14, Kamp_growth,  Viallet2015}).

In this paper, we investigate algebraic entropy of partial difference equations in the plane,
or lattice equations.  Consider a square lattice whose sites have coordinates $(l,m) \in \ZZ^2$.
Field variables $u$ are defined on each site of the lattice (see Figure 1 of Section 2) and we assume that on each lattice square
there is an equation which is multi-affine in the variables, i.e. affine in each variable (e.g. see equation \eqref{eq:simplequad} of Section 2).
This equation allows to solve for the field variable on any vertex of the square as a rational function of the field variables on
the other three vertices.
Coefficients in this equation might depend on the lattice site $(l,m)$, in which case
the equation is termed {\em non-autonomous}  and {\em autonomous} otherwise.
Historically, it has often been the case that the former has been obained from the latter for a class of equations via a
deautonomizing process that maintains the integrability signature of singular confinement
~\cite{Halburd09HowtoDetect, Halburd2005Diophantine-Int, Hietarinta1998Singularity-Con}.
This includes non-autonomous versions of the Adler-Bobenko-Suris (ABS) equations,
modified Korteweg-De Vries (mKdV) and sine-Gordon (sG) equations \cite{Halburd09HowtoDetect, non_ABS_GR, SC,SRH}.

The extension of algebraic entropy to lattice equations was begun in \cite{Viallet}.
In the main, lattice equations have been studied by specifying  initial conditions along a staircase (cf. Figure \ref{F:staircase}).
These initial condtions are taken to be affine in an indeterminate variable and the lattice rule allows an evolution of parallel staircases
to the initial one (e.g. upwards in Figure \ref{F:staircase}). Denoting by $d_k$ the degree of the polynomial in the indeterminate on this evolving sequence
of staircases, Viallet proposed to use the same definition \eqref{eq:FirstEntropy} for the algebraic entropy of the lattice equation.
Numerous numerical investigations support the idea that, when not exponential, $d_k$ grows quadratically (or linearly in special cases).
Again, the lattice equations where this subexponential growth occurs typically have other properties associated with integrability
(e.g. singularity confinement \cite{non_ABS_GR}).

Our approach here is to consider directly the degree growth of the partial difference equation without taking a one-dimensional reduction
of the degrees. Instead of boundary values on a staircase, we take corner boundary values along the boundary of the first quadrant
with the boundary values at most affine in an indeterminate. Using the lattice equation, we can calculate the field variables inside
the quadrant starting from the $(1,1)$ vertex. We record the degree in the indeterminate of the numerator (or denominator) of the
field variable at each site, denoted $\overline{d}_{l,m}$. In practice -- see Section 2 below -- we write each field variable in projective coordinates, that is
we introduce $u_{l,m}=\frac{x_{l,m}}{z_{l,m}}$ and follow the evolution of $x_{l,m}$ and $z_{l,m}$ via a projective version of the lattice
rule.  Then $\overline{d}_{l,m}=d_{l,m} - g_{l,m}$, where $d_{l,m}$ is the degree of $x_{l,m}$ and $z_{l,m}$ and
$g_{l,m}$ is the degree of their greatest common divisor (which is cancelled in numerator and denominator of the associated field variable to given
the reduced degree $\overline{d}_{l,m}$).  Generically, one expects no common factor to $x_{l,m}$ and $z_{l,m}$ and the `magic'
is that for certain lattice equations, the common factors grow fast enough that $\overline{d}_{l,m}$ grows subexponentially.

We summarise our main results:

\vspace{0.5cm}
 1. One generically has for any multi-affine rule \eqref{E1:Eq} that
\begin{equation} \label{E:samedeg}
d_{l+1,m+1}=d_{l,m}+d_{l+1,m}+d_{l,m+1},
\end{equation}
which corresponds to exponential degree growth -- see Section 3.
However, some lattice rules distinguish themselves by a systematic factorization
of $x_{l,m}(w)$ and $z_{l,m}(w)$  and the appearance of common factors.
This can lead to sub-exponential growth of $\bar{d}_{l,m}$ if there is sufficient exponential growth of $g_{l,m}$.
%%%%%%%%%%%%%%%%%%%%%%%%%%%%%%%%%%%%%%%%%%%%%%%%%%%%%%%%%%%%%%%%%%%%%%%%%%%%%%%%%%%%%%%%%%%%%%

 2. For a large class of lattice equations -- see Table 1 above the double-line division and Table 2 -- we find
that for the projective version of these rules iterated on any $2\times 2$ lattice block, the iterates $x_{l+1,m+1}$ and $z_{l+1,m+1}$ at the top
right corner share a common factor.
We denote this factor by
$A_{l+1,m+1}$ and find it depends only on the variables in the $2\times 2$ lattice block at the sites $(l-1, m)$ and $(l,m-1)$.
The factor is homogeneous of degree $4$ in these variables with
\begin{equation} \label{goodA}
\deg(A_{l+1,m+1}(w))= 2 \, (d_{l-1,m} + d_{l,m-1}).
\end{equation}
This local factorization provides the `engine' that leads to continuous factor generation and sub-exponential growth.

Suppose the values $G^{'}_{l,m}(w)$ are generated from the following recurrence involving the aforementioned $A_{l+1,m+1}$:
\begin{equation} \label{E:Gcd_recurenceintro}
G^{'}_{l+1,m+1}=\frac{G^{'}_{l-1,m-1}\ G^{'}_{l+1,m}\ G^{'}_{l,m+1}\ A_{l+1,m+1}}{G^{'}_{l-1,m}\ G^{'}_{l,m-1}}.
\end{equation}
and the boundary values in the first quadrant are chosen to agree with those of our lattice equation's ${\gcd}_{l,m}(w)$ when $0 \le l \le 1$ or $0 \le m \le 1$.
Then we {\em conjecture} for a class of lattice equations specified in Table 2
that this remains true on iteration of both \eqref{E:Gcd_recurenceintro} and the corresponding lattice equation.
Consequently,
$$g_{l,m}=\deg({\gcd}_{l,m}(w))=\deg(G^{'}_{l,m}(w)).$$
%%%%%%%%%%%%%%%%%%%%%%%%%%%%%%%%%%%%%%%%%%%%%%%%%%%%%%%%%%%%%%%%%%%%%%%%%%%%%%%%%%%%%

3. If the previous conjecture is true for the considered lattice equations,
the reduced degree sequence $\overline{d}_{l,m}$ satisfies the
the following linear partial difference equation with constant coefficients:
\begin{equation} \label{E:degreeRelationintro}
\overline{d}_{l+1,m+1}=\overline{d}_{l+1,m}+\overline{d}_{l,m+1}+\overline{d}_{l-1,m-1}-\overline{d}_{l,m-1}-\overline{d}_{l-1,m}.
\end{equation}
If the corner boundary degrees $(\overline{d}_{1,m_0+1}-\overline{d}_{0,m_0})$ and
$(\overline{d}_{l_0+1,1}-\overline{d}_{l_0,0})$ are bounded polynomially of degree $D$ in their respective coordinates $m_0$ and $l_0$, then
the sequence $\overline{d}_{l,m}$ of \eqref{E:degreeRelation}
is bounded polynomially of degree $D+1$ along each lattice diagonal.
Therefore $\lim_{l \to \infty} \log(\overline{d}_{l,m_0+l})/l = 0$ for fixed $m_0\ge 0$, i.e. zero algebraic entropy.
%%%%%%%%%%%%%%%%%%%%%%%%%%%%%%%%%%%%%%%%%%%%%%%%

%A common way to numerically analyze a lattice equation for zero algebraic entropy is to put boundary values in an indeterminate $w$
%on the sites of a ``staircase'' on the lattice. Use of the lattice equation allows the variables on adjacent staircases to be iteratively calculated.
4. The use of periodic staircases is a standard way to derive integrable maps from integrable lattice equations.
A consequence of \eqref{E:degreeRelationintro} is that it automatically implies quadratic degree growth (at most) for these staircase reductions, as has been found experimentally.
Thus, we regard \eqref{E:degreeRelationintro} as a master equation to describe the vanishing lattice algebraic entropy.
%%%%%%%%%%%%%%%%%%%%%%%%%%%%%%%%%%%%%%%%%%%%%%%%%%%%%%%%%%%%%%%%%%%%%%%%%%%%%%%%

 5. We also provide examples of lattice maps that also develop a factor $A_{l+1,m+1}$ over a $2\times 2$ lattice block, but of lesser degree
than \eqref{goodA}.  We show explicitly that, subject to a postulated recurrence for ${\gcd}_{l,m}(w)$ similar to \eqref{E:Gcd_recurenceintro}, their
degree growth is exponential on the lattice and in reduction.
\vspace{0.5cm}

This paper is organized as follows.
 In section 2, we give a setting to measure the complexity of  a certain
 lattice equation. Using the equation, one can easily write down each vertex in projective coordinates and obtain
  the corresponding rules in projective coordinates. In this section, we also give   a list of integrable lattice
   equations  that we consider in this paper.
   In the next section, we explore growth
 of degrees over $\QQ[w]$.
    Initial values are given as polynomials in $w$ on the horizontal and vertical axes of the $p\times p$ square in the first
 quadrant.
  We calculate an upper bound for degrees of multi-linear lattice equations. In section 4,  we
 present a conjecture that seems to hold for equations in the Adler-Bobenko-Suris (ABS) list and their non autonomous versions whose parameters are given in strips and $Q_V$, non-autonomous $Q_V$ and other equations in the Hietarinta-Viallet list.
 Based on the factorization at the point $(2,2)$, we give a recursive  formula
 to build a common divisor of the the numerator and denominator of each vertex.
 This gives an  upper bound for  the actual degree of each vertex.
 By considering two types of initial values, we are able to show the polynomial growth of these equations.  In particular, using initial
 values as ratios of  degree-one-polynomials, we can explain the quadratic growth of the equations which was given numerically by Viallet \cite{LGS, Viallet}.
 We also study  growth of degrees of the discrete  Korteweg–de Vries  (KdV) equation \cite{Nijhoff1995KdV}, Lotka-Volterra (LV) equation~\cite{LeviYamilov2009}, modified-modified Volterra (mLV) equation~\cite{LeviYamilov2}, Tzitzeica (Tz) equation~\cite{Adler} and the equation given by Mikhailov and Xentinidis (MX equation)~\cite{Pavlos}. We will show that even these equations do not
 have any common factor at the point $(2,2)$, one can bring them to the previous cases by shifting the starting point.
 In section 5, we present a similar conjecture for some non-integrable equations in the paper~\cite{HietVial}. This conjecture then helps us to prove that these equations have exponential growth. In section 6, we  apply results given in section 4 and section 5 to staircase boundary conditions. Especially, when the staircase boundary values are periodic, we obtain results for growth of degrees of mappings.

 \section{The setting and some preliminary results}
In this section, we set up a methodology to measure the complexity of certain lattice equations upon iteration,
following aspects of the foundational papers \cite{HietVial, Tremblay, Viallet}. The methodology is valid for both autonomous and non-autonomous equations.
We are given a square lattice (see Figure \ref{C1F:quadgraph}) whose sites have coordinates $(l,m) \in {\Z}^2$.
Field variables $u$ are defined on each lattice site and are related via
an equation $Q$ which is multi-affine in these variables, i.e. affine in each variable:
\begin{equation} \label{E1:Eq}
Q_{l,m}(u_{l,m}, u_{l+1,m},u_{l,m+1},u_{l+1,m+1})=0.
\end{equation}
\begin{figure}[h]
\begin{center}
\setlength{\unitlength}{6pt}
\begin{picture}(16,16)(2,2)
\put(0,1){$u_{l,m}$}
\put(3,3){\line(1,0){12}}
\put(3,3){\circle*{1}}
\put(3,3){\line(0,1){12}}
\put(0,17){$u_{l,m+1}$}
\put(15,3){\line(0,1){12}}
\put(15,3){\circle*{1}}
\put(12,1){$u_{l+1,m}$}
\put(3,15){\line(1,0){12}}
\put(3,15){\circle*{1}}
\put(15,15){\circle*{1}}
\put(12,17){$u_{l+1,m+1}$}
%\put(1,9){$\beta$}
%\put(8.5,1.5){$\alpha$}
 \end{picture}
 \end{center}
 \caption{An elementary square of the integer lattice with field variables on the vertices.}\label{C1F:quadgraph}
 \end{figure}
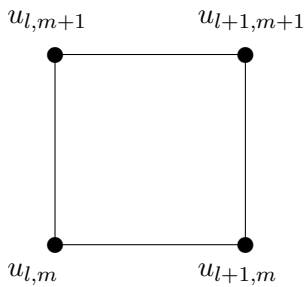
Coefficients in this equation may depend most generally on the lattice site $(l,m)$, called the {\em non-autonomous} case,
accounting for the $(l,m)$ subscript on $Q$. In the {\em autonomous} case, the rule $Q$ is the same on each lattice square.
Generically,  one can solve \eqref{E1:Eq} for $u_{l+1,m+1}$ and obtain
 \begin{equation} \label{E:utopright2}
u_{l+1,m+1}=\frac{h^{(2)}_{l,m}\left(u_{l,m},u_{l+1,m},u_{l,m+1}\right)}{h^{(1)}_{l,m}\left(u_{l,m},u_{l+1,m},u_{l,m+1}\right)},
\end{equation}
where $h^{(1)}$ and $h^{(2)}$ are multi-affine in their arguments with no common factor. We {\bf assume}
that each of the three arguments appears in at least one of  $h^{(1)}$ and $h^{(2)}$
(so, in particular, both  $h^{(1)}$ and $h^{(2)}$ are non-zero and not both constant).
Now we write each field variable in projective coordinates, that is
we introduce $u_{l,m}=\frac{x_{l,m}}{z_{l,m}}$ etc.
We obtain
the projective version of \eqref{E:utopright2}
\begin{eqnarray}
x_{l+1,m+1}&=& f^{(1)}_{l,m}\left(x_{l,m},x_{l+1,m},x_{l,m+1},z_{l,m},z_{l+1,m},z_{l,m+1}\right), \label{projx} \\
z_{l+1,m+1}&=& f^{(2)}_{l,m}\left(x_{l,m},x_{l+1,m},x_{l,m+1},z_{l,m},z_{l+1,m},z_{l,m+1}\right), \label{projz}
\end{eqnarray}
where $f^{(1)}_{l,m}$ and $g^{(1)}_{l,m}$ are homogeneous polynomials of degree $3$.
For example, for the non-autonomous version of the lattice rule $H_1$ given by
\begin{equation} \label{eq:simplequad}
(u_{l,m}-u_{l+1,m+1})(u_{l+1,m}-u_{l,m+1})+\beta_m - \alpha_l = 0,
\end{equation}
we obtain its projective version as
\begin{eqnarray}
x_{l+1,m+1}&=&-x_{l, m}x_{l+1, m}z_{l, m+1}+x_{l, m}x_{l, m+1}z_{l+1, m}+(\alpha_l-\beta_m)\ z_{l, m}z_{l+1, m}z_{l, m+1}, \label{H1projx} \\
z_{l+1,m+1}&=& \left(-x_{l+1, m}z_{l, m+1}+x_{l, m+1}z_{l+1, m}\right)z_{l, m}. \label{H1projz}
\end{eqnarray}
This example encapsulates the general property:
\begin{observation}
\label{O:PCoor}
The projective coordinates $x_{l+1,m+1}$ and $z_{l+1,m+1}$ at the top right vertex of Figure \ref{C1F:quadgraph} are homogeneous polynomials (\ref{projx}) and (\ref{projz}) of
degree $3$, where each term of these polynomials
includes exactly one projective coordinate from
each of the remaining three vertices of the square.
\end{observation}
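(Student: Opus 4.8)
The plan is to reduce Observation~\ref{O:PCoor} to the single structural fact that the quad rule \eqref{E1:Eq} is affine in each argument, after which the homogenization is purely combinatorial bookkeeping. First I would isolate the dependence on the top-right variable: since $Q_{l,m}$ is affine in $u_{l+1,m+1}$, I can write
\[
Q_{l,m}=\alpha\,u_{l+1,m+1}+\beta,
\]
where $\alpha$ and $\beta$ depend only on the three remaining vertices and are themselves multi-affine (affine in each of $u_{l,m},u_{l+1,m},u_{l,m+1}$). Solving $Q_{l,m}=0$ gives $u_{l+1,m+1}=-\beta/\alpha$, which matches \eqref{E:utopright2} with $h^{(2)}_{l,m}=-\beta$ and $h^{(1)}_{l,m}=\alpha$ after cancelling any common factor. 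Thus the whole claim rests on understanding how a multi-affine polynomial in three variables behaves under the substitution $u_{i,j}=x_{i,j}/z_{i,j}$.

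The key step is an elementary homogenization lemma. Writing $1,2,3$ for the three vertices $(l,m),(l+1,m),(l,m+1)$, any polynomial $P$ that is affine in each of $u_1,u_2,u_3$ expands uniquely over subsets as
\[
P(u_1,u_2,u_3)=\sum_{S\subseteq\{1,2,3\}}c_S\prod_{i\in S}u_i .
\]
Substituting $u_i=x_i/z_i$ and clearing the denominator $z_1z_2z_3$ produces
\[
\widehat{P}:=z_1z_2z_3\,P\!\left(\tfrac{x_1}{z_1},\tfrac{x_2}{z_2},\tfrac{x_3}{z_3}\right)=\sum_{S\subseteq\{1,2,3\}}c_S\prod_{i\in S}x_i\prod_{j\notin S}z_j .
\]
Each summand selects, for every vertex $i$, exactly one of the two coordinates $x_i$ or $z_i$ (namely $x_i$ if $u_i$ was present and $z_i$ otherwise), each to the first power. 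Hence every term is homogeneous of degree one in each pair $(x_i,z_i)$, therefore of total degree exactly $3$, and it contains exactly one projective coordinate from each vertex — precisely the asserted shape.

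To finish I would apply the lemma to $h^{(1)}_{l,m}$ and $h^{(2)}_{l,m}$, setting $z_{l+1,m+1}=\widehat{h^{(1)}_{l,m}}=f^{(2)}_{l,m}$ and $x_{l+1,m+1}=\widehat{h^{(2)}_{l,m}}=f^{(1)}_{l,m}$; these are \eqref{projx}--\eqref{projz} and satisfy $x_{l+1,m+1}/z_{l+1,m+1}=h^{(2)}_{l,m}/h^{(1)}_{l,m}=u_{l+1,m+1}$, as required. The only points needing care are that each degree is genuinely $3$ and not lower, and that the degenerate cases (some $u_i$ absent from $h^{(1)}_{l,m}$ or $h^{(2)}_{l,m}$) cause no trouble; both are handled uniformly by the lemma. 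Distinct subsets $S$ yield distinct, hence linearly independent, monomials in the $x$'s and $z$'s, so no cancellation can occur, and an absent variable merely contributes its factor $z_i$ in place of $x_i$, leaving every term of full degree $3$. Since the standing assumption forces $h^{(1)}_{l,m}$ and $h^{(2)}_{l,m}$ to be nonzero, their homogenizations are nonzero sums of degree-$3$ monomials, i.e.\ genuine degree-$3$ homogeneous polynomials. I do not expect a real obstacle here: the statement is a direct consequence of multi-affinity, and the worked example \eqref{H1projx}--\eqref{H1projz} already displays exactly this subset-indexed term structure.
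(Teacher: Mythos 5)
Your argument is correct and is essentially the paper's own reasoning made explicit: the paper obtains \eqref{projx}--\eqref{projz} precisely by substituting $u_{i,j}=x_{i,j}/z_{i,j}$ into the multi-affine $h^{(1)}_{l,m}$ and $h^{(2)}_{l,m}$ of \eqref{E:utopright2} and clearing the common denominator $z_{l,m}\,z_{l+1,m}\,z_{l,m+1}$, recording the resulting term structure as an observation illustrated by the $H_1$ example \eqref{H1projx}--\eqref{H1projz}. Your subset-indexed expansion of a multi-affine polynomial is exactly the bookkeeping the paper leaves implicit, so the two approaches coincide.
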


Table \ref{tab:rules} is the list of non-autonomous and autonomous lattice equations that we will consider in this paper.
For ease of notation in listing them, we denote $u:=u_{l,m},u_1:=u_{l+1,m}, u_2:=u_{l,m+1}$ and $u_{12}:=u_{l+1,m+1}$.
%and omit the lattice site $(l,m)$ in $Q$,  $f^{(1)}$ and $f^{(2)}$.
%\remark{BAD NOTATION?}
The equations satisfy the assumption  of the form \eqref{E:utopright2} above.

In this table, we use the name  $Q_i$ and $H_i$, $(i=1,2,3)$
for equations given by Adler-Bobenko-Suris (ABS)~\cite{ABS} with parameters given on strips~\cite{SRH, YingShi}, i.e. lattice parameters $\alpha$ depends on $l$ and $\beta$ depend on $m$ only.
%The remaining rules are not in the ABS classification but have some nice properties.
Also given are other
well-known discrete  integrable systems such as  KdV, MX, LV, mLV and Tz equations \cite{Adler, ABS, Hirota1995Lotka-Voltera, LeviYamilov2009, LeviYamilov2, Pavlos}.
The non-autonomous mKdV and sG were introduced in \cite{SC}. The non-autonomous KdV equation was studied in~\cite{Kenji}.
%These equations are integrable in the sense that they have Lax pairs and singularity confinement.
In particular, the autonomous KdV equation can be obtained from two copies of autonomous $H_1$ \cite{Nijhoff1995KdV}. The LV equation
can be obtained from the autonomous KdV equation via a Miura transformation \cite{Ramaninotsonew}.
The rest of the list in Table 1 denoted by E$\#\#$ comprises lattice equations introduced by Hietarinta and Viallet~\cite{HietVial}. Their degree growth was studied
using a staircase initial condition.
%These equations have polynomial growth in the north-east direction.
Among these equations, equations E21 and E22 can be reduced to autonomous versions of $H_1$ and $H_3$ in the ABS list.
Equations above the double line division in Table 1 are known (or suggested) to be integrable via one or more
standard properties (consistency, Lax pair, singularity confinement, numerically-observed low degree growth using a staircase initial condition).\
Equations below the double line were shown in \cite{HietVial} to have exponential degree growth using a staircase initial condition.
We will study (or restudy) all equations in Table 1 with corner boundary values.

In addition to Table 1, we also consider a non-autonomous version of the lattice equation $Q_V$ \cite{Q5} that recently appeared in \cite{LGS}:
\begin{equation} \label{NonQV}
\small
\begin{split}
Q_V^{non}:=&p_{{1,0}}uu_{{1}}u_{{2}}u_{{12}}+p_{7,0}+
 \left( p_{{2,0}}- \left( -1 \right)
^{l}p_{{2,1}}- \left( -1 \right) ^{m}p_{{2,2}}+ \left( -1 \right) ^{l+m}p_{{2,3}} \right) uu_{{2}}u_{{12}}+
\\
&\
 \left( p_{{2,0}}+ \left( -1\right) ^{l}p_{{2,1}}- \left( -1 \right) ^{m}p_{{2,2}}- \left( -1
 \right) ^{l+m}p_{{2,3}} \right) u_{{1}}u_{{2}}u_{{12}}+
\\
&\
\left( p_{{2,0
}}+ \left( -1 \right) ^{l}p_{{2,1}}+ \left( -1 \right) ^{m}p_{{2,2}}+
 \left( -1 \right) ^{l+m}p_{{2,3}} \right) uu_{{1}}u_{{12}}+
\\
&\
\left( p_{{2,0}}- \left( -1 \right) ^{l}p_{{2,1}}+ \left( -1 \right) ^{m}p_{{2,
2}}- \left( -1 \right) ^{l+m}p_{{2,3}} \right) uu_{{2}}u_{{1}}+
\\
&\
 \left( p_{{3,0}}- \left( -1 \right) ^{m}p_{{3,2}} \right) uu_{{1}}+
 \left( p_{{3,0}}+ \left( -1 \right) ^{m}p_{{3,2}} \right) u_{{2}}u_{{12}}+
 \\
&\
\left( p_{{4,0}}- \left( -1 \right) ^{l+m}p_{{4,3}} \right) uu_{
{12}}+
 \left( p_{{4,0}}+ \left( -1 \right) ^{l+m}p_{{4,3}} \right) u_{{1}}u_{{2}}+
 \\
&\
\left( p_{{5,0}}- \left( -1 \right) ^{l}p_{{5,1}}
 \right) u_{{1}}u_{{12}}+ \left( p_{{5,0}}+ \left( -1 \right) ^{l}p_{{5,1}} \right) uu_{{2}}+
\\
&\
 \left( p_{{6,0}}+ \left( -1 \right) ^{l}p_{{6,
1}}- \left( -1 \right) ^{m}p_{{6,2}}- \left( -1 \right) ^{l+m}p_{{6,3}} \right) u+
\\
&\
\left( p_{{6,0}}- \left( -1 \right) ^{l}p_{{6,1}}-
 \left( -1 \right) ^{m}p_{{6,2}}+ \left( -1 \right) ^{l+m}p_{{6,3}}\right) u_{{1}}+
\\
&\
 \left( p_{{6,0}}+ \left( -1 \right) ^{l}p_{{6,1}}+
 \left( -1 \right) ^{m}p_{{6,2}}+ \left( -1 \right) ^{l+m}p_{{6,3}}
 \right) u_{{2}}+
\\
&\
\left( p_{{6,0}}- \left( -1 \right) ^{l}p_{{6,1}}+
 \left( -1 \right) ^{m}p_{{6,2}}- \left( -1 \right) ^{l+m}p_{{6,3}} \right) u_{{12}}=0.
\end{split}
\end{equation}
It has been checked numerically that this equation has polynomial growth and it follows the same degree pattern as seen originally in $Q_V$ \cite{LGS, Q5}. Moreover,
$Q_V$ can be obtained from $Q_V^{non}$ by setting all coefficients that depend on $l$ and $m$ to be $0$.  Recall that the significance of $Q_V$ is that all the equations
in the ABS list can be obtained from it by choosing the parameters appropriately.

\begin{table}[h]
\centerline{
\begin{threeparttable}
\begin{tabular}{|l p{13cm} r|}
\hline     {\em Name} &
        {\em Lattice equation} &
        {\em Reference}
\\ \hline \hline
$Q_1$ & $\alpha_l(u-u_{2})(u_1-u_{12})-\beta_m(u-u_1)(u_2-u_{12})+\delta^2 \alpha_l\beta_m(\alpha_l-\beta_m)=0$ & \cite{ABS}
\\ \hline
$Q_2$ &  $ \alpha_l(u-u_{2})(u_1-u_{12})-\beta_m(u-u_1)(u_2-u_{12})+\alpha_l\beta_m(\alpha_l-\beta_m)(u+u_1+u_2+u_{12})
 -\alpha_l \beta_m(\alpha_l-\beta_m)(\alpha_l^2-\alpha_l \beta_m+\beta_m^2)=0 $ & \cite{ABS}
 \\
 \hline
 $Q_3$ & $(\beta_m^2-\alpha_l^2)(uu_{12}+u_1u_2)+\beta_m(\alpha_l^2-1)(uu_1+u_2u_{12})-\alpha_l(\beta_m^2-1)(uu_2+u_1u_{12})
-\delta^2(\alpha_l^2-\beta_m^2)(\alpha_l^2-1)(\beta_m^2-1)/(4\alpha_l\beta_m)=0$ & \cite{ABS}
 \\
 \hline
 $Q_4$& $\alpha_l(uu_1+u_2u_{12})-\beta_m(uu_2+u_1u_{12})-(\alpha_lB_m-\beta_mA_l)\big(uu_{12}+u_1u_2-\alpha_l\beta_m(1+uu_1u_2u_{12})\big)/(1-\alpha_l^2\beta_m^2)=0$
 &\cite{ABS}\tnotex{tnote:Q4}
  \\
 \hline
 $H_1$ & $(u-u_{12})(u_1-u_2)+\beta_m-\alpha_l=0 $
& \cite{ABS,Nijhoff1995KdV}
 \\
 \hline
$H_2$ & $(u-u_{12})(u_1-u_2)+(\alpha_l-\beta_m)(u+u_1+u_2+u_{12})+\beta_m^2-\alpha_l^2=0 $& \cite{ABS}
 \\
 \hline
$H_3$ & $\alpha_l(uu_1+u_2u_{12})-\beta_m(uu_2+u_1u_{12})+\delta(\alpha_l^2-\beta_m^2)=0$ & \cite{ABS}
 \\
 \hline
 mKdV&$p_{l,m}uu_1-q_{l,m}uu_2+r_{l,m}u_1u_{12}-u_2u_{12}=0$&\cite{SC}\tnotex{tnote:mKdV}
 \\
 \hline
 sG&$ r_{l,m}uu_1u_2u_{12}+p_{l,m}u_1u_2-uu_{12}-q_{l,m}=0$&\cite{SC}\tnotex{tnote:sG}
 \\
 \hline
 E16& $uu_{{1}}p_{{1}}+uu_{{2}}p_{{5}} ( p_{{1}}p_{{3}}+p_{{2}} )
+ ( uu_{{12}}+u_{{1}}u_{{2}} ) p_{{2}}+u_{{1}}u_{{12}}p_{{6
}}+u_{{2}}u_{{12}}p_{{3}} ( p_{{5}}p_{{6}}-p_{{2}} )=0
$
& \cite{HietVial}
\\
\hline
E21& $\left( uu_{{2}}+u_{{1}}u_{{12}} \right) p_{{5}}+ \left( u_{{1}}u+u_{{
2}}u_{{12}} \right) p_{{3}}+s=0
$
&\cite{HietVial}
\\
\hline
E22 & $ \left( u-u_{{12}} \right)  \left( u_{{1}}-u_{{2}} \right) +r_{{1}}
 \left( u+u_{{1}}+u_{{2}}+u_{{12}} \right) +s=0
$
&\cite{HietVial}
\\
\hline
E24& $ ( uu_{{2}}+u_{{1}}u_{{12}} )  ( p_{{2}}+p_{{3}}) + ( u_{{1}}u+u_{{2}}u_{{12}}) p_{{3}}+uu_{{12}}+u
_{{1}}u_{{2}}+p_{{3}}p_{{2}} ( p_{{2}}+p_{{3}} ) {s}^{2}=0
$
& \cite{HietVial}
\\
\hline
E25& $uu_{{12}}+u_{{1}}u_{{2}}+ ( u_{{1}}u+u_{{2}}u_{{12}} ) p_{{
3}}- ( uu_{{2}}+u_{{1}}u_{{12}} )  ( p_{{3}}+1) + ( u_{{12}}-u ) r_{{4}}+ (u_{{1}}-u_{{2}}) r_{{2}}
- \left( s ( p_{{3}}+1) +r_{{4}} \right)
 ( sp_{{3}}+r_{{4}}) +sr_{{2}}=0
$ &\cite{HietVial}
\\
\hline
KdV & $ (\alpha_l-\beta_m)u-(\alpha_{l+1}-\beta_{m+1})u_{12}+(\alpha_{l+1}+\beta_m)/u_1-(\alpha_l+\beta_{m+1})/u_2=0 $ & \cite{Kenji, Nijhoff1995KdV}
 \\
\hline
MX & $ (u+u_{12})u_1u_2+1=0 $ & \cite{Pavlos}
\\
\hline
Tz&$uu_{12}(c^{-1}u_1u_2-u_1-u)+u_{12}+u-c=0$&\cite{Adler}\\
\hline
LV&$(u_1+\beta_m)(u-\beta_m)-(u_{12}-\beta_{m+1})(u_2+\beta_{m+1})=0$&\cite{LeviYamilov2009}
\\
\hline
mLV&$(1+uu_1)(cu_{12}+c^{-1}u_2)-(1+u_2u_{12})(cu+c^{-1}u_1)=0$&\cite{LeviYamilov2}
\\
\hline
\hline
E20&
$u_{{12}}u_{{1}}p_{{6}}+u_{{12}}u_{{2}}p_{{3}}+u_{{1}}up_{{1}}+u_{{2}}up_{{5}}+u_{{12}}p_{{3}}
p_{{6}}r_{{4}}
+u_{{1}}p_{{6}}r_{{2}}+u_{{2}}p_{{3}}r_{{3}}+ u\big( -p_{{1}}p_{{5}}r_{{4}}+p_{{1}}r_{{3}}+p_{{5}}r_{{2}}
 \big) +s=0
$
&\cite{HietVial}
\\
\hline
E26&
$u_{{12}}u_{{1}}{p_{{3}}}^{2}+u_{{2}}u{p_{{6}}}^{2}+u_{{12}}u_{{2}}p_{{1}}p_{{3}}+ {u_{{1}}up_{{3}}{p_{{6}}}^{2}}{p_{{1}}^{-1}}
 + \big( u_{{12}}p_{{3}}+u_{{1}}p_{{3}}+u_{{2}}p_{{6}}+up_{{6}} \big) r_{{1}}+{r_{{1}}}^{2}=0$
&\cite{HietVial}
\\
\hline
E27&
$u_{{12}}u_{{1}}{p_{{6}}}^{2}+u_{{2}}u{p_{{3}}}^{2}+{u_{{2}}u_{{12}}p_{{6}} ( p_{{3}}-1) p
_{{1}}^{-1}}+u_{{1}}up_{{1}}p_{{6}} ( p_{{3}}-1 )
  +
 ( uu_{{12}}+u_{{1}}u_{{l2}} ) p_{{6}}+
 ( u_{{12}}p_{{6}}+u_{{1}}p_{{6}}+u_{{2}}p_{{3}}+up_{{3}}) r_{{4}}+{r_{{4}}}^{2}=0$
&\cite{HietVial}
\\
\hline
E28&
$uu_{{12}}+u_{{1}}u_{{2}}+p_{{3}} ( u_{{2}}u_{{12}}+uu_{{1}}) +u_{{12}}u_{{1}}
p_{{6}}
+ {uu_{{2}} ( p_{{3}}-1) ^{2}}{p_
{{6}}^{-1}}+r_{{3}} ( p_{{6}}-p_{{3}}+1)  ( u_{{1}}+u) +p_{{6}} ( p_{{6}}+1) {r_{{3}}}^{2}=0$
&\cite{HietVial}
\\
\hline
E30&
$ uu_{{1}}+u_{{2}}u_{{12}}+ ( uu_{{12}}+u_{{1}}u_{{2}} ) p_{{3}}+
 ( p_{{3}}-1 )  ( uu_{{2}}+u_{{1}}u_{{12}} )
  +( u-u_{{2}}+u_{{1}}-u_{{12}}) r_{{4}}+{r_{{4}}}^{2}$=0
&\cite{HietVial}
\\
\hline
E17&$p_{{4}}uu_{{12}}+p_{{5}}uu_{{2}}+p_{{2}}u_{{1}}u_{{2}}+p_{{6}}u_{{1}}
u_{{12}}+r_{{1}}u+r_{{2}}u_{{1}}+r_{{3}}u_{{2}}+r_{{4}}u_{{12}}+s=0
$
&\cite{HietVial}
\\
\hline
\end{tabular}
 \begin{tablenotes}
      \item\label{tnote:Q4} $A_l^2=h(\alpha_l), B_m^2=h(\beta_m)$, where $h(x)=x^4+\delta x^2+1$.
      \item\label{tnote:mKdV} The parameters $p,q,r$ for mKdV satisfy \phantom{${}_{{}-1}{}_{{}-1}$}$p_{l-1,m}/p_{l,m}-(q_{l,m-1}r_{l-1,m})/(q_{l,m}r_{l-1,m-1})=0$ \newline\phantom{The parameters $p,q,r$ for mKDV sa}and
     $\ p_{l-1,m-1}/p_{l,m-1}-(q_{l-1,m-1}r_{l,m})/(q_{l-1,m}r_{l,m-1})=0$.
     \item\label{tnote:sG}The parameters $p,q,r$ for sG satisfy $p_{l+1,m-1}p_{l,m}-(r_{l,m}r_{l+1,m-1})/(r_{l+1,m}r_{l,m-1})=0$  \newline\phantom{The parameters $p,q,r$ for SG sat}and  $p_{l+1,m+1}p_{l,m}-(q_{l,m}q_{l+1,m+1})/(q_{l+1,m}q_{l,m+1})=0.$
 \end{tablenotes}
\end{threeparttable}
}
\vskip0.5cm
\caption{A list of the lattice rules considered in this paper. Equations above the double-line division
are integrable and those below it are non-integrable. \label{tab:rules}}
\end{table}

In this paper, we derive the projective versions of the non-autonomous and autonomous rules of Table 1 in the way described above (some can be found in \cite{Roberts_Tran_FF}).
We prescribe {\em corner initial conditions} or {\em corner boundary conditions},
as previously considered in e.g. \cite{HietVial,Tremblay} and also in \cite{Roberts_Tran_FF}.
That is, we
%start with the point $(l_1,m_1)$ and initial values are given  $x_{l_1+l,m_1}$ and $z_{l_1+l,m_1}$ and $x_{l_1,m_1+m}$ and $z_{l_1,m+m_1}$,
prescribe the values $x_{l,0}$ and $z_{l,0}$ and $x_{0,m}$ and $z_{0,m}$ with $l,m \in \NN$, on  the borders of
the first quadrant of the lattice.
We work out from the origin using these initial conditions and the projective
lattice rule to generate the top right entries in each square (the rule can depend on the lattice site for non-autonomous equations).
It is clear that $x_{i,j}$ and $z_{i,j}$
with $i,j \in \NN$, are both multinomial expressions in the $2(i+j+1)$ variables given by the initial conditions $x_{l,0}$ and $z_{l,0}$,
$0 \le l \le i$, and $x_{0,m}$ and $z_{0,m}$, $0 < m \le j$.  In the non-autonomous case, these expressions will also contain the parameters from the
$i\cdot j$ lattice rules involved.
Furthermore, one may want to put the corner boundary conditions at a point other than the origin to sample all possible arrangements of lattice rules.
For the non-autonomous $Q_V^{non}$, for example, one can start with one of the following points
$\{(0,0), (1,1),(0,1),(1,0)\}$ to sample all possible parities at the corner.
However, we can always bring the corner point to the origin by shifting so for the rest of our paper, we use the origin as a starting point
and initial values are given on the border of the first quadrant.

To probe the phenomenon of cancellation at the heart of algebraic entropy, following \cite{HietVial, Tremblay, Viallet},
we take the initial values $x_{l,0}(w)$ and $z_{l,0}(w)$  and $x_{0,m}(w)$ and $z_{0,m}(w)$, where $l,m\in\N$, as polynomials in an indeterminate $w$.
\footnote{The specialisation to univariate polynomials also enables concepts like {\em greatest common divisor}, which are not always defined in the multinomial case.}
Using the lattice rule, one can calculate $x_{l,m}(w)$ and $z_{l,m}(w)$ with $l,m>0$.
We factor $x_{l,m}(w)$ and $z_{l,m}(w)$ (over $\Q[w]$) and define $\gcd_{l,m}(w)$  to be their greatest common divisor so we can write
\begin{eqnarray} \label{eq:barxz}
x_{l,m}(w) &=& {\gcd}_{l,m}(w)  \; \bar{x}_{l,m}(w), \label{gcx}  \\
z_{l,m}(w) &=& {\gcd}_{l,m}(w) \; \bar{z}_{l,m}(w). \label{gcz}
\end{eqnarray}
As usual, $\gcd_{l,m}(w)$ is taken to be a monic polynomial and
$x_{l,m}(w)$ and $z_{l,m}(w)$ are relatively prime if and only if
$\gcd_{l,m}(w) \equiv 1$.  Suppressing the $w$-dependence for
ease of presentation, define the non-negative integers
\begin{eqnarray} \label{defsdegree}
d_{l,m} &=&\max(\deg(x_{l,m}),\deg(z_{l,m})) \ge 0,  \label{E:defd} \\
g_{l.m}&=&\deg({\gcd}_{l,m}) \ge 0,  \label{E:defg}\\
\bar{d}_{l,m} &=&\max(\deg(\bar{x}_{l,m}),\deg(\bar{z}_{l,m}))= d_{l,m} - g_{l,m}
\ge 0 \label{E:defbard}
\end{eqnarray}
The key issue (for algebraic entropy) relates to the growth of the degree of
$g_{l,m}$.  Because $x_{l,m}(w)$ and $z_{l,m}(w)$ are considered projectively,
when $g_{l,m} \ge 1$,  they can be replaced by their barred versions. This corresponds to the cancellation of  $\gcd_{l,m}(w)$ from the numerator and denominator of
the rational function $u_{l,m}(w)$. Correspondingly, $d_{l,m}$ is replaced by
the lesser $\bar{d}_{l,m}$ as the (true) degree at the vertex.
This was observed previously in \cite{HietVial,Tremblay}.   One manifestation of this
factorization process (leading to a diagnostic for detecting it)
is that if the corner initial conditions are taken to be integers, the resulting integers $x_{i,j}$ and
$z_{i,j}$ after cancellation of common integer factors would be smaller in magnitude than generically expected.
This is equivalent to saying in the non-projective setting of \eqref{E:utopright2}, and with corner initial conditions
taken in $\QQ$, that some lattice rules give a lower {\em height} for the rational number $u_{l+1,m+1}$
than generically expected. This is closely connected to the concept of Diophantine integrability
for lattice maps, explored in the last part of \cite{Halburd2005Diophantine-Int}.

In typical entropy calculations, the $\gcd$ is discarded and only its degree is noted. For what follows in the
next section, we find it useful to also observe some of its internal structure as we proceed
iteratively away from the origin.
We have the following properties:
\begin{proposition} \label{T:gcd}
For the projective lattice rules of Observation~\ref{O:PCoor},  we
have the following:
\begin{enumerate}
\item[1.]
$0 \leq d_{l+1,m+1} \; \leq \; d_{l,m}+d_{l+1,m}+d_{l,m+1}$
\item[2.]
${\gcd}_{l,m}(w) \; {\gcd}_{l+1,m}(w) \; {\gcd}_{l,m+1}(w) \; \big| \;{\gcd}_{l+1,m+1}(w)$
which implies when $x_{l+1,m+1}(w)$ and $z_{l+1,m+1}(w)$ are not both $0$ that
\subitem{2a.}
$g_{l+1,m+1} \; \geq \; g_{l,m}+g_{l+1,m}+g_{l,m+1}$
\subitem{2b.}  ${\gcd}_{l,m}(w)^3 \; \big| \;{\gcd}_{l+1,m+1}(w) \implies
g_{l+1,m+1} \; \ge \;  3\, g_{l,m}$
\end{enumerate}
 \end{proposition}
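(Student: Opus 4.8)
The plan is to derive everything from the single structural fact recorded in Observation~\ref{O:PCoor}: each monomial of $x_{l+1,m+1}$ and of $z_{l+1,m+1}$ carries exactly one projective coordinate (either an $x$ or a $z$) from each of the three source vertices $(l,m)$, $(l+1,m)$ and $(l,m+1)$. Both the degree bound of part~1 and the divisibility of part~2 fall out of this ``one coordinate per vertex'' structure, so I would handle them at the level of a single monomial and then sum or maximise.

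For part~1, a generic term of $f^{(1)}_{l,m}$ (and likewise of $f^{(2)}_{l,m}$) has the shape $c\,a\,b\,c'$, where $a\in\{x_{l,m},z_{l,m}\}$, $b\in\{x_{l+1,m},z_{l+1,m}\}$, $c'\in\{x_{l,m+1},z_{l,m+1}\}$, and $c$ is a lattice coefficient, which does not involve $w$. Its $w$-degree is therefore $\deg(a)+\deg(b)+\deg(c')\le d_{l,m}+d_{l+1,m}+d_{l,m+1}$ by the definition~\eqref{E:defd}. Since the degree of a sum is at most the maximum of the degrees of its terms, taking the maximum over all terms and over both $x_{l+1,m+1}$ and $z_{l+1,m+1}$ gives $d_{l+1,m+1}\le d_{l,m}+d_{l+1,m}+d_{l,m+1}$; the lower bound $d_{l+1,m+1}\ge 0$ is immediate from the definition.

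For part~2, I would set $P:={\gcd}_{l,m}\,{\gcd}_{l+1,m}\,{\gcd}_{l,m+1}$ and show $P\mid x_{l+1,m+1}$ and $P\mid z_{l+1,m+1}$. By the factorisations~\eqref{gcx}--\eqref{gcz}, every coordinate from vertex $(l,m)$ is divisible by ${\gcd}_{l,m}$, and similarly for the other two vertices; hence in the monomial $c\,a\,b\,c'$ the three factors $a,b,c'$ are divisible by ${\gcd}_{l,m},{\gcd}_{l+1,m},{\gcd}_{l,m+1}$ respectively, so the whole monomial is divisible by $P$. As this holds termwise, $P$ divides each of $x_{l+1,m+1}$ and $z_{l+1,m+1}$; provided these are not both zero (so that ${\gcd}_{l+1,m+1}$ is a well-defined polynomial), a common divisor of the two must divide their greatest common divisor, giving $P\mid{\gcd}_{l+1,m+1}$. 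Parts~2a and~2b then follow by reading off degrees: $\deg(P)=g_{l,m}+g_{l+1,m}+g_{l,m+1}\le g_{l+1,m+1}$ is exactly~2a, and under the hypothesis ${\gcd}_{l,m}^3\mid{\gcd}_{l+1,m+1}$ one gets $3\,g_{l,m}=\deg({\gcd}_{l,m}^3)\le g_{l+1,m+1}$, which is~2b. I do not expect a genuine obstacle here: the entire content is packaged into the tri-linear structure of Observation~\ref{O:PCoor}, and the only point needing care is the degenerate case in which $x_{l+1,m+1}$ and $z_{l+1,m+1}$ both vanish, which is precisely why the proposition restricts~2a--2b to the non-degenerate case.
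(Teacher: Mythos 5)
Your proof is correct and follows exactly the route the paper intends: the paper's own proof simply asserts that parts 1 and 2 are ``straightforward from Observation~\ref{O:PCoor}'' and that 2a--2b ``follow easily'' from the divisibility in 2, and your termwise argument (one coordinate per vertex bounds the degree of each monomial, and the factorisations \eqref{gcx}--\eqref{gcz} make each monomial divisible by ${\gcd}_{l,m}{\gcd}_{l+1,m}{\gcd}_{l,m+1}$, after which one reads off degrees) is precisely the filling-in of those details. No gap; your handling of the degenerate case where both coordinates vanish matches the proposition's stated restriction.
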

\begin{proof}
It is straightforward from Observation~\ref{O:PCoor} that 1. and 2. hold. The statements of (2a) and (2b)
follow easily from the first statement in 2.
\end{proof}

\begin{remarkk}   \label{R:samedeg}
%If the arguments are chosen so that the $x$ and $z$ components have the same
%degree in
%$w$, generically this property is preserved as the rule is iterated. This is because %all terms of
If $x_{l,m}(w)$ and $z_{l,m}(w)$ share the same degree,   \eqref{E:defd}-\eqref{E:defbard} become:
\begin{eqnarray} \label{D:samedeg}
d_{l,m} &=&\deg(x_{l,m})=\deg(z_{l,m})) \ge 0 \label{E:defdspec} \\
\bar{d}_{l,m} &=&\deg(\bar{x}_{l,m})=\deg(\bar{z}_{l,m}))= d_{l,m} - g_{l,m} \ge 0. \label{E:defbardspec}
\end{eqnarray}
If this degree equality of the two components holds at each of the 3 vertices used in (\ref{gcx})-(\ref{gcz}), then
{\bf generically} this is retained with
%and 1. of Proposition \ref{T:gcd} becomes $d_{l+1,m+1}=0$, or (the generic case)
\begin{equation} \label{E:samedeg}
d_{l+1,m+1}=\deg({x}_{l+1,m+1})=\deg({z}_{l+1,m+1}�)=d_{l,m}+d_{l+1,m}+d_{l,m+1}.
\end{equation}
\end{remarkk}

Part 2. of Proposition~\ref{T:gcd} illustrates the nesting and propagation of the greatest common divisors that occurs for growing $l$ and growing $m$
when there starts to be a non-trivial $\gcd$ at some point.  At any lattice point in the first quadrant, the $\gcd$ of $x_{l,m}(w)$ and $z_{l,m}(w)$ includes all
the greatest common divisors of the polynomials at the
$l m + l + m$ lattice points to its left and/or below it.  This leads to great multiplicities of
factors in the $\gcd$.
We remark that \cite{HietVial} used cancellation (i.e. non-trivial $\gcd$) after $2$ steps on the diagonal as
a criterion to detect integrable rules and Part 2b. of Proposition \ref{T:gcd} shows that once this occurs, it propagates at an exponential rate.

With respect to statement 2. of Proposition~\ref{T:gcd}, we remark that
once we factor out the product ${\gcd}_{l,m}(w) \; {\gcd}_{l+1,m}(w) \; {\gcd}_{l,m+1}(w)$ from $x_{l+1,m+1}(w)$ and
$z_{l+1,m+1}(w)$,  we are left with $f_{l,m}^{(1)}$ of (\ref{projx}) and $f_{l,m}^{(2)}$ of
(\ref{projz}), but now with the  barred arguments as defined by (\ref{gcx}) and
(\ref{gcz}). These polynomials in the barred variables may have a further non-trivial $\gcd$.
We define
\begin{eqnarray}  \label{bargcd}
\overline{\gcd}_{l+1,m+1}(w) =
%\frac{{\gcd}_{l+1,m+1}(w)}
%{{\gcd}_{l,m}(w) \; {\gcd}_{l+1,m}(w) \; {\gcd}_{l,m+1}(w)} \\
\gcd\{f_{l,m}^{(1)}\left(\bar{x}_{l,m},\bar{x}_{l+1,m},\bar{x}_{l,m+1},\bar{z}_{l,m},\bar{z}_{l+1,m},\bar{z}_{l,m+1}\right),
\nonumber \\
f_{l,m}^{(2)}\left(\bar{x}_{l,m},\bar{x}_{l+1,m},\bar{x}_{l,m+1},\bar{z}_{l,m},\bar{z}_{l+1,m},\bar{z}_{l,m+1}\right)\}
\end{eqnarray}
so that
\begin{eqnarray}
\bar{x}_{l+1,m+1}(w)=
f_{l,m}^{(1)}\left(\bar{x}_{l,m},\bar{x}_{l+1,m},\bar{x}_{l,m+1},\bar{z}_{l,m},\bar{z}_{l+1,m},\bar{z}_{l,m+1}\right)\,/\,
\overline{\gcd}_{l+1,m+1}(w), \label{xinbar}\\
\bar{z}_{l+1,m+1}(w)=
f_{l,m}^{(2)}\left(\bar{x}_{l,m},\bar{x}_{l+1,m},\bar{x}_{l,m+1},\bar{z}_{l,m},\bar{z}_{l+1,m},\bar{z}_{l,m+1}\right)\,/\,
\overline{\gcd}_{l+1,m+1}(w). \label{zinbar}
\end{eqnarray}

\begin{remarkk} \label{R:Computation}
Computationally, it follows that at each vertex, it is most efficient to record
the triple $\{\bar{x}_{l,m}(w),\bar{z}_{l,m}(w),\gcd_{l,m}(w)\}$, from which
one can reconstruct $x_{l,m}(w)$ and $z_{l,m}(w)$ if necessary. One takes barred
variables as the arguments in
the right-hand side of (\ref{projx}) and (\ref{projz}) and checks the greatest common
divisor, $\overline{\gcd}_{l+1,m+1}(w)$, of the resulting $f_{l,m}^{(1)}$ and $f_{l,m}^{(2)}$. This gives $\bar{x}_{l+1,m+1}(w)$
of \eqref{xinbar} and $\bar{z}_{l+1,m+1}(w)$ of \eqref{zinbar}
whereas $\gcd_{l+1,m+1}(w)$ is updated via
\begin{equation} \label{bargcd2}
{\gcd}_{l+1,m+1}(w) = \overline{\gcd}_{l+1,m+1}(w)\;{\gcd}_{l,m}(w) \; {\gcd}_{l+1,m}(w) \; {\gcd}_{l,m+1}(w).
\end{equation}
\end{remarkk}

%*************************************************************************************************************************************************************************

\section{Growth of ambient degrees before cancellation }
Our approach in this section is to identify a linear partial difference equation satisfied by the upper bound on the degree
at each vertex, which also represents the generic degree at each vertex before common factors and cancellations are considered. This enables the exponential growth of
these degrees to be verified. Once again, the results hold for both autonomous and non-autonomous equations.
%_______________________________________________________________________________________________________________________________
\subsection{Solution of recurrence for the upper bound on degrees}
Part 1. of Proposition~\ref{T:gcd} shows that an upper bound for the degree
$d_{l+1,m+1}$ is provided by the non-negative integer sequence $(a_{l+1,m+1})$ satisfying
\begin{equation} \label{E:bound}
a_{l+1,m+1}=a_{l,m}+a_{l+1,m}+a_{l,m+1}.
\end{equation}
The value $d_{l+1,m+1}$ satisfies this same recurrence at any vertex  if  either $x_{l+1,m+1}$
or $z_{l+1,m+1}$ contains the term comprising a maximal degree polynomial in $w$ from each of the other 3 vertices.  The argument of Remark \ref{R:samedeg} above suggests that, generically, $d_{l+1,m+1}=a_{l+1,m+1}$
when the boundary values are chosen with both components at each vertex having the same degree in $w$.

For initial numerical experiments, we take two cases of such boundary values, namely:
\begin{eqnarray}
\text{Case I:}&
&\begin{cases}
             x_{0,0}= aw +b\text{ and } z_{0,0}=cw+d, \quad  a\ne 0,c\ne 0,\ a,b,c,d\in\ZZ, \\
             x_{0,m}, z_{0,m} \in \ZZ, \quad m=1,2,\ldots \\
             x_{l,0}, z_{l,0} \in \ZZ, \quad l=1,2,\ldots.
               \end{cases}
\label{ICI}\\
\text{Case II:}&
&\begin{cases}
             x_{0,m}, z_{0,m}, x_{l,0}, z_{l,0} \text{ all affine} \in\ZZ[w], \quad l,m =0,1,2 \ldots .
                \end{cases}
\label{ICII}
\end{eqnarray}
For Case I boundary values, we have $d_{0,0}=1$ and $d_{0,m}=d_{l,0}=0$ for $l,m > 0$.
For Case II boundary values, we have  $d_{l,0}=d_{0,m}=1$ for $\l,m \ge 0$.

Our numerical experiments on the equations of Table 1 use Maple
to calculate the polynomials $x_{l,m}(w)$ and
$z_{l,m}(w)$ for $0 \le l, m \le 11$ (i.e. a square of 144 lattice sites based at the origin)
and Case I or Case II boundary values with random integer coefficients
chosen in the interval $[1 ,400]$. For ABS equations,  parameters $\{\alpha_i, \beta_j \mid i=0,...,10,j=0,....10\}$ are also taken randomly with integer values such that the corresponding equations are not degenerate.
They confirm the agreement  $d_{l+1,m+1}=a_{l+1,m+1}$ where $d_{l+1,m+1}$ is calculated from \eqref{defsdegree} with \eqref{projx} and \eqref{projz} and $a_{l+1,m+1}$ from \eqref{E:bound}, with the corresponding same boundary values in each case. Consequently, we call $a_{l+1,m+1}$ the {\em ambient} degree at the vertex (i.e. before any analysis of a possible common factor and cancellation of this factor). If, furthermore, there is actually no
non-trivial $\gcd$ for the two polynomials $x_{l+1,m+1}$ and $z_{l+1,m+1}$ (so {\em no} cancellations possible), then $d_{l+1,m+1}=a_{l+1,m+1}=\bar{d}_{l+1,m+1}$ gives the true degree throughout the lattice.
% If we are working projectively with $x_{l,m}$ and $z_{l,m}$, a necessary condition for achieving this %upper bound is that there is no non-trivial $\gcd$ for the two polynomials (otherwise $d(l+1,m+1)$ is %replaced by the lesser $\bar{d}(l+1,m+1)$).

The solution of \eqref{E:bound} with Case I boundary values can be represented as (part of) a semi-infinite array $A_{0,0}=A_{0,0}\,[l,m]$,
with the indices $m \ge 0$ and $l \ge 0$  measured vertically, respectively to the right,
with respect to the origin at the bottom left corner so $A_{0,0}\,[0,0]=1$. The solution for Case II boundary values corresponds to removing the first column and the last row of this array.

\begin{equation}
\label{M:Total_degree_constant}
A_{0,0}[l,m]= \tiny \left[ \begin {array}{cccccccccc} 0&1&17&145&833&3649&13073&40081&
108545&265729\\ \noalign{\medskip}0&1&15&113&575&2241&7183&19825&48639
&108545\\ \noalign{\medskip}0&1&13&85&377&1289&3653&8989&19825&40081
\\ \noalign{\medskip}0&1&11&61&231&681&1683&3653&7183&13073
\\ \noalign{\medskip}0&1&9&41&129&321&681&1289&2241&3649
\\ \noalign{\medskip}0&1&7&25&63&129&231&377&575&833
\\ \noalign{\medskip}0&1&5&13&25&41&61&85&113&145\\ \noalign{\medskip}0
&1&3&5&7&9&11&13&15&17\\ \noalign{\medskip}0&1&1&1&1&1&1&1&1&1
\\ \noalign{\medskip}1&0&0&0&0&0&0&0&0&0\end {array} \right].
 \end{equation}
A formula for the double-indexed sequence $A_{0,0}[l,m]$
is given by the following theorem.
\begin{theorem}
\label{T:GerFunc}
Let $A_{0,0}[l,m]$ denote the integer sequence solution $a_{l,m}$ satisfying \eqref{E:bound}
%\begin{equation}
\label{E:RecEq}
%a_{l+1,m+1}=a_{l,m}+a_{l+1,m}+a_{l,m+1}
%\end{equation}
for all $l,m\geq 0$ with boundary values $a_{0,0}=1$ and $a_{0,m}=a_{l,0}=0$ for $l,m > 0$.
For $l,m >0$, let $c_{l,m}$ be the coefficient of  $x^{m-1}$ in the Taylor expansion of the following function $g_{l}(x)$ around $0$, i.e,
\begin{equation} \label{E:Taylor}
g_l(x)=\frac{(1+x)^{l-1}}{(1-x)^l}=\sum_{m=1}^{\infty} c_{l,m}\,x^{m-1}.
\end{equation}
We  have for all $l,m>0$,
\begin{equation} \label{E:UpperBound}
A_{0,0}[l,m]=c_{l,m}=\sum_{i+j=m-1}\binom{l-1}{i}\binom{j+l-1}{j}.
\end{equation}
\end{theorem}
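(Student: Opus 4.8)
The plan is to solve the recurrence \eqref{E:bound} by introducing, for each fixed $l\ge 0$, the ordinary generating function in the vertical index $m$,
\begin{equation*}
G_l(x) := \sum_{m \ge 0} a_{l,m}\, x^m,
\end{equation*}
determining each $G_l$ in closed form and then reading off its coefficients to obtain \eqref{E:UpperBound}. The prescribed boundary data $a_{0,0}=1$ and $a_{0,m}=a_{l,0}=0$ for $l,m>0$ give $G_0(x)=1$ and, for every $l\ge 1$, a vanishing constant term $a_{l,0}=0$ in $G_l$; this last fact is what will let me divide cleanly by $x$ at the end and match the shifted indexing of \eqref{E:Taylor}.

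First I would multiply the recurrence $a_{l+1,m+1}=a_{l,m}+a_{l+1,m}+a_{l,m+1}$ by $x^{m+1}$ and sum over $m\ge 0$. The left-hand side telescopes to $G_{l+1}(x)-a_{l+1,0}=G_{l+1}(x)$, while the three right-hand terms become $x\,G_l(x)$, $x\,G_{l+1}(x)$, and $G_l(x)-a_{l,0}$ respectively. Collecting terms yields the first-order (in $l$) relation
\begin{equation*}
(1-x)\,G_{l+1}(x) = (1+x)\,G_l(x) - a_{l,0}.
\end{equation*}
For $l\ge 1$ the inhomogeneity $a_{l,0}$ vanishes, so $G_{l+1}(x)=\tfrac{1+x}{1-x}\,G_l(x)$; for the single exceptional step $l=0$ one has $a_{0,0}=1$ and $G_0(x)=1$, whence $(1-x)G_1(x)=(1+x)-1=x$, i.e. $G_1(x)=\tfrac{x}{1-x}$. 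Iterating the homogeneous relation upward from $l=1$ then gives, for all $l\ge 1$,
\begin{equation*}
G_l(x) = \left(\frac{1+x}{1-x}\right)^{l-1} G_1(x) = \frac{x\,(1+x)^{l-1}}{(1-x)^{l}} = x\, g_l(x),
\end{equation*}
with $g_l$ exactly the function in \eqref{E:Taylor}.

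Since $a_{l,0}=0$ for $l\ge 1$, we have $G_l(x)=\sum_{m\ge 1}a_{l,m}x^m$, so dividing the last identity by $x$ identifies $a_{l,m}$ with the coefficient $c_{l,m}$ of $x^{m-1}$ in $g_l(x)$, which is the first asserted equality $A_{0,0}[l,m]=c_{l,m}$. For the explicit double sum I would expand the two factors separately, using the finite binomial expansion $(1+x)^{l-1}=\sum_i \binom{l-1}{i}x^i$ and the negative-binomial series $(1-x)^{-l}=\sum_j \binom{j+l-1}{j}x^j$. Forming the Cauchy product and extracting the coefficient of $x^{m-1}$ produces
\begin{equation*}
c_{l,m} = \sum_{i+j=m-1}\binom{l-1}{i}\binom{j+l-1}{j},
\end{equation*}
which is \eqref{E:UpperBound}.

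The computation is elementary once the generating-function framework is in place, and there are no convergence concerns since everything is an identity of formal power series in $\QQ[[x]]$. The one place demanding care — and the step I expect to be the genuine, if modest, obstacle — is the bookkeeping of the boundary data when passing the sum through the recurrence, specifically isolating the inhomogeneous term $a_{l,0}$ that is nonzero only at $l=0$. It is precisely this single source term, emanating from $a_{0,0}=1$, that pins down $G_1(x)=\tfrac{x}{1-x}$ and thereby seeds the factor of $x$ (equivalently, the shift from $x^m$ to $x^{m-1}$) distinguishing $G_l$ from $g_l$; getting that index shift right is what makes the two stated forms of $A_{0,0}[l,m]$ agree.
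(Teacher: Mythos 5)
Your proof is correct, and it takes a genuinely different route from the paper's, though both turn on the same multiplication by $(1+x)/(1-x)$. The paper argues in the \emph{verification} direction: it first checks that the boundary rows agree ($a_{1,m}=a_{l,1}=1$ and $c_{1,m}=c_{l,1}=1$), then expands $\frac{1+x}{1-x}=1+2\sum_{j\ge 1}x^j$ in the identity $g_{l+1}(x)=g_l(x)\,\frac{1+x}{1-x}$ to show by coefficient manipulation that $c_{l,m}$ satisfies the same recurrence \eqref{E:bound} as $a_{l,m}$, and concludes equality by uniqueness of the solution with the given boundary data. You argue in the \emph{derivation} direction: you form the generating function $G_l(x)=\sum_{m\ge 0}a_{l,m}x^m$ of the unknown solution itself, sum the recurrence to obtain the inhomogeneous first-order relation $(1-x)G_{l+1}(x)=(1+x)G_l(x)-a_{l,0}$ (your bookkeeping here is right: $a_{l+1,0}=0$ kills the boundary term on the left, and the inhomogeneity survives only at $l=0$), and solve it, the single source term $a_{0,0}=1$ pinning down $G_1(x)=x/(1-x)$ and hence $G_l(x)=x\,g_l(x)$ for $l\ge 1$; the factor $x$ correctly accounts for the index shift between $G_l$ and $g_l$. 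The final Cauchy-product extraction of \eqref{E:UpperBound} is identical in both arguments. What your version buys is that it needs no a priori guess of the closed form --- the function $g_l$ emerges from the recurrence --- and the boundary data is processed systematically through one inhomogeneous term rather than by separate checks along the lines $l=1$ and $m=1$; what the paper's version buys is a slightly lighter computation once the candidate $g_l$ is already in hand, since it only compares coefficients and never needs to solve a functional recursion. All manipulations in your argument are legitimate identities of formal power series, so there is no gap.
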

\begin{proof}
It is easy to see from \eqref{E:bound} that $a_{1,m}=a_{l,1}=1$ is what results from applying the recurrence with prescribed
initial conditions given on the axes as in the statement of the theorem.  On the other hand,
\eqref{E:Taylor} also shows $c_{1,m}=1$ as the Taylor coefficients of $g_1(x)=1/(1-x)$ for all $m \ge 1$ and $c_{l,1}=g_l(0)=1$ for all $l \ge 1$.
It is adequate to  prove that  $c_{l,m}$ satisfies the same recurrence~\eqref{E:bound} as
$a_{l,m}$.
We have
\begin{equation*}
g_{l+1}(x)=\sum_{m=1}^{\infty} c_{l+1,m}\,x^{m-1}.
\end{equation*}
On the other hand, we also have
\begin{align*}
g_{l+1}(x)&=g_{l}(x)\;\frac{1+x}{1-x}\\
&=\left(\sum_{i=1}^{\infty} c_{l,i}\,x^{i-1}\right)\left(1+\sum_{j=1}^{\infty} 2\, x^j\right).
\end{align*}
Equating coefficients of $x^{m-1}$, we obtain
\[
c_{l+1,m}=c_{l,m}+\sum_{i=1}^{m-1}2\, c_{l,i}.
\]
This implies
\begin{align*}
c_{l+1,m+1}&=\sum_{i=1}^{m}2 c_{l,i}+c_{l,m+1}\\
&=\left(\sum_{i=1}^{m-1}2 c_{l,i}+c_{l,m}\right)+c_{l,m}+c_{l,m+1}\\
&=c_{l+1,m}+c_{l,m}+c_{l,m+1},
\end{align*}
which is the same recurrence given by~~\eqref{E:bound}.

We now calculate $c_{l,m}$  using the following formula for $l\geq 1$,
\[
\frac{1}{(1-x)^l}=\sum_{j=0}^{\infty}\binom{j+l-1}{j} x^j.
\]
Expanding $(1+x)^{l-1}$, we get
\begin{equation*}
g_{l}(x)=\left(\sum_{i=0}^{\infty}\binom{l-1}{i} x^i\right)\left( \sum_{j=0}^{\infty}\binom{j+l-1}{j} x^j\right).
\end{equation*}
This gives us
%\begin{equation*}
%c_{l,m}=\sum_{i+j=m-1}\binom{l-1}{i}\binom{j+l-1}{j},
%\end{equation*}
the formula given by~\eqref{E:UpperBound}.
\end{proof}

We mention that from the particular solution $A_{0,0}[l,m]$ of \eqref{E:UpperBound},
we can generate the general solution  for  the recurrence \eqref{E:bound} with corner boundary values \cite[Proposition 6]{RobertsTran14}.

%SAY SOMETHING HERE ABOUT AGREEING WITH RECURRENCE ABOVE AND HOW FAR AND WHAT EQUATIONS WERE TESTED.

\subsection{Exponential growth of the ambient degree}
We remark that the sequence $a_{l,m}$ is well known as the
Delannoy numbers.
Apart from the {\em ad hoc} approach to our solutions above, an alternative way to study the recurrence \eqref{E:bound} is as a linear
partial difference equation with constant coefficients, for
which the method of bivariate generating functions applies \cite[Chapter 6.4]{dobrushkin}.
Generating functions for the full double sequence and diagonal (central) sequence with Case I boundary conditions are known:
\begin{eqnarray*}
A(x,y)&=&\sum_{l=0}^{\infty} \sum_{m=0}^{\infty} a_{l,m}\, x^{l} y^m = \frac{1}{1-x-y-xy}, \\
D(x)&=& \sum_{m=0}^{\infty} a_{m,m}\, x^{m} = \frac{1}{\sqrt{1-6x+x^2}}.
\end{eqnarray*}
These lead to the asymptotics
$$a_{m,m} \sim \frac{\cosh(\frac{\log 2}{4})}{\sqrt{\pi}}\,
 (3 + 2 \sqrt{2})^m\, \frac{1}{\sqrt{m}} $$
whence the diagonal entropy
\begin{equation} \label{entropy}
\epsilon = \lim_{m\rightarrow \infty} \frac{1}{m} \log(a_{m,m}) =\log (3+2\sqrt(2))\approx 1.76.
\end{equation}

\section{ Integrable lattice rules-vanishing entropy}

In this section, we identify linear recurrences for the degree of the greatest common divisor at each vertex for the lattice equations given in section 2.  Together with the recurrence \eqref{E:bound}, this helps to furnish
us with a linear partial difference equation for $\bar{d}_{l,m}$,
%or its upper bound $\bar{d}^{'}_{l,m}$ of \eqref{defsdegree},
whose growth is ultimately our interest.  Sometimes, we can solve
this recurrence and conclude polynomial growth of degree in the vertex coordinates $l$ and $m$,
or exponential growth.

\subsection{Exponentially growing gcd: the inherited gcd and the spontaneous gcd}

As mentioned in the previous section, we use Maple
to calculate the polynomials $x_{l,m}(w)$ and
$z_{l,m}(w)$ for $0 \le l, m \le 9$
and Case I or Case II boundary values \eqref{ICI}-\eqref{ICII} with random integer coefficients
chosen in the interval $[1,400]$. Now, however, we are interested in the common factor of these polynomials so
we follow the procedure outlined in Remark \ref{R:Computation} of Section 2. For all
autonomous versions of equations given above KdV in Table 1, the degree array $g_{l,m}$ of the $\gcd$ (see \eqref{E:defg} for Case I boundary values \eqref{ICI} is given by
\begin{equation} \label{gcd1}
g^{I}_{l,m}= \tiny \left[
 \begin {array}{cccccccccc} 0&0&14&140&826&3640&13062&40068&
108530&265712\\ \noalign{\medskip}0&0&12&108&568&2232&7172&19812&48624
&108530\\ \noalign{\medskip}0&0&10&80&370&1280&3642&8976&19812&40068
\\ \noalign{\medskip}0&0&8&56&224&672&1672&3642&7172&13062
\\ \noalign{\medskip}0&0&6&36&122&312&672&1280&2232&3640
\\ \noalign{\medskip}0&0&4&20&56&122&224&370&568&826
\\ \noalign{\medskip}0&0&2&8&20&36&56&80&108&140\\ \noalign{\medskip}0
&0&0&2&4&6&8&10&12&14\\ \noalign{\medskip}0&0&0&0&0&0&0&0&0&0
\\ \noalign{\medskip}0&0&0&0&0&0&0&0&0&0\end {array}
 \right].
\end{equation}
and for Case II boundary values \eqref{ICII} is
\begin{equation}  \label{gcd2}
g^{II}_{l,m}= \tiny \left[
\begin {array}{cccccccccc} 0&0&144&1104&5568&22272&75408&
224016&598272&1462400\\ \noalign{\medskip}0&0&112&784&3584&12992&39984
&108432&265600&598272\\ \noalign{\medskip}0&0&84&532&2184&7112&19740&
48540&108432&224016\\ \noalign{\medskip}0&0&60&340&1240&3592&8916&
19740&39984&75408\\ \noalign{\medskip}0&0&40&200&640&1632&3592&7112&
12992&22272\\ \noalign{\medskip}0&0&24&104&288&640&1240&2184&3584&5568
\\ \noalign{\medskip}0&0&12&44&104&200&340&532&784&1104
\\ \noalign{\medskip}0&0&4&12&24&40&60&84&112&144\\ \noalign{\medskip}0
&0&0&0&0&0&0&0&0&0\\ \noalign{\medskip}0&0&0&0&0&0&0&0&0&0\end {array}
 \right].
\end{equation}
The zero entries in the  first row and column of  $g^{I}_{l,m}$ and $g^{II}_{l,m}$ reflect
the generic case of no common factors in the boundary values, i.e.
\begin{equation} \label{generalgcd}
g_{l,0}=g_{0,m}=0, \quad l,m \ge 0.
\end{equation}
We note that these arrays give a
lower bound for the non-autonomous version of these equations, i.e. the degree $g_{l,m}$ of the gcd is  sometimes higher than for their  autonomous versions.
One explanation for this is the freedom of coefficients of the original equations, whence
there might be extra common factors appearing at some vertices  when we iterate those rules.
Therefore, it leads to even bigger cancellations  which does not affect the polynomial growth of these equations.

The entries of $g^{I}_{l,m}$ and $g^{II}_{l,m}$ illustrate the $\gcd$ properties Part 2.(a,b) of
Proposition~\ref{T:gcd}. The numbers on the
diagonal in both cases appear to be growing exponentially with an exponent $\simeq
1.7$. This is close to the rate of exponential growth of the maximal degree on the diagonal
\eqref{entropy} and exceeds the minimum exponential growth of $\log(3) \simeq 1.099$ expected by Part 2.(b). This is
because of spontaneous new additions to $\gcd_{l+1,m+1}(w)$ that augment the product
${\gcd}_{l,m}(w) \; {\gcd}_{l+1,m}(w) \; {\gcd}_{l,m+1}(w)$ inherited from the 3 other vertices of the lattice square.  We call the latter
product the {\em inherited} $\gcd$ and we use the term {\em spontaneous} $\gcd$ for
$\overline{\gcd}_{l+1,m+1}(w)$ of \eqref{bargcd}
and \eqref{bargcd2} for $l,m \ge 0$. Defining
\begin{equation} \label{barg}
\overline{g}_{l+1,m+1}=\deg({\overline{\gcd}}_{l+1,m+1}) \geq 0,
\end{equation}
we have from \eqref{bargcd2}:
\begin{equation} \label{barg2}
\overline{g}_{l+1,m+1}=g_{l+1,m+1}-g_{l,m}-g_{l+1,m}-g_{l,m+1}, \quad l,m \ge 0.
\end{equation}
%\remark{no initial condition needed}

%\begin{remarkk} \label{engine}
%Given the generic boundary conditions \eqref{generalgcd},  it follows
%that there must occur some spontaneous $\gcd$ at some vertex $(l,m)$ -- like
%the quadratic $\gcd_{3,2}(w)$ in $g^{I}_{l,m}$ -- in order to get a non-trivial
%$\gcd$ on the lattice (equivalently, to get a positive $g_{l,m}$). Once this occurs once at some vertex $(l,m)$,   Part 3 of Proposition \ref{T:gcd} shows that it propagates. In fact, Part 3.(d) shows that $g_{l+i,m+i}=3^i\,
%g_{l,m}$, i.e. the $\gcd$ propagates exponentially with exponent $\ln 3 \simeq 1.099\ldots$. This is
%a lower exponential rate than \eqref{maxent} so that a finite number of occurrences of a spontaneous $\gcd$ throughout the lattice are not enough to give polynomial growth for $\bar{d}_{l,m}$. Continual and sustained occurrences of a
%spontaneous $\gcd$ are guaranteed if there is a factorization over a small square of the lattice - see below.
%\end{remarkk}

Using \eqref{barg2}, we can calculate the array of spontaneous $\gcd$ degrees
corresponding to \eqref{gcd1} and \eqref{gcd2} -- see the left hand side of, respectively, Figures \ref{F:sponred1} and \ref{F:sponred2}. Using
\eqref{M:Total_degree_constant} and the $\gcd$ values \eqref{gcd1} and \eqref{gcd2}, we can also calculate the reduced degree values \eqref{E:defbard}
in each case - see the right hand side of Figures \ref{F:sponred1} and \ref{F:sponred2}. These arrays are found to be the same for all rules  given above the KdV equation of Table 1.

\begin{figure}
\begin{equation} \label{M:S_constant_gcd}
\overline{g}^{I}_{l,m}=
\tiny \left[ \begin {array}{cccccccccc} 0&0&2&6&10&14&18&22&26&28\\
 \noalign{\medskip}0&0&2&6&10&14&18&22&24&26\\
  \noalign{\medskip}0&0&2&6&10&14&18&20&22&22\\
\noalign{\medskip}0&0&2&6&10&14&16&18&18&18\\
\noalign{\medskip}0&0&2&6&10&12&14&14&14&14\\
\noalign{\medskip}0&0&2&6
&\fbox{{\bf8}}&10&10&10&10&10\\ \noalign{\medskip}0&0&2&\fbox{{\bf 4}}&6&6&6&6&6&6\\
\noalign{\medskip}0&0&0&2&2&2&2&2&2&2\\
 \noalign{\medskip}0&0&0&0&0&0&0&0&0&0\\
 \noalign{\medskip}0&0&0&0&0&0&0&0&0&0
\end {array} \right] \qquad
\overline{d}^{I}_{l,m}=
\tiny \left[ \begin {array}{cccccccccc} 0&1&3&5&7&9&11&13&15&17
\\ \noalign{\medskip}0&1&3&5&7&9&11&13&15&15\\
\noalign{\medskip}0&1&3&5&7&9&11&13&13&13\\
\noalign{\medskip}0&1&3&5&7&9&11&\tikzmark{left}11&\color{blue}{\underline{{\it 11}}}&\color{blue}{{\bf11}}
\\ \noalign{\medskip}0&1&3&5&7&9&9&\color{blue}{{\bf 9}}&9&\color{blue}{\underline{\it 9}}\\
\noalign{\medskip}0&1&3&5&7&7&7&\color{blue}{\underline{\it 7}}&\color{blue}{{\bf 7}}&7\tikzmark{right}\\
\noalign{\medskip}0&1&{\bf\circled{$3$}}&5&5&5&5&5&5&5
\\ \noalign{\medskip}0&1&3&{\bf \circled{$3$}} &3&3&3&3&3&3\\ \noalign{\medskip}0&1&1&1&1
&1&1&1&1&1\\ \noalign{\medskip}1&0&0&0&0&0&0&0&0&0
\end {array}
 \right].
\end{equation}
\DrawBox[thin]
\caption{\small Array of spontaneous $\gcd$ degrees defined by \eqref{barg2} (left) and reduced degrees defined by
\eqref{E:defbard} (right) for the autonomous $H_1$ equation  and boundary values Case I.}
   \label{F:sponred1}
\end{figure}
%------------------------------------------------------------------------------------------------------------------------------------------------------------------------------------------------------------------------------------------------------------------------------------------------------
\begin{figure}
\begin{equation}
\label{M:S_linear_gcd}
\overline{g}^{II}_{l,m}=\tiny \left[ \begin {array}{cccccccccc} 0&0&32&64&96&128&160&192&224&256
\\ \noalign{\medskip}0&0&28&56&84&112&140&168&196&224
\\ \noalign{\medskip}0&0&24&48&72&96&120&144&168&192
\\ \noalign{\medskip}0&0&20&40&60&80&100&120&140&160
\\ \noalign{\medskip}0&0&16&32&48&64&80&96&112&128
\\ \noalign{\medskip}0&0&12&24&36&\fbox{{ \bf  48}}&60&72&84&96
\\ \noalign{\medskip}0&0&8&16&\fbox{{\bf 24}}&32&40&48&56&64
\\ \noalign{\medskip}0&0&4&8&12&16&20&24&28&32
\\ \noalign{\medskip}0&0&0&0&0&0&0&0&0&0
\\ \noalign{\medskip}0&0&0&0&0&0&0&0&0&0
\end {array} \right]
\qquad
\overline{d}^{II}_{l,m}=
\tiny  \left[ \begin {array}{cccccccccc} 1&19&37&55&73&91&109&127&145&163
\\ \noalign{\medskip}1&17&33&49&65&81&97&113&129&145
\\ \noalign{\medskip}1&15&29&43&57&71&85&99&113&127
\\ \noalign{\medskip}1&13&25&37&49&61&73&85&97&109
\\ \noalign{\medskip}1&11&21&31&41&51&61&71&81&91
\\ \noalign{\medskip}1&9&17&25&33&41&49&\tikzmark{left}57&\color{blue}{\underline{{\it 65}}}&\color{blue}{{\bf 73}}
\\ \noalign{\medskip}1&7&13&{\bf \circled{ $ 19$}}&25&31&37&\color{blue}{{\bf 43}}&49&\color{blue}{\underline{{\it 55}}}
\\ \noalign{\medskip}1&5&9&13&{\bf \circled{ $ 17$}} &21&25&\color{blue}{\underline{{\it 29}}}&\color{blue}{{\bf 33}}&37\tikzmark{right}
\\ \noalign{\medskip}1&3&5&7&9&11&13&15&17&19
\\ \noalign{\medskip}1&1&1&1&1&1&1&1&1&1
\end {array}
 \right].
\end{equation}
\DrawBox[thin]
\caption{\small Array of spontaneous $\gcd$ degrees defined by \eqref{barg2} (left) and reduced degrees defined by \eqref{E:defbard} (right) for the autonomous $H_1$ equation and boundary values Case II. }
  \label{F:sponred2}
\end{figure}

The spontaneous $\gcd$ degrees reveal more apparent structure than their unbarred counterparts.
In \eqref{M:S_constant_gcd}, $\overline{g}^{I}_{l+1,m+1}=\overline{g}^{I}_{l,m}+4$, once a non-zero
entry appears on a diagonal $m-l \in \NN$. In \eqref{M:S_linear_gcd}, $\overline{g}^{II}_{l+1,m}=\overline{g}^{II}_{l,m}+4(m-1)$ for $l,m \ge 2$.
We also observe from our numerical experiments:

\begin{observation}\label{observe}
For the autonomous lattice equations  given above the KdV equation of Table 2 and the boundary conditions Case I and Case II, we have found that
\begin{equation} \label{bargcddeg}
\overline{g}_{l,m}+\overline{g}_{l+1,m+1}=
2(\overline{d}_{l,m-1}+\overline{d}_{l-1,m})
\end{equation} for $l,m\geq 2$.
This leads to
\begin{equation}
\label{E:deg_recursive}
\overline{d}_{l+1,m+1}=\overline{d}_{l+1,m}+\overline{d}_{l,m+1}+\overline{d}_{l-1,m-1}-\overline{d}_{l,m-1}-\overline{d}_{l-1,m}.
\end{equation}
For autonomous KdV and MX, it holds for  $l,m\geq 3$.
\end{observation}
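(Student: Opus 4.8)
The plan is to read the Observation as two linked assertions and to prove the implication between them rigorously, isolating the genuinely empirical statement as the single remaining obstacle. I regard the content as follows: the spontaneous-gcd identity \eqref{bargcddeg} is the empirical input, while the recurrence \eqref{E:deg_recursive} is its formal consequence once the generic ambient degree relation is invoked. I would first show that \eqref{bargcddeg} and \eqref{E:deg_recursive} are in fact \emph{equivalent} modulo the generic relation of Remark~\ref{R:samedeg}, and then discuss how one might hope to establish \eqref{bargcddeg} itself.

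For the equivalence, I would start from the generic ambient relation $d_{l+1,m+1}=d_{l,m}+d_{l+1,m}+d_{l,m+1}$ of Remark~\ref{R:samedeg} together with the always-valid identity \eqref{barg2}. Using \eqref{barg2} to rewrite $g_{l+1,m+1}$ and the definition $\overline{d}_{l,m}=d_{l,m}-g_{l,m}$ of \eqref{E:defbard}, the ambient relation becomes the \emph{reduced ambient relation}
\[
\overline{d}_{l+1,m+1}=\overline{d}_{l,m}+\overline{d}_{l+1,m}+\overline{d}_{l,m+1}-\overline{g}_{l+1,m+1}.
\]
Solving this for the spontaneous-gcd degree gives $\overline{g}_{l+1,m+1}=\overline{d}_{l,m}+\overline{d}_{l+1,m}+\overline{d}_{l,m+1}-\overline{d}_{l+1,m+1}$, and applying the same relation one step down the diagonal gives $\overline{g}_{l,m}=\overline{d}_{l-1,m-1}+\overline{d}_{l,m-1}+\overline{d}_{l-1,m}-\overline{d}_{l,m}$. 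Substituting both expressions into the left-hand side of \eqref{bargcddeg}, the two occurrences of $\overline{d}_{l,m}$ cancel, and after rearrangement \eqref{bargcddeg} collapses exactly to \eqref{E:deg_recursive}; every step is reversible, so the two statements are equivalent. This also accounts for the index restriction: the reduced ambient relation holds wherever the generic relation does, but \eqref{bargcddeg} only takes the displayed form once one is far enough from the axes for the gcd mechanism to have turned on, which forces $l,m\ge 2$, and one diagonal later, $l,m\ge 3$, for KdV and MX, where the first cancellation at the point $(2,2)$ is absent.

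The hard part is \eqref{bargcddeg} itself, which at present rests only on the numerical evidence displayed in \eqref{M:S_constant_gcd} and \eqref{M:S_linear_gcd}. The natural route to a proof is through the local factorization that produces the common factor $A_{l+1,m+1}$ over each $2\times 2$ block, whose degree is recorded in \eqref{goodA}, together with the conjectured propagation law \eqref{E:Gcd_recurenceintro} for the full greatest common divisor. Taking degrees in \eqref{E:Gcd_recurenceintro} and combining with \eqref{barg2} would express $\overline{g}_{l+1,m+1}$ in terms of $\deg(A_{l+1,m+1})=2(d_{l-1,m}+d_{l,m-1})$ and neighbouring gcd degrees; the task is then to verify that, when summed with the analogous expression for $\overline{g}_{l,m}$, the ambient degrees combine with the inherited gcds to leave precisely $2(\overline{d}_{l,m-1}+\overline{d}_{l-1,m})$. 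I expect this bookkeeping to be the genuine obstacle, since it requires controlling not merely the degree but the actual location and multiplicity of the factors entering $\overline{\gcd}_{l+1,m+1}$ --- that is, proving that no cancellation beyond $A_{l+1,m+1}$ occurs and that $A_{l+1,m+1}$ is not itself partly inherited from lower vertices. This is exactly where the paper's factorization conjecture is indispensable, so a fully unconditional proof of \eqref{bargcddeg} is not expected here; the achievable outcome is the conditional equivalence above, which reduces the entire Observation to the single conjecture on the structure of $\gcd_{l,m}(w)$.
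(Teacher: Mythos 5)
Your proposal is correct and follows essentially the same route as the paper: the paper also treats \eqref{bargcddeg} as the empirical/conjectural input and obtains \eqref{E:deg_recursive} by writing $\overline{d}_{l+1,m+1}=d_{l+1,m+1}-g_{l+1,m+1}$, substituting the generic ambient relation \eqref{E:samedeg} and the identity \eqref{barg2}, and downshifting --- exactly the algebra you carry out (the paper spells this out as the alternative argument inside the proof of Theorem \ref{Main}). Your added remarks (reversibility of the derivation, and the reduction of \eqref{bargcddeg} to the factorization conjecture via $\deg(A_{l+1,m+1})$) are consistent with how the paper later grounds the Observation in Conjecture \ref{C:Integrable_Equations}.
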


In the two figures, the two numbers enclosed by boxes in the left array and by circles in the right array illustrate the relationship \eqref{bargcddeg}. In the right array,
the three numbers in bold and the three numbers in underlining/italic  in $\overline{d}_{l,m}$ illustrate the recurrence \eqref{E:deg_recursive}.
It will turn out that \eqref{E:deg_recursive} is a consequence of  Theorem \ref{Main} that covers both autonomous and non-autonomous equations of the
next section.

\subsection{Polynomial growth of integrable lattice rules\label{SS:poly_growth}}
We note that  sustained occurrences of a spontaneous $\gcd$ as we iterate the rule in the first quadrant is a necessary condition
for subexponential growth of $\overline{d}_{l,m}$. As seen in \eqref{gcd2}, the first gcd appears at the point $(2,2)$ for Case II initial values. It is actually  related to an important observation of  \cite{HietVial,Q5}
which provides a mechanism for this and one that can be verified quickly for a rule.  From \cite{HietVial,Q5}, it is known that for autonomous equations in the ABS list and $Q_V$ for any $2\times 2$ lattice square $[l-1,l+1] \times [m-1,m+1$],  we obtain a  common factor $A_{l+1,m+1}$ of  $x_{l+1,m+1}$ and $z_{l+1,m+1}$ of \eqref{projx}-\eqref{projz} for {\em arbitrary} initial values
at the 5 corner sites $\{ (l-1,m-1), (l-1, m), (l-1, m+1), (l,m-1), (l+1,m-1)\}$, see Figure \ref{CIF:Alm}.
This factorization property over any $2\times 2$ lattice square was used to search for new integrable equations in \cite{HietVial}.
For many equations, as indicated in the figure,
the common factor  $A_{l+1,m+1}$ can actually be written in terms of  coordinates $x$ and $z$ at just the 2 sites
$(l-1,m)$ and $(l,m-1)$ and might depend also on lattice parameters or the parity of $(l-1)$ and $(m-1)$. We omit all factors that do not depend on variables $x$ and $z$.

We find that this property also holds for many non-autonomous versions lattice equations.
Table \ref{tab:commonfactor} gives the details of $A_{l,m}$ for integrable equations in Table \ref{tab:rules} above the KdV equation.
In the appendix, we give the factor $A_{l,m}$ for the non-autonomous $Q_V^{non}$ of \eqref{NonQV}.
It should be noted that for all equations in Table 2, and in the appendix, the common factor  $A$  is `quartic' in terms of the off-diagonal variables
of the first square.

\begin{figure}[h]
\begin{center}
\begin{tikzpicture}[line cap=round,line join=round,>=triangle 45,x=3.9cm,y=3.9cm]
%\clip(-1.3214021517319496,-4.823354820138393) rectangle (5.5435084710986935,2.0005187622483427);
\draw (-0.5,-0.5)-- (0.5,-0.5);
\draw (-0.5,-0.5)-- (-0.5,0.5);
\draw (-0.5,0.5)-- (0.5,0.5);
\draw (0.5,0.5)-- (0.5,-0.5);
\draw (0.0,0.5)-- (0.0,-0.5);
\draw (-0.5,0.0)-- (0.5,0.0);
\draw  (-0.5,-0.5) circle (1.5pt);
%\draw  (0.0,-0.5) circle (1.5pt);
%\draw(0.03868261726609411,-0.42066662394213983) node {$B$};
\draw  (0.5,-0.5) circle (1.5pt);
%\draw  (-0.5,0.0) circle (1.5pt);
%\draw (-0.45962430240991325,0.08350273008299695) node {$D$};
\draw  (-0.5,0.5) circle (1.5pt);
\draw  (0.0,0.5) circle (1.5pt);
\draw  (0.0,0.0) circle (1.5pt);
\draw  (0.5,0.5) circle (1.5pt);
%\draw(0.542851971291231,0.5818096497590043) node {$H$};
\draw (0.5,0.0) circle (1.5pt);
\draw [dash pattern=on 1pt off 1pt] (-0.5,0.0)-- (0.0,-0.5);
%\draw  (0.0,-0.5) circle (2pt);
 \fill (0.0,-0.5) circle (3pt);
\fill (-0.5,0.0) circle (3 pt);
\fill (0.5,0.5) circle (3pt);
\begin{scriptsize}
\draw(-0.50062430240991325,-0.57066662394213983) node {$(l-1,m-1)$};
\draw (0.542851971291231,-0.57066662394213983) node {$(l+1,m-1)$};
\draw (-0.50062430240991325,0.5718096497590043) node {$(l-1,m+1)$};
\draw(0.02268261726609411,0.5718096497590043) node {$(l,m+1)$};
\draw (0.6642851971291231,0.05350273008299695) node {$(l+1,m)$};
\draw (0.09568261726609411,0.05350273008299695) node {$(l,m)$};
\draw (0.02268261726609411,-0.57066662394213983) node {$(l,m-1)$};
\draw (-0.71962430240991325,0.05350273008299695) node {$(l-1,m)$};
\draw (0.532851971291231,0.5718096497590043) node {$(l+1,m+1)$};
\end{scriptsize}
\draw (0.7351971291231,0.458096497590043) node {$\bf {A_{l+1,m+1}}$};
\end{tikzpicture}
\end{center}
 \caption{Some lattice equations produce a common factor $A_{l+1,m+1}$ of  $x_{l+1,m+1}$ and $z_{l+1,m+1}$ over any $2 \times 2$ square that
often only depends on $x$ and $z$ at the 2 sites
$(l-1,m)$ and $(l,m-1)$.}\label{CIF:Alm}.
\end{figure}
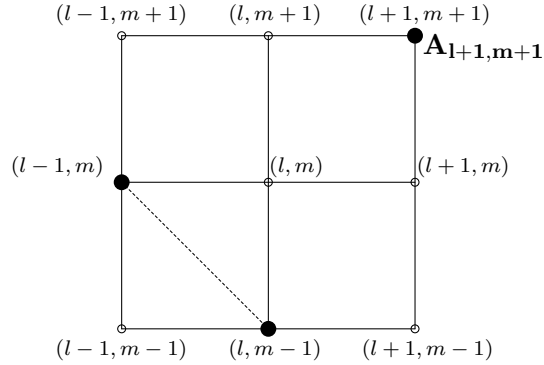
In summary, we have the following proposition.
\begin{proposition} \label{prop:A}
Consider the lattice rules  of Table \ref{tab:rules} which are mentioned in Table \ref{tab:commonfactor} and the non-autonomous $Q_{V}^{non}$ of \eqref{NonQV}.
For the projective version of these rules iterated on any $2\times 2$ lattice block (Figure 4), the iterates $x_{l+1,m+1}$ and $z_{l+1,m+1}$ at the top right corner share a common factor
$A_{l+1,m+1}$ that is given in Table 2 or the appendix.  This factor depends only on the variables in the $2\times 2$ lattice block at the sites $(l-1, m)$ and $(l,m-1)$ and is
homogeneous of degree $4$ in these variables. Extending the lattice block to the right and upwards to a $3\times 3$ lattice block with top right vertex at $(l+2,m+2)$, we find
the divisibility property
$$A_{l+1,m+1} \mid A_{l+2,m+2}.$$
In the case that $x_{l,m}$ and $z_{l,m}$ are taken as polynomials of equal degree $d_{l,m}$ in an indeterminate $w$, we have:
\begin{equation} \label{eq:uniformdeg}
\deg(A_{l+1,m+1}(w))= 2 \, (d_{l-1,m} + d_{l,m-1}).
\end{equation}
\end{proposition}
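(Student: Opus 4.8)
The plan is to separate the three assertions of the proposition — existence of the common factor together with its tabulated form, the degree formula \eqref{eq:uniformdeg}, and the divisibility $A_{l+1,m+1}\mid A_{l+2,m+2}$ — since the first two amount to algebraic bookkeeping while the third carries the real content. For existence and form I would, for a fixed equation of Table \ref{tab:rules}, place \emph{symbolic} projective data $(x,z)$ at the five corner sites $(l-1,m-1),(l-1,m),(l-1,m+1),(l,m-1),(l+1,m-1)$ and apply the projective rule \eqref{projx}--\eqref{projz} four times to reach $(l+1,m+1)$. By Observation~\ref{O:PCoor} the outputs are homogeneous, and computing $\gcd(x_{l+1,m+1},z_{l+1,m+1})$ over $\QQ[\,\cdot\,]$ exhibits $A_{l+1,m+1}$; matching it against Table~\ref{tab:commonfactor} verifies the stated form. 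This need only be done once for the non-autonomous $Q_V^{non}$ of \eqref{NonQV}: since every ABS rule and $Q_V$ is a parameter specialization of $Q_V^{non}$, its factor specializes accordingly, so the $H_i,Q_i$ rows follow by substitution, leaving only the genuinely distinct Hietarinta--Viallet rows $E\#\#$ to be checked individually by the identical computation. Inspecting any of these $A$'s shows it is not merely homogeneous of total degree $4$ but \emph{bi-homogeneous} of bidegree $(2,2)$ in the two pairs $(x_{l-1,m},z_{l-1,m})$ and $(x_{l,m-1},z_{l,m-1})$; this refinement is exactly what the degree formula requires.

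Granting the bidegree $(2,2)$ structure, the degree formula is then immediate. Assume as in Remark~\ref{R:samedeg} that $x_{l,m}(w),z_{l,m}(w)$ are polynomials of common degree $d_{l,m}$. Substituting the four site polynomials into the bi-homogeneous form gives a $w$-polynomial of degree at most $2d_{l-1,m}+2d_{l,m-1}$, with equality precisely when the top coefficient survives. That coefficient is the same bidegree-$(2,2)$ form evaluated on the four \emph{leading} coefficients of the inputs; since the Case I/Case II boundary data are chosen generically and no intermediate step forces a relation among these leading coefficients, the form does not vanish on them. Hence $\deg A_{l+1,m+1}(w)=2(d_{l-1,m}+d_{l,m-1})$, which is \eqref{eq:uniformdeg}.

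The divisibility is where I expect the main obstacle. Extending to the $3\times3$ block, the two sites on which $A_{l+2,m+2}$ depends are $(l,m+1)$ and $(l+1,m)$ (the figure shifted by $(1,1)$), and these are now \emph{interior}, so I would express each by one application of \eqref{projx}--\eqref{projz} in terms of the boundary data; substituting them into the bi-quadratic $A_{l+2,m+2}$ yields a boundary polynomial $F$, and the claim is $A_{l+1,m+1}\mid F$. The delicacy is that the factor is invisible in the individual coordinates — a direct computation for $H_1$ shows that the natural bilinear cross-term $x_{l-1,m}z_{l,m-1}-x_{l,m-1}z_{l-1,m}$ underlying $A_{l+1,m+1}$ fails to divide $x_{l,m+1}$ or $z_{l,m+1}$ separately — so divisibility is a property of the bi-quadratic \emph{combination}, not of its arguments. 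I would therefore argue by variety membership: $A_{l+1,m+1}$ is a power of an irreducible form, so by the Nullstellensatz it suffices to show $F$ vanishes on $V(A_{l+1,m+1})$ to the appropriate order (order two when $A_{l+1,m+1}$ is a perfect square $P^2$, as for $H_1$). Setting $u_{l-1,m}=u_{l,m-1}$ — the geometric meaning of $A_{l+1,m+1}=0$ in the ABS cases, and a singularity of the local quad map — collapses the recurrence so that $(l,m+1)$ and $(l+1,m)$ become projectively proportional and $A_{l+2,m+2}$ degenerates; tracking the order of vanishing then upgrades this to full divisibility. The same specialization runs uniformly from $Q_V^{non}$ downward, with the $E\#\#$ rows verified individually, and it is this order-of-vanishing bookkeeping on $V(A_{l+1,m+1})$ that I expect to be the most technical step.
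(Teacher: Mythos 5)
The paper's own proof of Proposition \ref{prop:A} is a single sentence: every stated property (existence and tabulated form of $A_{l+1,m+1}$, the divisibility $A_{l+1,m+1}\mid A_{l+2,m+2}$, and the degree count \eqref{eq:uniformdeg}) is verified by direct symbolic computation in Maple, equation by equation. Your first paragraph — symbolic data at the five corner sites, iterate the projective rule, extract the gcd, compare with Table \ref{tab:commonfactor} — is exactly that computation, and your bidegree-$(2,2)$ reading of the tabulated factors plus a genericity remark is an adequate route to \eqref{eq:uniformdeg}. So the core of your proposal coincides with the paper.

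Two of your added claims are genuinely flawed, though. First, the reduction to a single computation for $Q_V^{non}$ fails for most of Table \ref{tab:commonfactor}: the rows $Q_i,H_i$ there are the \emph{non-autonomous} ABS equations with parameters on strips, so their factors contain $\alpha_{l-1},\beta_{m-1}$ with arbitrary dependence on $l$ and $m$, whereas the non-autonomy of $Q_V^{non}$ enters only through the parities $(-1)^l,(-1)^m$ (each coefficient takes at most four values; e.g.\ its constant term $p_{7,0}$ is site-independent, while non-autonomous $H_1$ needs the constant term $\beta_m-\alpha_l$). Hence strip-parameter $H_i,Q_i$ — and likewise mKdV and sG, whose parameters $p_{l,m},q_{l,m},r_{l,m}$ depend on both indices — are not parameter specializations of $Q_V^{non}$ and still require individual computation, exactly as the E\#\# rows do; only the autonomous ABS cases are covered by your specialization argument. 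Second, in your divisibility argument the identification of $V(A_{l+1,m+1})$ with the locus $u_{l-1,m}=u_{l,m-1}$ is specific to $H_1$, where $A_{l+1,m+1}=\left(x_{l-1,m}z_{l,m-1}-z_{l-1,m}x_{l,m-1}\right)^2$; for $H_3$, mKdV and sG the factor vanishes on loci of the form $p\,u_{l,m-1}=q\,u_{l-1,m}$ (and for $Q_V$ on an irreducible biquadratic curve), so the specialization you propose does not even lie on $V(A)$ for those equations. The Nullstellensatz strategy itself is workable — for $H_1$ one checks that on $V(P)$, with $P$ the bilinear factor, the site $(l,m)$ becomes singular ($z_{l,m}=P\,z_{l-1,m-1}=0$), whence $u_{l+1,m}=u_{l,m-1}$ and $u_{l,m+1}=u_{l-1,m}$, giving $P\mid x_{l,m+1}z_{l+1,m}-z_{l,m+1}x_{l+1,m}$ and so $P^2\mid A_{l+2,m+2}$ — but carrying it out in general requires a per-equation analysis of the vanishing locus of each tabulated $A$, which is precisely the case-by-case work that the paper's Maple verification replaces.
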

\begin{proof}
The existence and form of $A_{l+1,m+1}$ for the indicated equations follows from direct computations using maple.
\end{proof}

%%%%%%%%%%%%%%%%%%%%%%%%%%%%%%%%%%%%%%%%%%%%%%%%%%%%%%%%%%%%%
\begin{landscape}
\begin{table}[h]
\begin{tabular}{|l |p{19cm}|}
\hline
$Q_1$ &  $\delta\,{\alpha_{{l-1}}}^{2}{z_{{l,m-1}}}^{2}{z_{{l-1,m}}}^{2}-2\,
\delta\,\alpha_{{l-1}}\beta_{{m-1}}{z_{{l,m-1}}}^{2}{z_{{l-1,m}}}^{2}+
\delta\,{\beta_{{m-1}}}^{2}{z_{{l,m-1}}}^{2}{z_{{l-1,m}}}^{2}-{x_{{l,m
-1}}}^{2}{z_{{l-1,m}}}^{2}+2\,x_{{l,m-1}}x_{{l-1,m}}z_{{l,m-1}}z_{{l-1
,m}}-{x_{{l-1,m}}}^{2}{z_{{l,m-1}}}^{2}
$
\\
\hline
$Q_2$& ${\alpha_{{l-1}}}^{4}{z_{{l,m-1}}}^{2}{z_{{l-1,m}}}^{2}-4\,{\alpha_{{l-
1}}}^{3}\beta_{{m-1}}{z_{{l,m-1}}}^{2}{z_{{l-1,m}}}^{2}+6\,{\alpha_{{l
-1}}}^{2}{\beta_{{m-1}}}^{2}{z_{{l,m-1}}}^{2}{z_{{l-1,m}}}^{2}-4\,
\alpha_{{l-1}}{\beta_{{m-1}}}^{3}{z_{{l,m-1}}}^{2}{z_{{l-1,m}}}^{2}+{
\beta_{{m-1}}}^{4}{z_{{l,m-1}}}^{2}{z_{{l-1,m}}}^{2}-2\,{\alpha_{{l-1}
}}^{2}x_{{l,m-1}}z_{{l,m-1}}{z_{{l-1,m}}}^{2}-2\,{\alpha_{{l-1}}}^{2}x
_{{l-1,m}}{z_{{l,m-1}}}^{2}z_{{l-1,m}}+4\,\alpha_{{l-1}}\beta_{{m-1}}x
_{{l,m-1}}z_{{l,m-1}}{z_{{l-1,m}}}^{2}+4\,\alpha_{{l-1}}\beta_{{m-1}}x
_{{l-1,m}}{z_{{l,m-1}}}^{2}z_{{l-1,m}}-2\,{\beta_{{m-1}}}^{2}x_{{l,m-1
}}z_{{l,m-1}}{z_{{l-1,m}}}^{2}-2\,{\beta_{{m-1}}}^{2}x_{{l-1,m}}{z_{{l
,m-1}}}^{2}z_{{l-1,m}}+{x_{{l,m-1}}}^{2}{z_{{l-1,m}}}^{2}-2\,x_{{l,m-1
}}x_{{l-1,m}}z_{{l,m-1}}z_{{l-1,m}}+{x_{{l-1,m}}}^{2}{z_{{l,m-1}}}^{2}
$
\\
\hline
$Q_3$ &
$
{\delta}^{2}{\alpha_{{l-1}}}^{4}{z_{{l,m-1}}}^{2}{z_{{l-1,m}}}^{2}-2\,
{\delta}^{2}{\alpha_{{l-1}}}^{2}{\beta_{{m-1}}}^{2}{z_{{l,m-1}}}^{2}{z
_{{l-1,m}}}^{2}+{\delta}^{2}{\beta_{{m-1}}}^{4}{z_{{l,m-1}}}^{2}{z_{{l
-1,m}}}^{2}-4\,{\alpha_{{l-1}}}^{3}\beta_{{m-1}}x_{{l,m-1}}x_{{l-1,m}}
z_{{l,m-1}}z_{{l-1,m}}+4\,{\alpha_{{l-1}}}^{2}{\beta_{{m-1}}}^{2}{x_{{
l,m-1}}}^{2}{z_{{l-1,m}}}^{2}+4\,{\alpha_{{l-1}}}^{2}{\beta_{{m-1}}}^{
2}{x_{{l-1,m}}}^{2}{z_{{l,m-1}}}^{2}-4\,\alpha_{{l-1}}{\beta_{{m-1}}}^
{3}x_{{l,m-1}}x_{{l-1,m}}z_{{l,m-1}}z_{{l-1,m}}
$
\\
\hline
$Q_4$
&
$ \left( {\alpha_{{l-1}}}^{4}-2\,{\alpha_{{l-1}}}^{2}{\beta_{{m-1}}}^{2
}+{\beta_{{m-1}}}^{4} \right) {x_{{l,m-1}}}^{2}{x_{{l-1,m}}}^{2}+
 ( -{\alpha_{{l-1}}}^{4}{\beta_{{m-1}}}^{2}-{\alpha_{{l-1}}}^{2}{
\beta_{{m-1}}}^{4}-2\,\delta\,{\alpha_{{l-1}}}^{2}{\beta_{{m-1}}}^{2}-
2\,A_{{l-1}}B_{{m-1}}\alpha_{{l-1}}\beta_{{m-1}}-{\alpha_{{l-1}}}^{2}-
{\beta_{{m-1}}}^{2}) {x_{{l,m-1}}}^{2}{z_{{l-1,m}}}^{2}+
 ( 4\,{\alpha_{{l-1}}}^{3}{\beta_{{m-1}}}^{3}+2\,\delta\,{\alpha_
{{l-1}}}^{3}\beta_{{m-1}}+2\,\delta\,\alpha_{{l-1}}{\beta_{{m-1}}}^{3}
+2\,A_{{l-1}}B_{{m-1}}{\alpha_{{l-1}}}^{2}+2\,A_{{l-1}}B_{{m-1}}{\beta
_{{m-1}}}^{2}+4\,\alpha_{{l-1}}\beta_{{m-1}} ) z_{{l-1,m}}z_{{l,
m-1}}x_{{l-1,m}}x_{{l,m-1}}+ ( -{\alpha_{{l-1}}}^{4}{\beta_{{m-1}
}}^{2}-{\alpha_{{l-1}}}^{2}{\beta_{{m-1}}}^{4}-2\,\delta\,{\alpha_{{l-
1}}}^{2}{\beta_{{m-1}}}^{2}-2\,A_{{l-1}}B_{{m-1}}\alpha_{{l-1}}\beta_{
{m-1}}-{\alpha_{{l-1}}}^{2}-{\beta_{{m-1}}}^{2} ) {z_{{l,m-1}}}^
{2}{x_{{l-1,m}}}^{2}+ ( {\alpha_{{l-1}}}^{4}-2\,{\alpha_{{l-1}}}^
{2}{\beta_{{m-1}}}^{2}+{\beta_{{m-1}}}^{4} ) {z_{{l-1,m}}}^{2}{z
_{{l,m-1}}}^{2}
$
\\
\hline
$H_1$ & $\left(x_{l-1,m}z_{l,m-1}-z_{l-1,m}x_{l,m-1}\right)^2$
\\
\hline
$H_2$ &
$(\alpha_{{l-1}}z_{{l,m-1}}z_{{l-1,m}}-\beta_{{m-1}}z_{{l,m-1}}z_{{l-1,m}}-x_{{l,m-1}}z_{{l-1,m}}+x_{{l-1,m}}z_{{l,m-1}} )
(\alpha_{{l-1}}z_{{l,m-1}}z_{{l-1,m}}-\beta_{{m-1}}z_{{l,m-1}}z_{{l-1,m}}+x_{{l,m-1}}z_{{l-1,m}}-x_{{l-1,m}}z_{{l,m-1}})
$
\\
\hline
$H_3$ &
$
\left( \alpha_{{l-1}}x_{{l,m-1}}z_{{l-1,m}}-\beta_{{m-1}}x_{{l-1,m}}z
_{{l,m-1}} \right)  \left( -\alpha_{{l-1}}x_{{l-1,m}}z_{{l,m-1}}+\beta
_{{m-1}}x_{{l,m-1}}z_{{l-1,m}} \right)
$
\\
\hline
mKdV & $\left( r_{{l-1,m-1}}x_{{l,m-1}}z_{{l-1,m}}-x_{{l-1,m}}z_{{l,m-1}}
 \right)  \left( p_{{l-1,m-1}}x_{{l,m-1}}z_{{l-1,m}}-q_{{l-1,m-1}}x_{{
l-1,m}}z_{{l,m-1}} \right)
$
\\
\hline
sG &
$
\left( p_{{l-1,m-1}}x_{{l,m-1}}x_{{l-1,m}}-q_{{l-1,m-1}}z_{{l,m-1}}z_
{{l-1,m}} \right)  \left( r_{{l-1,m-1}}x_{{l,m-1}}x_{{l-1,m}}-z_{{l,m-
1}}z_{{l-1,m}} \right)
$
\\
\hline
E16 &
$( z_{{l-1,m}}x_{{l,m-1}}+p_{{5}}p_{{3}}\ z_{{l,m-1}}x_{{l-1,m}})
 (p_{{1}} p_{{3}}p_{{5}}p_{{6}} \ x_{{l-1,m}}z_{{l,m-1}}-p_{{1}}p_{{2}}p_
{{3}}\  x_{{l-1,m}}z_{{l,m-1}}+p_{{2}} p_{{5}}p_{{6}}x_{{l-1,m}}z_{
{l,m-1}}-{p_{{2}}}^{2}z_{{l,m-1}}x_{{l-1,m}}+p_{{1}}p_{{6}}z_{{
l-1,m}}x_{{l,m-1}})
$
\\
\hline
E21&
$( p_{{3}}\ x_{{l,m-1}}z_{{l-1,m}}+p_{{5}}\ z_{{l,m-1}}x_{{l-1,m}})
( p_{{5}}\ x_{{l,m-1}}z_{{l-1,m}}+p
_{{3}}\ z_{{l,m-1}}x_{{l-1,m}} )
$
\\
\hline
E22 &
$
( -z_{{l,m-1}}x_{{l-1,m}}+r_{{1}}z_{{l,m-1}}z_{{l-1,m}}+x_{{l,m-
1}}z_{{l-1,m}} )  ( -x_{{l,m-1}}z_{{l-1,m}}+z_{{l,m-1}}x_{{
l-1,m}}+r_{{1}}z_{{l,m-1}}z_{{l-1,m}} )
$
\\
\hline
E24&
$
-2\,p_{{2}}p_{{3}}\ z_{{l-1,m}}z_{{l,m-1}}x_{{l-1,m}}x_{{l,m-1}}+{p_{{2}
}}^{2}p_{{3}}{s}^{2}\ {z_{{l,m-1}}}^{2}{z_{{l-1,m}}}^{2}+p_{{2}}{p_{{3}}
}^{2}\ {z_{{l,m-1}}}^{2}{z_{{l-1,m}}}^{2}{s}^{2}+x_{{l-1,m}}z_{{l,m-1}}x
_{{l,m-1}}z_{{l-1,m}}-p_{{2}}p_{{3}}\ {z_{{l,m-1}}}^{2}{x_{{l-1,m}}}^{2}
-p_{{2}}p_{{3}}\ {z_{{l-1,m}}}^{2}{x_{{l,m-1}}}^{2}-{p_{{3}}}^{2}\ {x_{{l,
m-1}}}^{2}{z_{{l-1,m}}}^{2}-2\,{p_{{3}}}^{2}\ z_{{l-1,m}}z_{{l,m-1}}x_{{
l-1,m}}x_{{l,m-1}}-{p_{{2}}}^{2}\ z_{{l-1,m}}z_{{l,m-1}}x_{{l-1,m}}x_{{l
,m-1}}-{p_{{3}}}^{2}\ {z_{{l,m-1}}}^{2}{x_{{l-1,m}}}^{2}
$
\\
\hline
E25 &
$
( s\,z_{{l-1,m}}z_{{l,m-1}}-z_{{l,m-1}}x_{{l-1,m}}+z_{{l-1,m}}x_{{
l,m-1}})  ( {p_{{3}}}^{2}s\, z_{{l,m-1}}z_{{l-1,m}}+p_{{3}}s\ z_{{l-1,
m}}z_{{l,m-1}}-z_{{l-1,m}}z_{{l,m-1}}r_{{2}}+2\, r_{{4}}p_{{3}}\, z_{{l-1,m}}z_{{
l,m-1}}+r_{{4}}\, z_{{l-1,m}}z_{{l,m-1}}-p_{{3}}\, z_{{l-1,m}}
x_{{l,m-1}}-{p_{{3}}}^{2}\ z_{{l-1,m}}x_{{l,m-1}}+p_{{3}}\, z_{{l,m-1}}x_{{
l-1,m}}+{p_{{3}}}^{2}\,z_{{l,m-1}}x_{{l-1,m}} )
$
\\
\hline
\end{tabular}
\vskip0.5cm
\caption{List of common factors  $A_{l+1,m+1}$ of  $x_{l+1,m+1}$ and $z_{l+1,m+1}$ for the projective versions of equations above the KdV equation in Table \ref{tab:rules}  for {\em arbitrary} initial values
at the 5 corner sites $\{ (l-1,m-1), (l-1, m), (l-1, m+1), (l,m-1), (l+1,m-1)\}$ of the $2 \times 2$ lattice square.
%It is noted that KdV is not included in this list as the first common factors appear over a $2 \times 3$ lattice square and a $3 \times 2$ lattice square.
\label{tab:commonfactor}}
%\end{center}
\end{table}
\end{landscape}
%%%%%%%%%%%%%%%%%%%%%%%%%%%%%%%%%%%%%%%%%%%%%%%%%%%%%%%
%%%%%%%%%%%%%%%%%%%%%%%%%%%%%%%%%%%%%%%%%%%%%%%%%%%%%%%%%%%%%%%

%It is  also known the non-autonomous $Q_V$ has vanishing entropy \cite{LGS} and that all the equations in the ABS list and $Q_V$ can be obtained from this equation by choosing the parameters
%appropriately. Therefore, one should check whether this equation
%fits in this framework.
%It is shown that $Q_V$ has a factor at the point $(2,2)$ \cite{Q5}.
%This factor is a special case of a factor at the point $(2,2)$ of  the non-autonomous $Q_V$.
%These factors are given in the appendix. Moreover, for any $3\times 3$ square, one can check that $A_{2,2}|A_{3,3}$,
%this means that $A_{l,m}$ is a divisor of $A_{l+1,m+1}$.

This Proposition is important as it helps us to see how the new gcd appears at each vertex.
Assuming corner boundary values as polynomials in an indeterminate $w$, such as \eqref{ICI} and \eqref{ICII}, we can assume that at the point $(l+1,m+1)$:
\begin{itemize}
\item ${\gcd}_{l,m}(w) \; {\gcd}_{l+1,m}(w) \; {\gcd}_{l,m+1}(w) \; \big| \;{\gcd}_{l+1,m+1}(w)$, \\
\item $A_{l+1,m+1}(w)\ | \ \gcd_{l+1,m+1}(w)$
\end{itemize}
where all quantities belong to the ring of integer polynomials $\ZZ[w]$. The first statement is part 2 of Proposition \ref{T:gcd}.
From standard divisibility results, suppressing the $w$ dependence for brevity, we have :
\begin{equation} \label{Divisor}
\lcm (\rm{gcd}_{l,m}\, \rm{gcd}_{l+1,m}\, \rm{gcd}_{l,m+1},\ A_{l+1,m+1})=  \frac{\gcd_{l,m} \gcd_{l+1,m} \gcd_{l,m+1}\ A_{l+1,m+1}}{ \gcd (\gcd_{l,m} \gcd_{l+1,m} \gcd_{l,m+1},\ A_{l+1,m+1})} \; | \;  \rm{gcd}_{l+1,m+1}.
\end{equation}

Motivated by the form of the divisor on the rhs of \eqref{Divisor}, we define the
sequence $G_{l,m}$ by the recurrence:
\begin{equation}
\label{E:predictG}
G_{l+1,m+1}=\frac{G_{l,m}\ G_{l+1,m}\ G_{l,m+1}\ A_{l+1,m+1}}{ \gcd (G_{l,m}\ G_{l+1,m}\ G_{l,m+1},\ A_{l+1,m+1})}, \quad  l,m\geq 2.
\end{equation}
Suppose we take Case II boundary values of \eqref{ICII} for $l, m \le 9$
such that $\gcd_{l,m}=1$ if $l\leq 1$ or $m\leq 1$, and we define $G_{l,m}=\gcd_{l,m}=1$ if $l\leq 1$ or $m\leq 1$, so that $G_{l,m}$
shares the same boundary conditions. We then compare the
results of iterating the lattice equations of Table \ref{tab:commonfactor}  plus the non-autonomous
$Q_V$, to find ${\gcd}_{l+1,m+1}$ (which involves factoring $x_{l,m}$
and $z_{l,m}$ at every site) to $G_{l,m}$ calculated directly by iterating \eqref{E:predictG} (which avoids needing to factor at each site).
We find that, in fact,
\begin{equation}\label{eq:eureka}
G_{l,m}(w) =  {\gcd}_{l,m}(w) \cdot \, C_{l,m}, \quad  2 \le l,m \le 9,
\end{equation}
where $C_{l,m}$ is a lattice-dependent constant that depends on the equation for all the autonomous equations and non-autonomous $Q_V$.
For other non-autonomous equations, sometimes instead of having $G_{l,m}=\gcd_{l,m}(w)$ (up to a constant factor),
$G_{l,m}$ is actually a  non-trivial divisor of $\gcd_{l,m}$.
%This continued to be the case even when we  took initial values to be non-homogenous, i.e., initial values  $x_{0,m}, z_{0,m}$ and $x_{l,0}, z_{l,0}$ were generated randomly in $\ZZ[w]$ with degrees  between $1$ and $4$.
%These observations about $G_{l,m}$ compared to  ${\gcd}_{l,m}(w)$

We now take a closer look at the rhs of \eqref{Divisor}. Recalling the definitions of barred variables from \eqref{eq:barxz}
and using the form of $A_{l+1,m+1}$ in Proposition \ref{prop:A}, we can write
\begin{equation} \label{eq:Asplit}
A_{l+1,m+1} = {\gcd}_{l-1,m}^2\, {\gcd}_{l,m-1}^2 \, \overline{A}_{l+1,m+1},
\end{equation}
where $\overline{A}_{l+1,m+1}$ is similar in form to ${A}_{l+1,m+1}$ but using the bar variables.
As a result of the $\gcd$ inheritance property from Proposition 2, we can write
\begin{equation}\label{eq:Bsplit}
\gcd ({\gcd}_{l,m} {\gcd}_{l+1,m} {\gcd}_{l,m+1},\, A_{l+1,m+1})
= {\gcd}_{l-1,m}^2\, {\gcd}_{l,m-1}^2 \, \overline{B}_{l+1,m+1}.
\end{equation}
Here $\overline{B}_{l+1,m+1}$ is a divisor of $\overline{A}_{l+1,m+1}.$
Therefore, using both \eqref{eq:Asplit} and \eqref{eq:Bsplit} in \eqref{Divisor}:
\begin{equation}\label{eq:motivateG1}
\frac{\gcd_{l,m} \gcd_{l+1,m} \gcd_{l,m+1}\ A_{l+1,m+1}}{ \gcd (\gcd_{l,m} \gcd_{l+1,m} \gcd_{l,m+1},\ A_{l+1,m+1})}
= \frac{\gcd_{l,m} \gcd_{l+1,m} \gcd_{l,m+1}\, \overline{A}_{l+1,m+1}}{\overline{B}_{l+1,m+1}} \; | \;  {\gcd}_{l+1,m+1},
\end{equation}
which via \eqref{bargcd2} implies:
\begin{equation} \label{eq:oneway}
\frac{\overline{A}_{l+1,m+1}}{\overline{B}_{l+1,m+1}} \mid \overline{\gcd}_{l+1,m+1}.
\end{equation}
Alternatively, we can also rewrite \eqref{Divisor} using just \eqref{eq:Bsplit} in the denominator and
replacing $\gcd_{l,m}$ in the numerator using (the downshift) of \eqref{bargcd2}:
\begin{equation}\label{eq:motivateG}
\frac{\gcd_{l,m} \gcd_{l+1,m} \gcd_{l,m+1}\ A_{l+1,m+1}}{ \gcd (\gcd_{l,m} \gcd_{l+1,m} \gcd_{l,m+1},\ A_{l+1,m+1})}
=
\frac{\gcd_{l-1,m-1}\, \gcd_{l+1,m}\, \gcd_{l,m+1}\, A_{l+1,m+1}}{{\gcd}_{l-1,m}\, {\gcd}_{l,m-1}} \; \;
\frac{\overline{\gcd}_{l,m}}{\overline{B}_{l+1,m+1}},
\end{equation}
and this too is a divisor of ${\gcd}_{l+1,m+1}$.

Motivated by the first term on the rhs  of \eqref{eq:motivateG}, we consider the following recurrence
on each $2\times 2$ lattice block:
%, and using the fact that $A_{l,m}|A_{l+1,m+1}$ we predict that $\overline{\gcd}_{l,m}=B_{l,m}$. It means that the spontaneous $\overline{\gcd}_{l,m}$ is also a devisor of  $\overline{A}_{l+1,m+1}$ (similar to A but taking the bar variables). Therefore,
\begin{equation}
\label{E:Gcd_recurence}
G^{'}_{l+1,m+1}=\frac{G^{'}_{l-1,m-1}\ G^{'}_{l+1,m}\ G^{'}_{l,m+1}\ A_{l+1,m+1}}{G^{'}_{l-1,m}\ G^{'}_{l,m-1}}.
\end{equation}
Extensive numerical investigations for
$1 \le l,m \le 9$ and a range of boundary conditions leads us to the following
\begin{conjecture}[{\bf Enabling Conjecture}]
\label{C:Integrable_Equations}
Suppose the values $G^{'}_{l,m}(w)$ for the recurrence \eqref{E:Gcd_recurence}
in the first quadrant are chosen to agree with ${\gcd}_{l,m}(w)$ when $0 \le l \le 1$ and $0 \le m \le 1$.
Then we conjecture for the lattice equations of Table \ref{tab:commonfactor} plus
(non-autonomous) $Q_V$ that this remains true on iteration of both \eqref{E:Gcd_recurence} and the lattice equation:
\begin{equation}\label{eq:eurekastrong}
G^{'}_{l,m}(w) =  {\gcd}_{l,m}(w) \cdot \, C_{l,m} \implies g^{'}_{l,m}:=\deg(G^{'}_{l,m}(w))=\deg({\gcd}_{l,m}(w))=g_{l,m}.
\end{equation}
where $C_{l,m}$ is a lattice-dependent constant.
From \eqref{eq:motivateG}, our conjecture is equivalent to saying that $\overline{B}_{l+1,m+1} = \overline{\gcd}_{l,m}$,
(up to a lattice constant).
%Multiplying both sides of \eqref{eq:oneway} by $\overline{\gcd}_{l,m}$
%then implies, up to a lattice constant,
%$$ {\overline{A}_{l+1,m+1}} = \overline{\gcd}_{l,m}\;\; \overline{\gcd}_{l+1,m+1}.$$
%Taking degrees using \eqref{eq:uniformdeg} recovers
%our observation \eqref{eq:bargcddeg}.
\end{conjecture}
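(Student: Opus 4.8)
The plan is to prove the claim by induction on the anti-diagonal index $l+m$, which is the natural order because the recurrence \eqref{E:Gcd_recurence} expresses $G^{'}_{l+1,m+1}$ through the values at $(l-1,m-1),(l+1,m),(l,m+1),(l-1,m),(l,m-1)$, all of which lie on strictly earlier anti-diagonals. The base case is the hypothesis that $G^{'}_{l,m}$ and $\gcd_{l,m}$ agree (up to a lattice constant) on the two boundary strips $0\le l\le 1$ and $0\le m\le 1$. For the inductive step I would assume $G^{'}_{j,k}=\gcd_{j,k}\cdot(\mathrm{const})$ for every $(j,k)$ on an earlier anti-diagonal, substitute into \eqref{E:Gcd_recurence}, and use the splitting $A_{l+1,m+1}=\gcd_{l-1,m}^2\,\gcd_{l,m-1}^2\,\overline{A}_{l+1,m+1}$ of \eqref{eq:Asplit}; the two denominator factors then cancel against $A_{l+1,m+1}$ and the step collapses to
\begin{equation*}
G^{'}_{l+1,m+1}=\gcd_{l-1,m-1}\,\gcd_{l+1,m}\,\gcd_{l,m+1}\,\gcd_{l-1,m}\,\gcd_{l,m-1}\,\overline{A}_{l+1,m+1}\cdot(\mathrm{const}).
\end{equation*}

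First I would reduce the whole statement to a single clean polynomial identity. Applying the inheritance relation \eqref{bargcd2} twice, once at $(l+1,m+1)$ and once at $(l,m)$, expresses the target $\gcd_{l+1,m+1}$ as $\overline{\gcd}_{l+1,m+1}\,\overline{\gcd}_{l,m}$ times \emph{exactly} the product of the five inherited $\gcd$'s appearing in the displayed expression for $G^{'}_{l+1,m+1}$. Comparing the two, the desired equality $G^{'}_{l+1,m+1}=\gcd_{l+1,m+1}\cdot(\mathrm{const})$ becomes equivalent to
\begin{equation} \label{eq:KeyIdentityProposal}
\overline{A}_{l+1,m+1}=\overline{\gcd}_{l,m}\cdot\overline{\gcd}_{l+1,m+1}\qquad(\text{up to a constant}),
\end{equation}
which is precisely the authors' reformulation $\overline{B}_{l+1,m+1}=\overline{\gcd}_{l,m}$ once \eqref{eq:oneway} is taken into account. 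Thus the entire conjecture reduces to showing that the degree-four reduced factor $\overline{A}_{l+1,m+1}$ splits cleanly into a ``past'' piece equal to the previously created spontaneous $\gcd$ and a ``new'' piece equal to the spontaneous $\gcd$ being created.

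To attack \eqref{eq:KeyIdentityProposal} I would work rule by rule using the explicit factorizations of $A_{l+1,m+1}$ in Table \ref{tab:commonfactor}: in every case $A$ is a product of two factors bilinear in the coordinate pairs at $(l-1,m)$ and $(l,m-1)$ (for $H_1$ the two factors coincide, giving a perfect square), and passing to barred variables yields a corresponding factorization of $\overline{A}_{l+1,m+1}$. The divisibility property of Proposition \ref{prop:A}, in its shifted form $A_{l,m}\mid A_{l+1,m+1}$, is the engine for the ``past'' half: it forces the factor generated at vertex $(l,m)$ to persist into the next block, which is exactly the mechanism that should make $\overline{\gcd}_{l,m}$ a divisor of $\overline{A}_{l+1,m+1}$ and hence gives the lower bound $G^{'}_{l+1,m+1}\mid\gcd_{l+1,m+1}$, i.e.\ $g^{'}_{l,m}\le g_{l,m}$. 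The complementary ``new'' factor is by construction a common factor of the reduced numerators $f^{(1)}_{l,m},f^{(2)}_{l,m}$ evaluated on the barred inputs, so it automatically divides $\overline{\gcd}_{l+1,m+1}$ via \eqref{eq:oneway}.

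The hard part will be the matching upper bound: proving that $\overline{\gcd}_{l+1,m+1}$ contains \emph{nothing beyond} the ``new'' half of $\overline{A}_{l+1,m+1}$, equivalently that after removing the predicted common factor the reduced pair $\bar{x}_{l+1,m+1}(w),\bar{z}_{l+1,m+1}(w)$ is genuinely coprime. This cannot follow from divisibility bookkeeping alone; it is a non-vanishing-of-a-resultant statement, sensitive both to the specific multi-affine rule and to the chosen boundary data, and it is exactly the place where the spontaneous $\gcd$ could ``overshoot'' the universal prediction \eqref{E:degreeRelationintro}. I would try to establish it uniformly by deriving the factorization from a structural feature common to all the Table \ref{tab:commonfactor} equations, for instance a shared Lax representation or the multidimensional-consistency identity underlying $Q_V$, so that coprimeness of the reduced pair is read off from one resultant computation rather than checked case by case. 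Absent such a uniform mechanism, this upper bound is precisely why the statement is offered as a conjecture: at the level of degrees it is the observed relation \eqref{bargcddeg}, and \eqref{eq:KeyIdentityProposal} is its polynomial refinement, which is what remains open.
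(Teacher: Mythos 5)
You should be aware at the outset that the paper contains \emph{no proof} of this statement: Conjecture~\ref{C:Integrable_Equations} is offered precisely as a conjecture, supported only by (i) Maple experiments for $1\le l,m\le 9$ over a range of boundary data, and (ii) the divisibility bookkeeping \eqref{Divisor}--\eqref{eq:motivateG}, which ends in the remark that the conjecture is equivalent to $\overline{B}_{l+1,m+1}=\overline{\gcd}_{l,m}$ up to a lattice constant. So the only meaningful comparison is with that motivating derivation, and on that score your reduction is sound: induction on anti-diagonals is the correct ordering for \eqref{E:Gcd_recurence} (all five sites on the right-hand side lie on strictly earlier anti-diagonals), the cancellation using \eqref{eq:Asplit} is computed correctly, and two applications of the inheritance relation \eqref{bargcd2} do show that, under the inductive hypothesis, the conjectured equality at $(l+1,m+1)$ is equivalent to the single identity $\overline{A}_{l+1,m+1}=\overline{\gcd}_{l,m}\,\overline{\gcd}_{l+1,m+1}$ up to a constant. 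This is the polynomial content of the authors' reformulation via \eqref{eq:Bsplit} and \eqref{eq:oneway}, and at the level of degrees it is \eqref{bargcddeg}. To this extent your proposal reproduces, in a somewhat cleaner inductive form, exactly what the paper itself does before stating the conjecture.

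What you have not done --- and what you candidly admit you have not done --- is prove that identity, and that identity \emph{is} the conjecture; so as a proof the proposal is incomplete, though it is incomplete in precisely the place the paper leaves open. Two concrete soft spots deserve naming. First, your ``past half'' claim, that $\overline{\gcd}_{l,m}\mid\overline{A}_{l+1,m+1}$ follows from the shifted divisibility $A_{l,m}\mid A_{l+1,m+1}$ of Proposition~\ref{prop:A}, is not airtight: the inductive identity at $(l,m)$ gives $\overline{\gcd}_{l,m}\,\overline{\gcd}_{l-1,m-1}=\overline{A}_{l,m}$ up to a constant, but to transfer a factor of $A_{l,m}$ into $\overline{A}_{l+1,m+1}=A_{l+1,m+1}/\bigl(\gcd_{l-1,m}^2\gcd_{l,m-1}^2\bigr)$ you must control multiplicities, i.e.\ rule out that irreducible factors of $\overline{\gcd}_{l,m}$ also occur inside $\gcd_{l-1,m}^2\gcd_{l,m-1}^2$ and are divided out; neither you nor the paper establishes this, so even the one-sided bound $g^{'}_{l+1,m+1}\le g_{l+1,m+1}$ is not yet proved. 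Second, the ``upper bound'' (coprimeness of $\bar{x}_{l+1,m+1},\bar{z}_{l+1,m+1}$ after extracting the predicted factor) is, as you say, a genuine non-vanishing statement depending on the rule and the boundary data; your suggestion to obtain it uniformly from a Lax pair or the consistency structure underlying $Q_V$ is a research programme, not an argument. In short: your bookkeeping is correct and consistent with the paper, no step you wrote down is false, but the statement remains exactly as unproven after your proposal as it is in the paper.
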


The conjecture means that calculating ${\gcd}_{l,m}(w)$ directly for the lattice rules can be replaced by using \eqref{E:Gcd_recurence}
and the only dependence on the particular lattice rule is through the factor $A_{l+1,m+1}(w)$ from Table \ref{tab:commonfactor}.
Evaluating $A_{l+1,m+1}(w)$ still necessitates iterating the lattice rule.  If, however, we are only
interested in the degree growth through the lattice, the uniformity of $\deg(A_{l+1,m+1}(w))$ across the different rules,
expressed by \eqref{eq:uniformdeg},
allows a statement for {\em all} lattice equations with the same form of $A_{l+1,m+1}(w)$.

%We mimic the definition \eqref{eq:barxz} of the barred variables for lattice rules by now introducing a corresponding version that uses
%$x_{l,m}(w)$ and $z_{l,m}(w)$ from the lattice rule but divides instead by $G^{'}_{l,m}$:
%$\overline{x}^{'}_{l,m}(w))=x_{l,m}(w)/G^{'}_{l,m}(w), \quad \overline{z}^{'}_{l,m}(w))=z_{l,m}(w)/G^{'}_{l,m}(w)$
Recall the reduced degrees $\overline{d}_{l,m}$ of \eqref{E:defbardspec} with \eqref{barg2}.
The conjecture leads to:
\begin{theorem} \label{Main}
For all the lattice equations given in Table \ref{tab:commonfactor} and (non-autonomous) $Q_V^{non}$,
the reduced degree sequence $\overline{d}_{l,m}$ of \eqref{E:defbardspec}  satisfies the
the following linear partial difference equation with constant coefficients:
\begin{equation} \label{E:degreeRelation}
\overline{d}_{l+1,m+1}=\overline{d}_{l+1,m}+\overline{d}_{l,m+1}+\overline{d}_{l-1,m-1}-\overline{d}_{l,m-1}-\overline{d}_{l-1,m},
\end{equation}
if conjecture \eqref{C:Integrable_Equations} is true.
If the corner boundary degree differences $(\overline{d}_{1,m_0+1}-\overline{d}_{0,m_0})$ and
$(\overline{d}_{l_0+1,1}-\overline{d}_{l_0,0})$ are bounded polynomially of degree $D$ in their respective coordinates $m_0$ and $l_0$, then
the sequence $\overline{d}_{l,m}$ of \eqref{E:degreeRelation}
is bounded polynomially of degree $D+1$ along each lattice diagonal.
Therefore $\lim_{l \to \infty} \log(\overline{d}_{l,m_0+l})/l = 0$ for fixed $m_0\ge 0$.
In particular, the corner boundary values of Case I (Case II) of \eqref{ICI}-\eqref{ICII}
produce linear (quadratic) growth for $\overline{d}_{l,m}$.
\end{theorem}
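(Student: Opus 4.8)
The plan is to derive \eqref{E:degreeRelation} from two separate recurrences --- the \emph{ambient} recurrence \eqref{E:samedeg} for the unreduced degree $d_{l,m}$ and a recurrence for the gcd-degree $g_{l,m}$ supplied by the Enabling Conjecture --- and then to analyse the resulting constant-coefficient linear partial difference equation by a single substitution that turns it into a discrete Laplace equation. Throughout I work in the equal-degree regime of Proposition~\ref{prop:A}, where $x_{l,m}$ and $z_{l,m}$ share the degree $d_{l,m}$, so that $d_{l,m}$ obeys \eqref{E:samedeg} with equality and $\overline{d}_{l,m}=d_{l,m}-g_{l,m}$.

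First I would take degrees in the recurrence \eqref{E:Gcd_recurence}. Since Conjecture~\ref{C:Integrable_Equations} asserts that $G'_{l,m}=\gcd_{l,m}\cdot C_{l,m}$ is a genuine polynomial (so the quotient in \eqref{E:Gcd_recurence} is exact) and since degrees are additive over $\QQ[w]$, one obtains, using $g'_{l,m}=g_{l,m}$ and the uniform degree \eqref{eq:uniformdeg} of $A_{l+1,m+1}$,
\begin{equation*}
g_{l+1,m+1}=g_{l-1,m-1}+g_{l+1,m}+g_{l,m+1}-g_{l-1,m}-g_{l,m-1}+2\bigl(d_{l-1,m}+d_{l,m-1}\bigr).
\end{equation*}
Feeding this into the definition \eqref{barg2} of the spontaneous gcd-degree $\overline{g}_{l+1,m+1}$ and substituting $d=\overline{d}+g$ collapses, once the surplus $g$-terms recombine via \eqref{barg2}, to exactly the relation \eqref{bargcddeg} of Observation~\ref{observe}, namely $\overline{g}_{l,m}+\overline{g}_{l+1,m+1}=2(\overline{d}_{l-1,m}+\overline{d}_{l,m-1})$. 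Combining $\overline{d}_{l+1,m+1}=\overline{d}_{l,m}+\overline{d}_{l+1,m}+\overline{d}_{l,m+1}-\overline{g}_{l+1,m+1}$ (which is \eqref{E:samedeg} minus \eqref{barg2}) with this relation and with its own downshift $\overline{g}_{l,m}=\overline{d}_{l-1,m-1}+\overline{d}_{l-1,m}+\overline{d}_{l,m-1}-\overline{d}_{l,m}$ eliminates every $\overline{g}$ and yields \eqref{E:degreeRelation}. This is a finite, if slightly lengthy, linear computation.

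For the growth statement I would set $D_{l,m}:=\overline{d}_{l,m}-\overline{d}_{l-1,m-1}$. Rewriting \eqref{E:degreeRelation} as $\overline{d}_{l+1,m+1}-\overline{d}_{l,m}=(\overline{d}_{l+1,m}-\overline{d}_{l,m-1})+(\overline{d}_{l,m+1}-\overline{d}_{l-1,m})-(\overline{d}_{l,m}-\overline{d}_{l-1,m-1})$ shows that $D$ satisfies the discrete mixed-second-difference (Laplace) equation $D_{l+1,m+1}-D_{l+1,m}-D_{l,m+1}+D_{l,m}=0$ for $l,m\ge1$, whose general solution is $D_{l,m}=\phi(l)+\psi(m)$ for single-variable functions $\phi,\psi$. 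The boundary traces are $D_{1,m}=\overline{d}_{1,m}-\overline{d}_{0,m-1}$ and $D_{l,1}=\overline{d}_{l,1}-\overline{d}_{l-1,0}$, which are precisely the corner boundary differences hypothesised to be polynomial of degree $D$; hence $\phi$ and $\psi$ are each polynomially bounded of degree at most $D$. Telescoping along the diagonal $m=m_0+l$ then gives
\begin{equation*}
\overline{d}_{l,m_0+l}=\overline{d}_{0,m_0}+\sum_{k=1}^{l}\bigl(\phi(k)+\psi(m_0+k)\bigr),
\end{equation*}
in which $\overline{d}_{0,m_0}$ is a constant (the diagonal is fixed) and the sum of a degree-$D$ summand over $k\le l$ is polynomial of degree $D+1$ in $l$. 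Thus $\overline{d}_{l,m_0+l}=O(l^{D+1})$, whence $\lim_{l\to\infty}\log(\overline{d}_{l,m_0+l})/l=0$. Case~I has bounded corner differences ($D=0$), giving linear growth, and Case~II has affine differences ($D=1$), giving quadratic growth, matching the arrays \eqref{M:S_constant_gcd}--\eqref{M:S_linear_gcd}.

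The one genuinely load-bearing step is the passage from the Enabling Conjecture to the $g_{l,m}$-recurrence: everything after it is linear bookkeeping. I would be most careful about (i) the admissible index ranges, so that the downshifted identities and the telescoping begin where $(l-1,m-1)$ still lies in the first quadrant; (ii) confirming the quotient in \eqref{E:Gcd_recurence} is exact, so that taking degrees is legitimate; and (iii) staying inside the equal-degree regime of Proposition~\ref{prop:A}, so that \eqref{E:samedeg} holds as an equality rather than the mere inequality of Proposition~\ref{T:gcd}(1). By contrast the growth estimate, which might look like the crux, is disarmed entirely by the substitution $D_{l,m}=\overline{d}_{l,m}-\overline{d}_{l-1,m-1}$.
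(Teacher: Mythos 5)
Your proposal is correct and takes essentially the same route as the paper's proof: the relation \eqref{E:degreeRelation} is obtained exactly as there, by taking degrees in \eqref{E:Gcd_recurence} and combining with \eqref{E:samedeg} and its downshift (your $\overline{g}$/\eqref{bargcddeg} bookkeeping is precisely the paper's own stated alternative derivation). Your growth analysis via $D_{l,m}=\overline{d}_{l,m}-\overline{d}_{l-1,m-1}$ is likewise the paper's argument with $v_{l,m}=\overline{d}_{l+1,m+1}-\overline{d}_{l,m}$: the same separable equation $v_{l,m}+v_{l-1,m-1}=v_{l-1,m}+v_{l,m-1}$, the same boundary-value solution $v_{l,m}=v_{l,0}+v_{0,m}-v_{0,0}$, and the same diagonal telescoping to the degree-$(D+1)$ bound and the Case I/II conclusions.
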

\begin{proof}
Taking degrees in  \eqref{E:Gcd_recurence},  $g^{'}_{l,m}:=\deg(G^{'}_{l,m}(w))$ satisfies the following recurrence for $l,m\geq 1$
\begin{equation}\label{eq:crucial}
g^{'}_{l+1,m+1}=2(d_{l,m-1}+d_{l-1,m})+g^{'}_{l-1,m-1}+g^{'}_{l+1,m}+g^{'}_{l,m+1}-g^{'}_{l-1,m}-g^{'}_{l,m-1},
\end{equation}
where the first term comes from \eqref{eq:uniformdeg}.
Adding this to  \eqref{E:samedeg} and its downshifted version gives
\eqref{E:degreeRelation}. Alternatively, conjecture \eqref{C:Integrable_Equations} gives \eqref{bargcddeg}
and writing $\overline{d}_{l+1,m+1}={d}_{l+1,m+1}-g_{l+1,m+1}$ and using \eqref{E:samedeg}
and \eqref{barg2} to replace ${d}_{l+1,m+1}$ and  $g_{l+1,m+1}$, and then downshifting the result to replace $\overline{d}_{l,m}$ will give
again \eqref{E:degreeRelation}.

Let $v_{l,m}:=\overline{d}_{l+1,m+1}-\overline{d}_{l,m}$ so that if, say, $m-l=m_0 \ge 0$,
\begin{equation} \label{eq:topdiag}
\overline{d}_{l,l+m_0}= \overline{d}_{0,m_0} + \sum_{j=0}^{l-1} v_{j,j+m_0},
\end{equation}
with an analogous expression for the growth of reduced degree along diagonals $l-m=l_0 > 0$
below the main diagonal.

Equation~\eqref{E:degreeRelation} gives
\begin{equation}
\label{E:JA_Eq}
v_{l,m}+v_{l-1,m-1}=v_{l-1,m}+v_{l,m-1}.
\end{equation}
which is equivalent to
$v_{l,m}-v_{l-1,m}=v_{l,m-1}-v_{l-1,m-1}$.
Iterating this downward in $m$ gives
\begin{equation} \label{eq:addit}
v_{l,m}-v_{l-1,m}=v_{l,0}-v_{l-1,0},
\end{equation}
which can be downshifted in $l$:
\begin{align*}
v_{l-1,m}-v_{l-2,m}&=v_{l-1,0}-v_{l-2,0},\\
v_{l-2,m}-v_{l-3,m}&=v_{l-2,0}-v_{l-3,0},\\
\vdots\qquad &  \qquad\quad \vdots\\
v_{1,m}-v_{0,m}&=v_{1,0}-v_{0,0}.
\end{align*}
Adding both sides from \eqref{eq:addit} onwards gives
\begin{equation} \label{vic}
v_{l,m}=v_{l,0}+v_{0,m}-v_{0,0},
\end{equation}
which expresses $v_{l,m}$ in terms of boundary values.
Substitution in \eqref{eq:topdiag} yields:
\begin{equation} \label{eq:topdiag2}
\overline{d}_{l,l+m_0}= \overline{d}_{0,m_0} - l\,v_{0,0} + \sum_{j=0}^{l-1} v_{j,0}+ v_{0,j+m_0},
\end{equation}
Suppose $|v_{j,0}| < P_1(j)$ and $|v_{0,j+m_0}| < P_2(j)$ where $P_i$ are polynomials with constant coefficients of respective
degrees $D_1, D_2 \ge 0$. Then by standard results, the sum $\sum_{j=0}^{l-1} j^D$ is a polynomial in $l-1$ of
degree $1+D$. Consequently,  $\sum_{j=0}^{l-1} |v_{j,0}+ v_{0,j+m_0}|$
is bounded by a polynomial in $l-1$ of degree $1+\max(D_1,D_2)$, and so is  $|\overline{d}_{l,l+m_0}|$.

For Case I boundary values, boundary values are constant on both axes.
Explicitly, we have $\overline{d}_{l,1}=\overline{d}_{1,m}=1$  and $\overline{d}_{l,0}=\overline{d}_{0,m}=0$
for all $l,m>0$ and $d_{0,0}=1$. It implies that $v_{0,0}=0$ and $v_{l,0}=v_{0,m}=1$ for all $l,m>0$. Hence
$P_1(j)$ and $P_2(j)$ are constant so $D_1=D_2=0$ and we have
linear growth along the diagonal. In fact, directly from \eqref{vic}, $v_{l,m}=2$ for all $l,m>0$ so the coefficient of linear growth is $2$.
This is what is observed in the right hand side of figure \ref{F:sponred1}.

For Case II boundary values, boundary values are affine on both axes in $w$ and one can deduce that
$v_{0,j+m_0})=\overline{d}_{1,j+m_0+1}-\overline{d}_{0,j+m_0}=2(j+m_0+1)$.  Hence $D_1=D_2=1$ and
we have quadratic growth in $l$ along the diagonal for $\overline{d}_{l,l+m_0}$. Consequently, the second difference of the latter is
necessarily constant and can be calculated. We have for $l,m >0$
    \begin{align*}
    \label{E:first_difference}
    \Delta_{l,m}&=v_{l+1,m+1}-v_{l,m}\\
    &=v_{l+1,0}-v_{l,0}+v_{0,m+1}-v_{0,m}=d_{l+2,0}+d_{l,0}+d_{0,m+2}+d_{0,m}\\
    &=4
    \end{align*}
This value for the second difference is observed in the right hand side of Figure \ref{F:sponred2}.
\end{proof}

The coefficient $2$ in \eqref{eq:crucial} and  \eqref{eq:uniformdeg}
proves crucial in the proof of Theorem \ref{Main}. It is a consequence of the fact that $A_{l+1,m+1}$ is quartic in its arguments.
%and $A_{l,m}|A_{l+1.m+1}$.
It will be shown in section 5 that when instead $A_{l+1,m+1}$ is quadratic in its arguments, the growth of the reduced degree can be exponential.

We remark that some alternative degree equations equivalent to \eqref{E:degreeRelation} are given in \cite{TranRob}.  These equivalent forms prove useful for identifying and studying
linearisable lattice equations.
%**********************************************************************************************************************************************************************************************
\subsection{Polynomial growth of other integrable lattice equations }

Consider the integrable equations given  in  Table \ref{tab:rules} that do not appear in Table \ref{tab:commonfactor}.
Although these equations have  different factorization patterns compared to the other integrable equations mentioned in the previous subsection,  we will show that Conjecture \ref{C:Integrable_Equations} and Theorem \ref{Main} still apply.

We start with the  KdV equation.  By using direct calculation, with arbitrary corner boundary values, we find that the first non-trivial gcd of $x_{l,m}$ and $z_{l,m}$  appears at the vertices $(2,3)$ and $(3,2)$ where
$\gcd_{2,3}=x_{1,1}$ and $\gcd_{3,2}=x_{1,1}$.  At the  vertex $(3,3)$,  the gcd is given as follows
\begin{equation} \label{eq:KdVquad}
{\gcd}_{3,3}=x_{1,2}^2\, x_{2,1}^2\, {\gcd}_{2,3}\,{\gcd}_{3,2}.
\end{equation}
It is noted that this formula is quite large if we write it in terms of
the boundary values.
Similarly, for mLV equation, we first see the non trivial  gcd of $x_{l,m}$ and $z_{l,m}$ at  $(2,3)$ and $(3,2)$. Using  Gr\"{o}bner bases,  one can find that
\[
{\gcd}_{3,3}=(x_{2,1}^2-c^2z_{2,1}^2)(x^2_{1,2}-c^2z^2_{2,1}) {\gcd}_{2,3}\,{\gcd}_{3,2}.
\]

For the MX equation, at the point $(2,2)$ the common factor is $\gcd_{2,2}=x_{1,0} \; x_{0,1}$, a quadratic rather than a quartic factor.
However at the vertex $(3,3)$ we  also obtain \eqref{eq:KdVquad}.
Since, the MX equation can be seen as a degeneration of the Tz equation \cite{Pavlos}, it suggests that Tz might behave similarly. By using Maple, we know that at the point $(2,2)$, the common factor is `quadratic' which is not enough for our Conjecture~\ref{C:Integrable_Equations}. Extending to the $3\times 3$ square, we find that
\[
{\gcd}_{3,3}=z_{1,2}\, z_{2,1} (cx_{2,1}z_{1,2}+cx_{1,2}z_{2,1}-x_{2,1}x_{1,2}-z_{2,1}z_{1,2})\, {\gcd}_{2,3}\,{\gcd}_{3,2}.
\]

Therefore, for these equations one can try to consider $A_{3,3}$ as a factor which plays the similar role as the $A$ given in the previous section. In general,
for $l,m\geq 2$, we have the following table of  $A_{l+1,m+1}$
 which is a common factor of $x_{l+1,m+1}$ and $z_{l+1,m+1}$.
 \begin{table}[h]
\begin{tabular}{|l |p{12.5cm}|}
\hline
KdV, MX & $ x_{l-1,m}^2 \ x_{l,m-1}^2$
\\
\hline
Tz&$z_{l,m-1}\, z_{l-1,m}\ (cx_{l,m-1}z_{l-1,m}+cx_{l-1,m}z_{l,m-1}-x_{l,m-1}x_{l-1,m}-z_{l,m-1}z_{l-1,m})$
\\
\hline
mLV&$(x_{l,m-1}^2-c^2z_{l,m-1}^2)\ (x^2_{l-1,m}-c^2z^2_{l,m-1})$
\\
\hline
\end{tabular}
\vskip0.5cm
\caption{List of common factors  $A_{l+1,m+1}$ of  $x_{l+1,m+1}$ and $z_{l+1,m+1}$ of the indicated rules of Table \ref{tab:rules} -- see also Figure \ref{CIF:Alm}.
\label{tab:other_inte_commonfactor}}
%\end{center}
\end{table}

We recover the setup of the previous subsection by taking
 $l,m \ge 2$.
That is, we replicate the situation of Figure \ref{CIF:Alm}, except for the fact that the $2 \times 2$ squares over which the factorization
emerges starts one column in and one row up from the previous subsection.  Numerical experiments support that Conjecture \ref{C:Integrable_Equations} holds for $l,m\geq 2$ and the fact that
$A_{l+1,m+1}$ is again quartic here implies Theorem \ref{Main},
yielding polynomial growth for these  equations.

We will end this subsection by consider the LV equation. The first non trivial gcd of $x_{l,m}$ and $z_{l,m}$  appears at the vertex $(2,1)$ which is earlier than other integrable equations.
However, at the vertex $(2,2)$ the gcd is given as follows
\[
{\gcd}_{2,2}={z_{{1,0}}}^{2}z_{{0,0}} \left( \beta_{{1}}z_{{0,1}}+x_{{0,1}}
 \right),
\]
which does not seem to be enough for Conjecture \ref{C:Integrable_Equations}. Extending to the $3\times 3$  square, one can see that
\[  z_{{2,1}}z_{{1,2}} \left( \beta_{{1}}z_{{2,1}}+x_{{2,1}} \right) \left(
\beta_{{2}}z_{{1,2}}+x_{{1,2}} \right)|{\gcd}_{3,3}.
\]
This plays the similar role of $A_{l+1,m+1}$ in the previous subsection. In fact, numerical results show that Conjecture \ref{C:Integrable_Equations} holds for $l\geq 3, m\geq 2$ where
\[
A_{l+1,m+1}=z_{{l,m-1}} z_{{l-1,m}}\left( \beta_{{m-1}}z_{{l,m-1}}+x_{{l,m-1}} \right)  \left(
\beta_{{m}}z_{{l-1,m}}+x_{{l-1,m}} \right).
\]
This conjecture again helps us to prove polynomial growth of LV equation.

%*****************************************************************************************************************************************************************************
\section{Non integrable equations with factorization}
In this section, we consider the equations of Table \ref{tab:rules}
below the double-line division. These autonomous equations taken from \cite{HietVial} also  have some factorizations at  the top right corner
of any $2 \times 2$ lattice square, as in Figure \ref{CIF:Alm}, but the degree of the factor is insufficient to prevent exponential growth of the reduced degrees as found heuristically in~\cite{HietVial}.

\subsection{Equations with `quadratic' factorization }
We consider the equations E20, E26, E27, E28 and E30 of Table \ref{tab:rules} and present their factors $A_{l+1,m+1}$
in Table \ref{tab:non_commonfactor}.   Replicating the approach of section 4.2, we have again that \eqref{Divisor} holds and, as before, we find numerically that \eqref{E:predictG} reproduces ${\gcd}_{l,m}(w)$ up to a lattice-dependent constant, analogous to \eqref{eq:eureka}.
Because the $A_{l+1,m+1}$ of Table \ref{tab:non_commonfactor}   are quadratic in their arguments rather than quartic,  we have \eqref{eq:Asplit}  and \eqref{eq:Bsplit} but without the exponent $2$ on each $gcd$. Consequently, we were led to consider the following recurrence
\begin{equation}
\label{E:Non_inte_gcd}
G^{'}_{l+1,m+1}=\frac{A_{l+1,m+1} G^{'}_{l+1,m}G^{'}_{l,m+1}G^{'}_{l,m}}{G^{'}_{l,m-1}G^{'}_{l-1,m}},
\end{equation}
for $l,m\geq 2$ and $G^{'}_{l,m}=\gcd_{l,m}$ if $l<2$ or $m<2$.
Note that the rhs of \eqref{E:Non_inte_gcd} is based upon the divisor of \eqref{eq:motivateG}
after inserting \eqref{eq:Bsplit} without the exponent $2$ on each $gcd$ in the denominator
and excepting the factor $\overline{B}_{l+1,m+1}$.

%%%%%%%%%%%%%%%%%%%%%%%%%%%%%%%%%%%%%%%%%%%%%%%%%%%%%%%%%%%%%%%%%%
\begin{table}[h]
\begin{tabular}{|l |p{12.5cm}|}
\hline
$E20$ & $p_6x_{l,m-1}z_{l-1,m}+p_3z_{l,m-1}x_{l-1,m}+p_3p_6r_4 z_{l,m-1}z_{l-1,m} $
\\
\hline
$E 26$&$p_1r_1z_{l,m-1}z_{l-1,m}+p_1p_6z_{l,m-1}x_{l-1,m}+p_3p_6x_{l,m-1}z_{l-1,m}$
\\
\hline
$E27$&${p_{{1}}}^{2}{p_{{6}}}^{2} \left( p_{{3}}-1 \right) ^{2} \left( p_{{3}}z_{{l,m-1}}x_{{l-
1,m}}+p_{{1}}p_{{6}}z_{{l-1,m}}x_{{l,m-1}}+r_{{4}}z_
{{l,m-1}}z_{{l-1,m}} \right)$
\\
\hline
$E28$&${p_
{{6}}}^{2}\left( -p_{{6}}x_{{l,m-1}}z_{{l-1,m}}+p_{{6}}r_{{3}}z_{{l-1,m}}z_{{l,
m-1}}+x_{{l-1,m}}z_{{l,m-1}}-p_{{3}}x_{{l-1,m}}z_{{l,m-1}} \right) $
\\
\hline
$E30$&$x_{{l-1,m}}z_{{l,m-1}}+x_{{l,m-1}}z_{{l-1,m}}+r_{{4}}z_{{l,m-1}}z_{{l
-1,m}}$
\\
\hline
\end{tabular}
\vskip0.5cm
\caption{List of common factors  $A_{l+1,m+1}$ of  $x_{l+1,m+1}$ and $z_{l+1,m+1}$ of the indicated rules of Table \ref{tab:rules} for {\em arbitrary} initial values
at the 5 corner sites $\{ (l-1,m-1), (l-1, m), (l-1, m+1), (l,m-1), (l+1,m-1)\}$ of the $2 \times 2$ lattice square -- see also Figure \ref{CIF:Alm}.
\label{tab:non_commonfactor}}
%\end{center}
\end{table}
%%%%%%%%%%%%%%%%%%%%%%%%%%%%%%%%%%%%%%%%%%%%%%%%%%%%%%%%%%

%%%%%%%%%%%%%%%%%%%%%%%%%%%%%%%%%%%%%%%%%%%%%%%%%%%%%%%%%%%%%%%%%%

% It is important to note that it took much more time to compute %$g_{l,m}, G_{l,m}$ and $G^{'}_{l,m}$  for these equations
%   within the $7\times 7$ square than integrable equations. For %example for equation E20,   it took  6.9 seconds, and 239.6 seconds  %for  Case I and  Case II  respectively.
%    It is noted that working on the $7\times 7$ square  on Case I %took much less time on other cases as it is similar to working on the  %$6\times 6$ square on Case II.

Numerical  experiments
for different trials of boundary values on $7\times 7$ lattice squares
lead us to
\begin{conjecture}[Enabling Conjecture]
\label{C:Non_inte}
Suppose the values $G^{'}_{l,m}(w)$ for the recurrence \eqref{E:Non_inte_gcd}
in the first quadrant are chosen to agree with ${\gcd}_{l,m}(w)$ when $0 \le l \le 1$ and $0 \le m \le 1$.
Then for $l, m \ge 2$, we find $\gcd_{l,m}(w)$ is a divisor of $G^{'}_{l,m}(w)$ given by the
recurrence~\eqref{E:Non_inte_gcd}.  For lattice rules E20, E26 and E27, we actually find $\gcd_{l,m}(w)$ agrees with  $G^{'}_{l,m}(w)$ up to a lattice-dependent constant.
\end{conjecture}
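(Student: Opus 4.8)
The plan is to separate the two assertions and prove them by different means: first the general divisibility $\gcd_{l,m}(w)\mid G'_{l,m}(w)$ for all the rules of Table~\ref{tab:non_commonfactor} and $l,m\ge 2$, and then the sharper equality-up-to-a-constant for the three rules E20, E26 and E27. For the divisibility I would argue by strong induction on the vertices of the first quadrant, ordered by $l+m$, with the axes $l\le 1$ or $m\le 1$ as base cases (where $G'_{l,m}=\gcd_{l,m}$ by construction). Since divisibility in $\QQ[w]$ is detected factor by factor, I would fix an irreducible $\pi(w)$ and reduce the claim to the scalar inequality on $\pi$-adic valuations
\[
v_\pi(\gcd_{l+1,m+1})\ \le\ v_\pi(G'_{l+1,m+1}).
\]
Applying $v_\pi$ to the recurrence \eqref{E:Non_inte_gcd} turns the right-hand side into
\[
v_\pi(A_{l+1,m+1})+v_\pi(G'_{l+1,m})+v_\pi(G'_{l,m+1})+v_\pi(G'_{l,m})-v_\pi(G'_{l,m-1})-v_\pi(G'_{l-1,m}),
\]
while the fundamental divisor relation \eqref{Divisor} together with Proposition~\ref{T:gcd} bounds the left-hand side from the inherited product $\gcd_{l,m}\gcd_{l+1,m}\gcd_{l,m+1}$ and the quadratic factor $A_{l+1,m+1}$.

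The hard part, and the reason the statement remains only a conjecture, lies in the two subtracted valuations $v_\pi(G'_{l,m-1})$ and $v_\pi(G'_{l-1,m})$ produced by the denominator of \eqref{E:Non_inte_gcd}. The inductive hypothesis $v_\pi(G'_{l',m'})\ge v_\pi(\gcd_{l',m'})$ pushes the numerator terms in the right direction, but it delivers only \emph{lower} bounds on the denominator terms as well, which is exactly the wrong direction under subtraction; a naive valuation induction therefore stalls. A prerequisite that must be handled at the same time is polynomiality of $G'$ itself, i.e.\ that $G'_{l,m-1}G'_{l-1,m}$ always divides $A_{l+1,m+1}G'_{l+1,m}G'_{l,m+1}G'_{l,m}$ so that the recurrence never leaves $\QQ[w]$. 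To break the sign obstacle I would aim to control $G'$ from both sides at once by pinning down the spontaneous gcd $\overline{\gcd}_{l+1,m+1}$ of the barred iterates \eqref{xinbar}--\eqref{zinbar} explicitly, that is, by identifying $\overline{B}_{l+1,m+1}$ of \eqref{eq:Bsplit} (the factor excepted in passing from \eqref{eq:motivateG} to \eqref{E:Non_inte_gcd}), in the spirit of the identity $\overline{B}_{l+1,m+1}=\overline{\gcd}_{l,m}$ underlying the integrable Conjecture~\ref{C:Integrable_Equations}.

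For the equality claim for E20, E26 and E27 the crux is to show that no \emph{extra} spontaneous cancellation occurs, i.e.\ that $\overline{\gcd}_{l+1,m+1}$ is generated entirely by the quadratic factor of Table~\ref{tab:non_commonfactor} and by the inherited factors of \eqref{eq:Asplit}, and by nothing else. I would attempt this rule by rule through explicit elimination: substituting the barred projective coordinates into $f^{(1)}_{l,m}$ and $f^{(2)}_{l,m}$ and computing $\mathrm{Res}_w(f^{(1)}_{l,m},f^{(2)}_{l,m})$ (or a Gr\"obner basis of the ideal they generate over $\QQ(w)$), I would verify that once the predicted factor is removed the two polynomials are coprime, which forces $G'_{l,m}=C_{l,m}\,\gcd_{l,m}$. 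The remaining rules E28 and E30, for which only divisibility (and not equality) is asserted, are precisely the cases where $G'$ genuinely overcounts, so the equality route is unavailable and one is thrown back on the one-sided valuation argument above.

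The \textbf{main obstacle} is therefore twofold and structural rather than computational. First, the subtraction in \eqref{E:Non_inte_gcd} means the inductive hypothesis cannot bound the denominator terms in the required direction, so divisibility for E28 and E30 seems to demand an exact, closed description of every spontaneous factor at each vertex. Second, any resultant or Gr\"obner computation only certifies the coprimeness statement at finitely many vertices, whereas the conjecture asserts it for all $l,m\ge 2$; promoting the finite verification to a uniform statement over the whole quadrant is exactly the gap that the numerical experiments leave open and is the heart of the difficulty.
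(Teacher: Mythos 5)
The statement you were asked to prove is, in the paper, exactly what its name announces: a conjecture. The authors offer no proof of it whatsoever; their sole support is empirical (``Numerical experiments for different trials of boundary values on $7\times 7$ lattice squares lead us to'' the statement), and the conjecture is then simply assumed as the hypothesis of Theorem~\ref{MainNon}. So there is no paper proof to match your argument against, and your proposal---which explicitly declines to claim a complete proof---is in fact consistent with the paper's own epistemic position.

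Your analysis of why a proof is hard is sound and, if anything, sharper than anything the paper says. The valuation formulation is correct, and you rightly observe that the two subtracted terms $v_\pi(G^{'}_{l,m-1})$ and $v_\pi(G^{'}_{l-1,m})$ coming from the denominator of \eqref{E:Non_inte_gcd} make a one-sided induction stall. Note in addition that the divisor relation \eqref{Divisor} bounds $v_\pi({\gcd}_{l+1,m+1})$ only from \emph{below}, whereas the asserted divisibility ${\gcd}_{l,m}\mid G^{'}_{l,m}$ requires an upper bound on that valuation, i.e.\ an exact closed-form description of the spontaneous factor $\overline{\gcd}_{l+1,m+1}$---which is precisely the missing identification of $\overline{B}_{l+1,m+1}$ in \eqref{eq:Bsplit} that you point to. Your proposed rule-by-rule resultant/Gr\"obner verification of coprimeness after removing the predicted factor is essentially a symbolic analogue of the authors' numerical check, and it inherits the limitation you yourself state: it certifies finitely many vertices, never all $l,m\ge 2$. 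In short, there is no gap in your proposal relative to the paper, because the paper proves nothing here; but equally, neither you nor the paper possesses a proof, and your diagnosis of the obstruction (exact control of the spontaneous gcd at every vertex, plus uniformity in $(l,m)$) is an accurate statement of what a genuine proof would require.
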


The conjecture leads to:
\begin{theorem} \label{MainNon}
If the enabling conjecture \ref{C:Non_inte} holds, the reduced degrees $\overline{d}_{l,m}=\deg(\overline{x}_{l,m}(w))=\deg(\overline{z}_{l,m}(w))$ of the lattice equations of Table \ref{tab:non_commonfactor} are bounded below by the sequence
$\overline{d}^{'}_{l,m}$ satisfying the following linear partial difference equation with constant coefficients:
 \begin{equation}
 \label{E:Non_actual_degree}
 \overline{d}^{'}_{l+1,m+1}=  \overline{d}^{'}_{l+1,m}+ \overline{d}^{'}_{l,m+1} + \overline{d}^{'}_{l,m} - \overline{d}^{'}_{l,m-1}- \overline{d}^{'}_{l-1,m}.
 \end{equation}
Consequently, the equations of Table \ref{tab:non_commonfactor} have non-vanishing entropy.
\end{theorem}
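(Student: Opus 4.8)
The plan is to mirror the proof of Theorem \ref{Main}, with the single structural change that the factor $A_{l+1,m+1}$ of Table \ref{tab:non_commonfactor} is bilinear rather than quartic. First I would record the $w$-degree of this factor: each monomial of the $A_{l+1,m+1}$ listed there is a product of exactly one projective coordinate from the site $(l-1,m)$ and one from the site $(l,m-1)$, so when $x_{l,m}$ and $z_{l,m}$ share a common degree $d_{l,m}$ in $w$ at every vertex one gets
\begin{equation*}
\deg\bigl(A_{l+1,m+1}(w)\bigr)=d_{l-1,m}+d_{l,m-1},
\end{equation*}
precisely half of the quartic value \eqref{eq:uniformdeg}. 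Setting $\overline{d}^{'}_{l,m}:=d_{l,m}-g^{'}_{l,m}$ with $g^{'}_{l,m}:=\deg(G^{'}_{l,m}(w))$, the enabling Conjecture \ref{C:Non_inte} gives $\gcd_{l,m}(w)\mid G^{'}_{l,m}(w)$, hence $g_{l,m}\le g^{'}_{l,m}$ and therefore $\overline{d}_{l,m}=d_{l,m}-g_{l,m}\ge d_{l,m}-g^{'}_{l,m}=\overline{d}^{'}_{l,m}$, which is the asserted lower bound.

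Next I would derive the recurrence \eqref{E:Non_actual_degree} for $\overline{d}^{'}_{l,m}$. Taking degrees in the defining recurrence \eqref{E:Non_inte_gcd}, and using that the indicated quotient is exact so that degrees are additive, gives
\begin{equation*}
g^{'}_{l+1,m+1}=(d_{l-1,m}+d_{l,m-1})+g^{'}_{l+1,m}+g^{'}_{l,m+1}+g^{'}_{l,m}-g^{'}_{l,m-1}-g^{'}_{l-1,m}.
\end{equation*}
Subtracting this from the ambient relation \eqref{E:samedeg}, namely $d_{l+1,m+1}=d_{l,m}+d_{l+1,m}+d_{l,m+1}$, and regrouping the $d$- and $g^{'}$-terms at each site into $\overline{d}^{'}$ produces exactly \eqref{E:Non_actual_degree}; the bookkeeping is routine, and the bilinear coefficient $1$ (rather than the quartic $2$) is what replaces the $+\overline{d}_{l-1,m-1}$ of \eqref{E:degreeRelation} by $+\overline{d}^{'}_{l,m}$.

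Finally, for non-vanishing entropy I would exhibit exponential growth of the lower bound $\overline{d}^{'}_{l,m}$ along a diagonal, whence $\epsilon\ge\limsup_m \tfrac1m\log\overline{d}_{m,m}\ge\limsup_m\tfrac1m\log\overline{d}^{'}_{m,m}>0$. Following the generating-function method of Section 3, the bivariate series $F(x,y)=\sum_{l,m\ge0}\overline{d}^{'}_{l,m}x^ly^m$ has denominator $P(x,y)=1-x-y-xy+x^2y+xy^2$, read off from the symbol of \eqref{E:Non_actual_degree}. The main-diagonal series is extracted by the residue $\Phi(t)=\frac{1}{2\pi i}\oint F(x,t/x)\,\frac{dx}{x}$, and since $x\,P(x,t/x)=-(1-t)(x^2-x+t)$, the two $x$-roots collide when the discriminant $1-4t$ vanishes, i.e.\ at $t^*=\tfrac14$. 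A direct check confirms the consistency of this value: $\overline{d}^{'}_{l,m}=2^{\,l+m}$ solves \eqref{E:Non_actual_degree} identically, both sides equalling $4\cdot2^{\,l+m}$. Hence $\overline{d}^{'}_{m,m}$ grows like $4^m$ up to an algebraic factor, giving $\epsilon\ge\log 4>0$.

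I expect the main obstacle to be rigorously justifying that this growing mode is genuinely present. Two points need care: that the quotient in \eqref{E:Non_inte_gcd} is exact so that $g^{'}$ is truly additive (this rests on the conjecture and on the divisibility structure behind \eqref{E:Non_inte_gcd}), and that in the diagonal residue the numerator arising from the Case I / Case II boundary data does not vanish at the pinch point $(x,y)=(\tfrac12,\tfrac12)$, so that $t^*=\tfrac14$ is a genuine interior singularity of $\Phi$. The cleanest way to see the contrast with the integrable situation is to run the same residue computation for \eqref{E:degreeRelation}, where one finds $x\,P_{\mathrm{int}}(x,t/x)=-(1-t)(x-1)(x-t)$, whose $x$-roots collide only at $t=1$; there the diagonal series has no singularity inside the unit disk, matching the polynomial growth of Theorem \ref{Main}. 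The work here is precisely to confirm that the boundary numerator does not cancel the interior pinch at $t=\tfrac14$.
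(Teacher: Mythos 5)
Your first three steps coincide with the paper's own argument: the bilinear degree count $\deg(A_{l+1,m+1}(w))=d_{l-1,m}+d_{l,m-1}$, the inequality $g_{l,m}\le g^{'}_{l,m}$ from Conjecture \ref{C:Non_inte} giving the lower bound $\overline{d}_{l,m}\ge\overline{d}^{'}_{l,m}$, and the derivation of \eqref{E:Non_actual_degree} by taking degrees in \eqref{E:Non_inte_gcd} and combining with \eqref{E:samedeg} are exactly what the paper does, and your bookkeeping is correct. The gap is in the final step, and it is genuine: your generating-function computation shows only that $t=\tfrac14$ is a \emph{candidate} singularity of the diagonal series, and you concede you cannot rule out that the numerator determined by the boundary data cancels the pinch. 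That is not a technicality one can wave at --- the entire content of ``non-vanishing entropy'' is that the growing mode is actually present in the solution with the prescribed boundary values. Exhibiting the exact solution $\overline{d}^{'}_{l,m}=2^{\,l+m}$ shows the recurrence \emph{admits} exponentially growing solutions, but it also admits bounded ones (e.g.\ constants), so by itself this decides nothing.

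The paper closes precisely this gap with an elementary positivity argument that bypasses singularity analysis altogether. Setting $v_{l,m}=\overline{d}^{'}_{l+1,m+1}-\overline{d}^{'}_{l,m}$, equation \eqref{E:Non_actual_degree} collapses to the Pascal recurrence $v_{l,m}=v_{l,m-1}+v_{l-1,m}$, which is \eqref{E:Non_first_difference}. Since the boundary values of $v$ are nonnegative, all $v_{l,m}$ are nonnegative; hence both summands are monotone, $v_{l,m}\ge 2\,v_{l-1,m-1}$, and so $v_{l,l}\ge 2^{l}v_{0,0}$ with $v_{0,0}=d_{1,0}+d_{0,1}>0$ for non-constant boundary data (and $v_{l,l}\ge 2^{l-1}v_{1,1}$ with $v_{1,1}>0$ in the constant case). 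Positivity is exactly the information your residue argument lacks: it makes cancellation impossible and delivers exponential growth of $\overline{d}^{'}_{l,l}$, hence of $\overline{d}_{l,l}\ge\overline{d}^{'}_{l,l}$, directly. (The paper thereby proves growth rate at least $2$ per diagonal step, i.e.\ entropy at least $\log 2$; your claimed sharper rate $\log 4$ is plausible but is precisely what remains unproved in your approach, and sharpness is not needed for the theorem.) To repair your proposal you would need a non-cancellation statement at the pinch, and the natural way to get one for this five-point stencil is the first-difference positivity argument --- i.e.\ the paper's route.
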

\begin{proof}
It follows directly from the formula~\eqref{E:Non_inte_gcd} that
the degree of $G^{'}_{l,m}$  satisfies the recurrence
 \begin{equation}
 \label{E:Non_inte_degree_gcd}
 g^{'}_{l+1,m+1}=(d_{l-1,m}+d_{l,m-1})+g^{'}_{l,m}+g^{'}_{l+1,m}+g^{'}_{l,m+1}-(g^{'}_{l,m-1}+g^{'}_{l-1,m}).
 \end{equation}
 where $ \overline{d}^{'}_{l,m}=d_{l,m}-g^{'}_{l,m}$.
Denote $v_{l,m}=\overline{d}^{'}_{l+1,m+1}-\overline{d}^{'}_{l,m}$. Using~\eqref{E:Non_actual_degree} we obtain for $l,m>0$
 \begin{equation}
 \label{E:Non_first_difference}
 v_{l,m}=v_{l,m-1}+v_{l-1,m},
 \end{equation}
 which  forms a Pascal's triangle.
 It implies that $v_{l,m}\geq v_{l,m-1} $ and $v_{l,m}\geq v_{l,m-1}$.
 Thus, we also  have $v_{l,m-1}\geq v_{l-1,m-1}$ and $v_{l-1,m}\geq v_{l-1,m-1}$.
 Therefore, we get $v_{l,m}\geq  2v_{l-1,m-1}$.
 In particular, along the diagonal we obtain $v_{l,l}\geq 2^{l}v_{0,0}$.
Since $v_{0,0}=\overline{d}^{'}_{1,1}-\overline{d}^{'}_{0,0}=d_{1,0}+d_{0,1}>0$ for non-constant initial values at $(0,1)$ and $(1,0)$.
Note that in the case of constant initial values, we have $v_{l,l}\geq 2^{l-1}v_{1,1}$ where $v_{1,1}>0$.
It shows that $v_{l,l}=\overline{d}^{'}_{l+1,l+1}-\overline{d}^{'}_{l,l}$ grows exponentially. Hence, $d^{'}_{l,l}$ grows exponentially.
\end{proof}

We note that in the case where initial values are linear, we  have found the following.
\begin{itemize}
\item We can write~\eqref{E:Non_actual_degree} as follows
\[
\overline{d}^{'}_{l+1,m+1}-\overline{d}^{'}_{l,m+1}-\overline{d}^{'}_{l+1,m}=\overline{d}^{'}_{l,m}-\overline{d}^{'}_{l-1,m}-\overline{d}^{'}_{l,m-1}.
\]
\item
It leads to  $\overline{d}^{'}_{l,m}=\overline{d}^{'}_{l-1,m}+\overline{d}^{'}_{l,m-1}+1$.
\item
  For $l,m>0$, we  have
\begin{align*}
v_{l,m}&=2\binom{l+m+2}{m+1}-2\binom{l+m}{m},\\
\overline{d}^{'}_{l,m}&=2\binom{l+m}{m}-1.
\end{align*}
This can be proved by induction.
\end{itemize}

%********************************************************************************************************************************************************************************************************
\subsection{Equation $E17$}
Finally, we consider $E17$ given in Table \ref{tab:rules}. This equation has `bigger factorization' than the non-integrable equations
mentioned in the previous subsection. However, we will show that the factorization  is not big enough to allow this equation
to have  vanishing entropy.
It can be checked, with reference to Figure \ref{CIF:Alm}, that the common factor  of $x_{l+1,m+1}$ and $z_{l+1,m+1}$ is
\begin{equation}
\label{E:A17}
A17_{l+1,m+1}=\left( p_{{4}}x_{{l-1,m-1}}z_{{l,m-1}}+x_{{l,m-1}}p_{{6}}z_{{l-1,m-1}
}+r_{{4}}z_{{l-1,m-1}}z_{{l,m-1}} \right) {z_{{l-1,m}}}^{2}=z_{l-1,m}z_{l,m}.
\end{equation}
It is important to  note that factorization appears first at the point $(1,2)$, where $gcd_{1,2}=z_{0,1}$.
We know that $gcd_{l+1,m},gcd_{l,m+1},z_{l,m}$  are devisors of $gcd_{l+1,m+1}$.
We first used the test for $G_{l,m}$ as described in subsection~\ref{SS:poly_growth} and we found that $G_{l,m}$ is a divisor of the actual $gcd_{l,m}$. However, we found that
\begin{equation}
\label{E:E17_recur_bar}
\overline{d}_{l+1,m+1}=\overline{d}_{l+1,m}+\overline{d}_{l,m+1}.
\end{equation}
It implies that
$$
g_{l+1,m+1}=d_{l,m}+g_{l+1,m}+g_{l,m+1}.
$$
Therefore, we predict that
\begin{equation}
\label{E:E17gcd}
gcd_{l+1,m+1}=z_{l,m}gcd_{l+1,m}gcd_{l,m+1},
\end{equation}
up to a constant.
We have tested this recurrence and obtain the following conjecture.
\begin{conjecture}
\label{C:E17}
Given Case II boundary values,  let $G_{l,m}=gcd_{l,m}$  for $l<2$ or $m<2$ and let  $G_{l+1,m+1}$ be
defined by the following recursive formula
\begin{equation}
\label{E:E17_predict_gcd}
G_{l+1,m+1}=z_{l,m}G_{l+1,m}G_{l,m+1}.
\end{equation}
Then $G_{l,m}=gcd_{l,m}$ up to a constant factor.
\end{conjecture}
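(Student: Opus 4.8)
The plan is to prove the conjecture by induction on the anti-diagonal $l+m$, recasting \eqref{E:E17_predict_gcd} as an identity for the spontaneous gcd. Via \eqref{bargcd2}, the asserted recurrence $\gcd_{l+1,m+1}=z_{l,m}\,\gcd_{l+1,m}\,\gcd_{l,m+1}$ of \eqref{E:E17gcd} is equivalent to $\overline{\gcd}_{l+1,m+1}=\bar{z}_{l,m}$ up to a constant, on using $z_{l,m}=\gcd_{l,m}\,\bar{z}_{l,m}$ from \eqref{eq:barxz}; this is the E17 analogue of the reformulation attached to Conjecture \ref{C:Integrable_Equations}. The base strip $l\le 1$ or $m\le 1$ is prescribed by hypothesis, and one verifies directly that the first nontrivial factor is $\gcd_{1,2}=z_{0,1}$, the asymmetry with $\gcd_{2,1}=1$ being explained by the mechanism below.

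The first step is to expose the factorization mechanism explicitly. Solving \eqref{E1:Eq} for E17 and homogenising with $u_{l,m}=x_{l,m}/z_{l,m}$, the projective rule \eqref{projx}--\eqref{projz} has
\[
z_{l+1,m+1}=\bigl(p_{4}x_{l,m}z_{l+1,m}+p_{6}x_{l+1,m}z_{l,m}+r_{4}z_{l,m}z_{l+1,m}\bigr)\,z_{l,m+1},
\]
together with a cubic $x_{l+1,m+1}$ of the shape in Observation \ref{O:PCoor}. The decisive feature is the explicit factor $z_{l,m+1}$: applying the same rule one square down and to the left shows $z_{l,m}\mid z_{l+1,m}$ whenever $(l+1,m)$ is an interior vertex ($m\ge 1$). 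Substituting $z_{l+1,m}=(\cdots)\,z_{l,m}$ into the updates at $(l+1,m+1)$ makes every monomial of both $x_{l+1,m+1}$ and $z_{l+1,m+1}$ carry a factor $z_{l,m}$, so $z_{l,m}\mid\gcd_{l+1,m+1}$. This is exactly why the quartic of \eqref{E:A17} is $z_{l-1,m}z_{l,m}$, and why the effect begins at $(1,2)$ but not at $(2,1)$: at $(2,1)$ the relevant input $(2,0)$ lies on the boundary and carries no propagated factor.

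Granting this lower bound and the inheritance $\gcd_{l+1,m}\gcd_{l,m+1}\mid\gcd_{l+1,m+1}$ from Proposition \ref{T:gcd}, together with the inductive hypothesis, the remaining content of \eqref{E:E17gcd} is that these factors combine into the full product with nothing left over. Extracting the common $z_{l,m}$ to write $x_{l+1,m+1}=z_{l,m}F$ and $z_{l+1,m+1}=z_{l,m}z_{l,m+1}E$, this is the statement $\gcd(F,\,z_{l,m+1}E)=\gcd_{l+1,m}\gcd_{l,m+1}$ up to a constant, i.e. $\overline{\gcd}_{l+1,m+1}=\bar{z}_{l,m}$ for the barred updates $\bar f^{(1)},\bar f^{(2)}$ of \eqref{xinbar}--\eqref{zinbar}.

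I expect this last divisibility to be the main obstacle. It asserts that, once the inherited gcds and the new factor $z_{l,m}$ are removed, $\bar f^{(1)}$ and $\bar f^{(2)}$ have no further common root in $w$; equivalently $\mathrm{Res}_{w}(\bar f^{(1)},\bar f^{(2)})$ does not vanish identically for Case~II data, uniformly across all lattice sites. This is precisely what the numerical experiments support but is hard to prove in general: the barred coordinates at the three input vertices are heavily constrained expressions built from the entire past history of the rule, so naive genericity or coprimeness arguments fail and one must instead control the accumulated cancellations site by site. Once the identity is granted, the conclusion is routine: taking degrees in \eqref{E:E17_predict_gcd} gives $g_{l+1,m+1}=d_{l,m}+g_{l+1,m}+g_{l,m+1}$, and combining with the ambient relation \eqref{E:samedeg} yields \eqref{E:E17_recur_bar}, a Pascal-type recurrence that forces exponential growth of $\bar{d}_{l,m}$ and hence non-vanishing entropy for E17.
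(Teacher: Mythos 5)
Your structural derivation of the one-sided divisibility is correct, and it is in fact sharper than anything the paper itself offers for E17: since the coefficient of $u_{12}$ in E17 is $p_4u+p_6u_1+r_4$, free of $u_2$, the projective update for $z_{l+1,m+1}$ carries the explicit factor $z_{l,m+1}$; applying this one square down gives $z_{l,m}\mid z_{l+1,m}$ at interior vertices, and then every monomial of both $x_{l+1,m+1}$ and $z_{l+1,m+1}$ is divisible by $z_{l,m}$. This reproduces the computed factor \eqref{E:A17}, and it correctly explains why the first nontrivial $\gcd$ appears at $(1,2)$ (the input $(1,1)$ is interior) but not at $(2,1)$ (the input $(2,0)$ is boundary data). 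The paper states these divisibilities only as the outcome of Maple computation, so this mechanism is a genuine refinement. Your reformulation via \eqref{bargcd2} of the recurrence \eqref{E:E17_predict_gcd} as $\overline{\gcd}_{l+1,m+1}=\bar z_{l,m}$ (up to a constant) is also correct, as is the closing degree bookkeeping: \eqref{E:E17_predict_gcd} plus \eqref{E:samedeg} gives \eqref{E:E17_recur_bar}.

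Nevertheless, the proposal does not prove the statement, and you concede this yourself: the exactness step --- that after removing the inherited gcds and the propagated factor $z_{l,m}$, the barred updates $\bar f^{(1)},\bar f^{(2)}$ share \emph{no} further common factor, i.e.\ $\mathrm{Res}_w(\bar f^{(1)},\bar f^{(2)})\not\equiv 0$ uniformly over the lattice --- is left as an unproven genericity assertion. That is not a technical loose end; it is the entire content of Conjecture \ref{C:E17}, which upgrades the divisibility $z_{l,m}G_{l+1,m}G_{l,m+1}\mid\gcd_{l+1,m+1}$ (itself needing a coprimality argument you only gesture at when combining your lower bound with Proposition \ref{T:gcd}) to an equality \eqref{E:E17gcd}. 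So, judged as a proof, the proposal has a genuine gap. For context, the paper does not close this gap either: the statement is presented there as a conjecture, supported by the observed reduced-degree pattern \eqref{E:E17_recur_bar} and by numerically testing the recurrence; your submission is best read as a proved lower-bound mechanism plus an honest reduction of the conjecture to a single coprimality claim, with the exponential-growth consequences remaining conditional exactly as they are in the paper.
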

Using this conjecture, we have the following corollary
\begin{corollary}
\label{C:E17_expon_growth}
Given Case II boundary values, then equation $E17$ has exponential growth along diagonals, i.e. going from $(l,m)$ to $(l+1,m+1)$.
\end{corollary}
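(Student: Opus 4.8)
The plan is to grant the enabling Conjecture~\ref{C:E17} and to convert the multiplicative $\gcd$ recurrence \eqref{E:E17_predict_gcd} into an additive recurrence for the reduced degrees $\overline{d}_{l,m}$, from which exponential growth along the diagonal can be read off directly. Since the conjecture identifies $\gcd_{l,m}(w)$ with $G_{l,m}(w)$ up to a lattice-dependent constant, we have $g_{l,m}=\deg(\gcd_{l,m})=\deg(G_{l,m})$ for Case~II boundary values, so the whole argument reduces to a degree count followed by the analysis of a linear partial difference equation.

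First I would take degrees on both sides of \eqref{E:E17_predict_gcd}, obtaining
\begin{equation*}
g_{l+1,m+1}=\deg(z_{l,m})+g_{l+1,m}+g_{l,m+1}.
\end{equation*}
Because the Case~II boundary data are affine with both projective components of equal degree, Remark~\ref{R:samedeg} applies and the equal-degree property propagates, so $\deg(z_{l,m})=d_{l,m}$ and the ambient degrees satisfy \eqref{E:samedeg}, namely $d_{l+1,m+1}=d_{l,m}+d_{l+1,m}+d_{l,m+1}$. Subtracting the $g$-recurrence from \eqref{E:samedeg} and using $\overline{d}=d-g$ cancels the $d_{l,m}$ term and reproduces exactly \eqref{E:E17_recur_bar},
\begin{equation*}
\overline{d}_{l+1,m+1}=\overline{d}_{l+1,m}+\overline{d}_{l,m+1}.
\end{equation*}

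Next I would analyse this recurrence. After the reindexing $i=l+1$, $j=m+1$ it is the Pascal recurrence $\overline{d}_{i,j}=\overline{d}_{i-1,j}+\overline{d}_{i,j-1}$ with all terms nonnegative. With the Case~II axis data $\overline{d}_{l,0}=\overline{d}_{0,m}=1$ the solution is $\overline{d}_{l,m}=\binom{l+m}{m}$, so along the diagonal $\overline{d}_{l,l}=\binom{2l}{l}\sim 4^{l}/\sqrt{\pi l}$ by Stirling, giving nonvanishing algebraic entropy $\epsilon=\log 4$. To keep the conclusion robust to the transient vertices near the axes (where the factorization pattern \eqref{E:E17gcd} has not yet set in, and the binomial formula may hold only up to lower-order corrections), I would instead argue by a doubling lower bound in the style of the proof of Theorem~\ref{MainNon}: since all reduced degrees are nonnegative, \eqref{E:E17_recur_bar} gives $\overline{d}_{l,l+1}=\overline{d}_{l,l}+\overline{d}_{l-1,l+1}\ge\overline{d}_{l,l}$ and likewise $\overline{d}_{l+1,l}\ge\overline{d}_{l,l}$, whence $\overline{d}_{l+1,l+1}\ge 2\,\overline{d}_{l,l}$ and therefore $\overline{d}_{l,l}\ge 2^{\,l-l_0}\,\overline{d}_{l_0,l_0}$ for $l\ge l_0$, where $l_0$ is the first diagonal vertex beyond which \eqref{E:E17_recur_bar} is valid. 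This already forces exponential growth along diagonals.

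The derivation is essentially routine once Conjecture~\ref{C:E17} is granted, so the real obstacle is justifying the two inputs to the degree count rather than solving the recurrence: establishing that the equal-degree property $\deg(x_{l,m})=\deg(z_{l,m})=d_{l,m}$ genuinely persists for $E17$ with Case~II data (so that $\deg(z_{l,m})=d_{l,m}$ may be substituted), and that the ambient relation \eqref{E:samedeg} holds with no unexpected degree drop in the numerator of the projective rule. Both are supported numerically for the equations of Table~\ref{tab:rules}, but turning this into a clean statement, and controlling the handful of transient vertices before the factorization \eqref{E:E17gcd} stabilises, is where the care is needed.
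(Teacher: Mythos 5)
Your proposal is correct and takes essentially the same route as the paper: granting Conjecture~\ref{C:E17}, you take degrees in \eqref{E:E17_predict_gcd} and subtract from the ambient relation \eqref{E:samedeg} to recover \eqref{E:E17_recur_bar}, then use nonnegativity of reduced degrees to get the doubling bound $\overline{d}_{l+1,m+1}\ge 2\,\overline{d}_{l,m}$, which is exactly the paper's one-line argument (your version just makes the monotonicity step and the handling of transient vertices explicit). One minor aside: the closed form $\binom{l+m}{m}$ you float and then discard is not the right one (the paper records $\overline{d}_{l,m}=\frac{2l+m}{l+m}\binom{l+m}{l}$ for linear initial data), but since your final argument rests only on the doubling bound, this does not affect correctness.
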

\begin{proof}
Using the recursive formula~\eqref{E:E17_recur_bar}, we have $\overline{d}_{l+1,m+1}\geq 2 \overline{d}_{l,m}$. This shows that $\overline{d}_{l,m}$ grows exponentially along the diagonal.
\end{proof}
%\begin{remark}

We note that in the case where initial values are linear, for $l,m>0$ one can prove
\[
\overline{d}_{l,m}=\frac{2l+m}{l+m}\binom{l+m}{l}.
\]
The proof was done by using induction.
We note that the Lotka-Volterra equation which is given as follows
\begin{equation}
\label{E:Lk}
u_{{2}} \left( 1+u \right) -u_{{1}} \left( 1+u_{{12}} \right)=0
\end{equation}
is a special case of $E17$, however the Lotka-Volterra equation has the vanishing entropy.

%\end{remark}

\section{Application to staircase boundary values for lattice equations and to maps from periodic staircase reductions}

Thus far, we have considered our lattice equations with corner boundary values which, upon iteration, gives all
values of the field in the positive quadrant of the lattice.
An alternative and commonly-considered case is when the boundary values are considered on a staircase
that zig-zags through the vertices of $\ZZ^2$.  From this initial staircase, if well-posed, use of the lattice
rule allows values of the field variable to be calculated on a sequence of parallel disjoint staircases which cover
the lattice. Entropy of the lattice equation in this case has been addressed in \cite{Viallet} .
If we now consider a periodic staircase or a travelling-wave solution to the lattice equation, we
can reduce the integrable lattice equation, or partial difference equation, to an ordinary difference equation or map.
Entropy of birational maps has been considered in \cite{extended_Hietarinta, RGWM16, Kamp_growth, Viallet2015}, including those obtained from reduction of lattice maps.

The purpose of this section is to show how Theorem \ref{Main} and the degree relation \eqref{E:degreeRelation} lend
themselves to these other contexts as particular cases of our corner boundary value problem and reproduce degree growth results obtained previously in these contexts. This lends
further weight
to our Conjecture \ref{C:Integrable_Equations} on which Theorem \ref{Main} relies.

\subsection{Staircase boundary values}
The degree growth for staircase initial conditions was considered in \cite{Halburd09HowtoDetect,HydonViallet,Viallet}.
We consider firstly the simplest case of a staircase with unit horizontal
and vertical step sizes and we choose its orientation to be decreasing to the left and downwards -- see Figure \ref{F:staircase}.

\definecolor{uuuuuu}{rgb}{0.26666666666666666,0.26666666666666666,0.26666666666666666}
\definecolor{ffqqqq}{rgb}{1.,0.,0.}
\definecolor{wqwqwq}{rgb}{0.3764705882352941,0.3764705882352941,0.3764705882352941}
\definecolor{qqqqff}{rgb}{0.,0.,1.}
\begin{figure}
\centering
\begin{tikzpicture}[line cap=round,line join=round,>=triangle 45,x=1.5cm,y=1.5cm]
\clip(-1.7031885162350906,2.330198324914398) rectangle (7.962926807156245,8.774275207175307);
\draw [line width=1.2pt] (3.,4.) -- (3.,8.774275207175307);
\draw [line width=1.2pt,domain=3.0:7.962926807156245] plot(\x,{(--16.-0.*\x)/4.});
\draw [dash pattern=on 2pt off 2pt] (0.,8.)-- (3.,8.);
\draw [dash pattern=on 2pt off 2pt] (1.,7.)-- (3.,7.);
\draw [dash pattern=on 2pt off 2pt] (2.,6.)-- (3.,6.);
\draw [line width=1.2pt,color=qqqqff] (0.,8.)-- (0.,7.);
\draw [line width=1.2pt,color=qqqqff] (0.,7.)-- (1.,7.);
\draw [line width=1.2pt,color=qqqqff] (1.,7.)-- (1.,6.);
\draw [line width=1.2pt,color=qqqqff] (1.,6.)-- (2.,6.);
\draw [line width=1.2pt,color=qqqqff] (2.,6.)-- (2.,5.);
\draw [line width=1.2pt,color=qqqqff] (2.,5.)-- (3.,5.);
\draw [line width=1.2pt,color=qqqqff] (3.,5.)-- (3.,4.);
\draw [line width=1.2pt,color=qqqqff] (3.,4.)-- (4.,4.);
\draw [line width=1.2pt,color=qqqqff] (4.,4.)-- (4.,3.);
\draw [line width=1.2pt,color=qqqqff] (4.,3.)-- (5.,3.);
\draw (4.,4.) -- (4.,8.774275207175307);
\draw (5.,4.) -- (5.,8.774275207175307);
\draw (6.,4.) -- (6.,8.774275207175307);
\draw (7.,4.) -- (7.,8.774275207175307);
\draw [domain=3.0:7.962926807156245] plot(\x,{(--20.-0.*\x)/4.});
\draw [domain=3.0:7.962926807156245] plot(\x,{(--24.-0.*\x)/4.});
\draw [domain=3.0:7.962926807156245] plot(\x,{(--28.-0.*\x)/4.});
\draw [domain=3.0:7.962926807156245] plot(\x,{(--32.-0.*\x)/4.});
\draw [dash pattern=on 2pt off 2pt] (1.,7.) -- (1.,8.774275207175307);
\draw [dash pattern=on 2pt off 2pt] (2.,6.) -- (2.,8.774275207175307);
\draw [dash pattern=on 2pt off 2pt,domain=5.0:7.962926807156245] plot(\x,{(--6.-0.*\x)/2.});
\draw [dash pattern=on 2pt off 2pt] (0.,8.) -- (0.,8.774275207175307);
\draw [color=qqqqff] (0.,8.)-- (-0.7,8.);
\draw (1.925672960391623,8.253541721740083) node[anchor=north west] {\tiny $(7, 0)$};
\draw (2.9345940884223682,7.260893515129185) node[anchor=north west] {\tiny $(7, 0)$};
\draw (3.9272422950332633,6.251972387098436) node[anchor=north west] {\tiny $(7, 0)$};
\draw (4.936163423064008,5.259324180487539) node[anchor=north west] {\tiny $(7, 0)$};
\draw (5.928811629674903,4.25040305245679) node[anchor=north west] {\tiny $(7, 0)$};
\draw (6.937732757705649,3.2577548458458927) node[anchor=north west] {\tiny $(7, 0)$};
\draw (2.9345940884223682,8.253541721740083) node[anchor=north west] {\tiny $(17, 4)$};
\draw (3.9272422950332633,7.260893515129185) node[anchor=north west] {\tiny $(17, 4)$};
\draw (4.936163423064008,6.251972387098436) node[anchor=north west] {\tiny $(17, 4)$};
\draw (5.928811629674903,5.259324180487539) node[anchor=north west] {\tiny $(17, 4)$};
\draw (6.937732757705649,4.25040305245679) node[anchor=north west] {\tiny $(17, 4)$};
\draw (3.9272422950332633,8.253541721740083) node[anchor=north west] {\tiny $(41, 20)$};
\draw (4.936163423064008,7.260893515129185) node[anchor=north west] {\tiny $(41, 20)$};
\draw (5.928811629674903,6.251972387098436) node[anchor=north west] {\tiny $(41, 20)$};
\draw (6.937732757705649,5.259324180487539) node[anchor=north west] {\tiny $(41, 20)$};
\draw (4.936163423064008,8.253541721740083) node[anchor=north west] {\tiny $(99, 68)$};
\draw (5.928811629674903,7.260893515129185) node[anchor=north west] {\tiny $(99, 68)$};
\draw (6.937732757705649,6.251972387098436) node[anchor=north west] {\tiny $(99, 68)$};
\draw [color=ffqqqq,domain=-1.7031885162350906:7.962926807156245] plot(\x,{(--40.-4.*\x)/4.});
\draw [color=ffqqqq,domain=-1.7031885162350906:7.962926807156245] plot(\x,{(--16.-2.*\x)/2.});
\draw [color=ffqqqq,domain=-1.7031885162350906:7.962926807156245] plot(\x,{(--21.-3.*\x)/3.});
\draw [color=ffqqqq,domain=-1.7031885162350906:7.962926807156245] plot(\x,{(--36.-4.*\x)/4.});
\draw [dash pattern=on 2pt off 2pt] (6.,4.) -- (6.,2.330198324914398);
\draw [dash pattern=on 2pt off 2pt] (7.,4.) -- (7.,2.330198324914398);
\draw (-0.07589637425001727,8.253541721740083) node[anchor=north west] {\tiny $(1, 0)$};
\draw (0.9330247537807282,7.260893515129185) node[anchor=north west] {\tiny $(1, 0)$};
\draw (1.925672960391623,6.251972387098436) node[anchor=north west] {\tiny $(1, 0)$};
\draw (2.9345940884223682,5.259324180487539) node[anchor=north west] {\tiny $(1, 0)$};
\draw (3.9272422950332633,4.25040305245679) node[anchor=north west] {\tiny $(1, 0)$};
\draw (4.936163423064008,3.2577548458458927) node[anchor=north west] {\tiny $(1, 0)$};
\draw (0.9330247537807282,8.253541721740083) node[anchor=north west] {\tiny $(3, 0)$};
\draw (1.925672960391623,7.260893515129185) node[anchor=north west] {\tiny $(3, 0)$};
\draw (2.9345940884223682,6.251972387098436) node[anchor=north west] {\tiny $(3, 0)$};
\draw (3.9272422950332633,5.259324180487539) node[anchor=north west] {\tiny $(3, 0)$};
\draw (4.936163423064008,4.25040305245679) node[anchor=north west] {\tiny $(3, 0)$};
\draw (5.928811629674903,3.2577548458458927) node[anchor=north west] {\tiny $(3, 0)$};
\draw (-0.46644648832643487,7.000526772411573) node[anchor=north west] {\tiny $(1, 0)$};
\draw (0.5262017182844598,6.007878565800675) node[anchor=north west] {\tiny $(1, 0)$};
\draw (1.5351228463152053,4.998957437769926) node[anchor=north west] {\tiny $(1, 0)$};
\draw (2.5277710529261,4.006309231159029) node[anchor=north west] {\tiny $(1, 0)$};
\draw (3.5366921809568455,2.99738810312828) node[anchor=north west] {\tiny $(1, 0)$};
\draw [dash pattern=on 2pt off 2pt] (5.,4.)-- (5.,3.);
\draw [line width=1.2pt,color=qqqqff] (5.,3.) -- (5.,2.330198324914398);
\begin{scriptsize}
\draw [fill=black] (0.,8.) circle (1.0pt);
\draw [fill=qqqqff] (0.,7.) circle (1.0pt);
\draw [fill=qqqqff] (1.,7.) circle (1.0pt);
\draw [fill=qqqqff] (1.,6.) circle (1.0pt);
\draw [fill=qqqqff] (2.,6.) circle (1.0pt);
\draw [fill=qqqqff] (2.,5.) circle (1.0pt);
\draw [fill=qqqqff] (3.,5.) circle (1.0pt);
\draw [fill=black] (3.,4.) circle (2.5pt);
\draw [fill=qqqqff] (4.,4.) circle (1.0pt);
\draw [fill=qqqqff] (4.,3.) circle (1.0pt);
\draw [fill=qqqqff] (5.,3.) circle (1.0pt);
\draw [fill=black] (3.,8.) circle (1.0pt);
\draw [fill=wqwqwq] (7.,4.) circle (1.0pt);
\draw [fill=wqwqwq] (3.,7.) circle (1.0pt);
\draw [fill=wqwqwq] (3.,6.) circle (1.0pt);
\draw [fill=black] (7.,3.) circle (1.0pt);
\draw [fill=black] (7.,2.) circle (1.0pt);
\draw [fill=black] (1.,8.) circle (1.0pt);
\draw [fill=black] (2.,8.) circle (1.0pt);
\draw [fill=wqwqwq] (5.,4.) circle (1.0pt);
\draw [fill=wqwqwq] (6.,4.) circle (1.0pt);
\draw [fill=wqwqwq] (4.,8.) circle (1.0pt);
\draw [fill=wqwqwq] (5.,8.) circle (1.0pt);
\draw [fill=wqwqwq] (6.,8.) circle (1.0pt);
\draw [fill=wqwqwq] (7.,8.) circle (1.0pt);
\draw [fill=wqwqwq] (7.,5.) circle (1.0pt);
\draw [fill=wqwqwq] (7.,6.) circle (1.0pt);
\draw [fill=wqwqwq] (7.,7.) circle (1.0pt);
\draw [fill=wqwqwq] (4.,7.) circle (1.0pt);
\draw [fill=wqwqwq] (5.,7.) circle (1.0pt);
\draw [fill=wqwqwq] (6.,7.) circle (1.0pt);
\draw [fill=wqwqwq] (4.,6.) circle (1.0pt);
\draw [fill=wqwqwq] (5.,6.) circle (1.0pt);
\draw [fill=wqwqwq] (6.,6.) circle (1.0pt);
\draw [fill=wqwqwq] (4.,5.) circle (1.0pt);
\draw [fill=wqwqwq] (5.,5.) circle (1.0pt);
\draw [fill=wqwqwq] (6.,5.) circle (1.0pt);
\draw [fill=black] (2.,7.) circle (1.0pt);
\end{scriptsize}
\draw[color=black] (-1.4,8.65) node {$\overline{d}_0$};
\draw[color=black] (-0.4,8.65) node {$\overline{d}_1$};
\draw[color=black] (0.6,8.65) node {$\overline{d}_2$};
\draw[color=black] (1.6,8.65) node {$\overline{d}_3$};
\draw [fill=uuuuuu] (6.,3.) circle (1.pt);
%\draw[color=uuuuuu] (6.107813765293262,3.225209003006191) node {$L$};
\end{tikzpicture}
\caption{A staircase of boundary values for a lattice equation is shown in blue. For this staircase,
reduced degrees are the same on each diagonal of slope $-1$ and the first four are shown in red.
The boundary values on the staircase induce the corner boundary value problem highlighted by the black dot and black axes.
At each vertex, the first entry of the $2$-tuple gives the degree of $x$ and $z$ and the second entry the degree of their $\gcd$.
Hence the reduced degree at each vertex is the difference of the first and second entries.}\label{F:staircase}
\end{figure}

The methodology of the aforementioned
references is to ascribe affine polynomials of the same degree in $w$ for the $x$ and $z$ values at each point of the staircase, as we did with Case II corner boundary values,
with the generic assumption that the polynomials at each site are coprime. By symmetry, it is seen that on each diagonal of slope $-1$ parallel to the initial staircase,
we can assign a degree $d_i:=d_{l,-l+i}$, $i \ge 0$ using \eqref{E:samedeg}, which gives the degree of $x$ and $z$ along each diagonal.
We take $d_0=1$ for the initial staircase and see that $d_1=1$, $d_2=3$, $d_3=7$ etc.
Assuming our Conjecture \ref{C:Integrable_Equations} for the equations of Table 2, a nontrivial $\gcd$ only arises via the factor that appears over each
$2 \times 2$ lattice block of the iterated lattice rule and the degree of the $\gcd$ follows \eqref{eq:crucial}.
Symmetry again shows that we can define the reduced degrees $\overline{d}_i=\overline{d}_{l,-l+i}$, $i \ge 0$ for each diagonal with
$\overline{d}_i=d_i$ for $i=0,1,2,3$.
The first genuinely-reduced degree arises when $i=4$ since a $\gcd$ of degree $4$ arises as anticipated by \eqref{eq:crucial}, so
$$ \overline{d}_4=d_3-4=17-4=13.$$
To proceed, we use \eqref{E:degreeRelation} adapted to a corner created at one edge of the original staircase, as indicated in Figure \ref{F:staircase}.
From \eqref{E:degreeRelation} with our special case $\overline{d}_i=\overline{d}_{l,-l+i}$ reflecting the equality of degree along each diagonal
of slope $-1$,
we have the fourth order linear ordinary difference equation:
\begin{equation} \label{11staircase}
\overline{d}_{i+4}-2\, \overline{d}_{i+3} + 2\, \overline{d}_{i+1} - \overline{d}_{i} =0, \quad i \ge 0.
\end{equation}
The associated characteristic equation is
\begin{equation} \label{eq:linrecur1}
\lambda^4-2\,\lambda^3+2\,\lambda-1= (\lambda+1)\,(\lambda -1 )^3=0
\end{equation}
with general solution:
$$ \overline{d}_i= c_1 (-1)^i + c_2 + c_3 i + c_4 i^2,\quad i \ge 0.$$
The four initial conditions given above mean that we have
\begin{equation} \label{eq:linrecur2}
 \overline{d}_i = 1-i+i^2 =(1,1,3,7,13,21, 31, 43,57,73,91\ldots).
\end{equation}
This sequence with quadratic growth has been identified numerically for staircases like that of Figure \ref{F:staircase} for various equations including (non-autonomous) $Q_V$, $SG$
and others (\cite[Section 6]{Q5},
\cite[Table 1]{HydonViallet}, \cite[Section 8, 9]{Viallet} \cite{LGS}).
The method involves fitting in each case a rational univariate generating function $F(s)=\sum_{i=0}^{\infty} \overline{d}_i\, s^i$
to the first few computer-generated values of $\overline{d}_i$. Finding a rational generating function is equivalent to the sequence $(\overline{d}_i)$
satisfying a linear recurrence with integer coefficients.

Our working above shows that the recurrence \eqref{eq:linrecur1} and solution \eqref{eq:linrecur2}
are an automatic consequence for the lattice equations covered by Theorem \ref{Main} with boundary values given on the staircase of Figure \ref{F:staircase}.

The staircase of Figure \ref{F:staircase}  is built from the head-to-foot repetition of the finite generating staircase of one step to the right and one step down.
More generally, consider any staircase that is constructed from the head-to-foot repetition of
a finite staircase, which is assumed to be well-posed.
Without loss of generality, we assume the generator staircase starts with a horizontal step and ends with a vertical step and spans a horizontal width $q>1$ from left to right
on the lattice and a vertical height $|p|>1$ from bottom to top between its starting and ending points (i.e. $p$ positive (negative) means a rising (descending) staircase from left to right).
We call this a $(q,p)$-{\em repeating staircase} -- Figure \ref{F:staircase} has $(1,-1)$.
Viallet  \cite{Viallet} gives other simple examples, which repeat just one step of horizontal length $q$ and one of vertical height $p$.

Note the orientation of the corner is a clockwise rotation of $\pi/2$ to that we considered earlier and our south-easterly propagation
of parallel staircases entails solving for the bottom right vertex of our quad rather than the top right.  However, one checks that all the
equations of Table 2 are form-invariant under the transformations which correspond to the clockwise $\pi/2$-rotation, namely:
$$ u \mapsto u_1, u_1 \mapsto u_{12}, u_2 \mapsto u, u_{12} \mapsto u_2. $$

\subsection{Applications to mappings obtained as reductions of periodic staircases}
In this section we show that our conjectures still hold for mappings obtained reductions of lattice equations via the staircase method.
In particular, we consider the $(q,p)$ reductions where $\gcd(q,p)=1$ and $q,p>0$. We first present the $(q,p)$ reduction. Then, we give
consequences of conjectures~\eqref{C:Integrable_Equations}, \eqref{C:Non_inte} and \eqref{C:E17} for the corresponding maps.

Recall that the $(q,p)$ reduction of lattice equations gives us a $(p+q)$-dimensional map. The $(q,p)$ reduction is described as follows.
We introduce the travelling-wave ansatz:
\begin{equation} \label{periodstair}
u_{l,m}=:V_n \mbox{ where  }  n=l\,p-m\,q+1.
\end{equation}
This ansatz imposes the periodicity condition
\begin{equation} \label{periodfield}
u_{l,m}=u_{l+q,m+p}
\end{equation}
and, conversely, this periodicity condition implies that all
fields $u_{l,m}$ on the line $n=l\,p-m\,q+1$ are the same, provided $q$ and $p$ are coprime.
Hence field variables $V_n$ are now defined on each parallel
line of slope $p/q$.
We make a choice of origin $(l,m)=(0,0)$ on the lattice, at which we place $V_1$.  Since $\gcd(q,p)=1$, one can always find $0<l' \le q$ and $0\leq m' <p$ such that $l'p-m'q+1=n$
for each $n \in \{2,3,\ldots,p+q\}$. This set of points joined from left to right (increasing $n$ by $p$) and down to up (decreasing $n$ by $q$) yields the staircase of initial conditions.
The lattice equation \eqref{E1:Eq} specialises with \eqref{periodstair} to the ordinary difference equation for $V_n$:
\[
Q_n(V_n,V_{n+p},V_{n-q},V_{n+p-q},\alpha_l,\beta_m)=0,
\]
where for consistency we must impose periodicity on the parameters:
$$\alpha_l+q=\alpha_{l}, \beta_m+p=\beta_{m}.$$

By solving  $V_{n+p}$ from this equation -- which is certainly possible if $Q_{l,m}$ is multilinear in its arguments -- we obtain the map
$$F_n:(V_{n}, V_{n+1},\ldots V_{n+p+q-1})\mapsto (V_{n+1}, V_{n+2},\ldots V_{n+p+q}),$$
where the parameters are $\alpha_{l}, \beta_{m}$. For example, the $(3,2)$ reduction of a lattice equation is illustrated in Figure \ref{F:reduction_map}.
We observe there the staircase of initial conditions $V_1, V_2, V_3, V_4$ and the periodicity of it and later iterates of $V$.  The  ordinary difference equation arising from the lattice rule
is:
\[
Q_n(V_n,V_{n+2},V_{n-3},V_{n-1})=0.
\]
\definecolor{uuuuuu}{rgb}{0.26666666666666666,0.26666666666666666,0.26666666666666666}
\begin{figure}
\centering
\begin{tikzpicture}[line cap=round,line join=round,>=triangle 45,x=1.5cm,y=1.5cm]
\clip(-2.36,-0.3) rectangle (39.72,2.32);
\draw (0.,0.)-- (1.,0.);
\draw (1.,0.)-- (2.,0.);
\draw (2.,0.)-- (2.,1.);
\draw (3.,1.)-- (2.,1.);
\draw (3.,1.)-- (3.,2.);
\draw [color=blue][dash pattern=on 2pt off 2pt] (5.,0.)-- (4.,2.);
\draw [color=blue][dash pattern=on 2pt off 2pt] (3.,1.)-- (5.,0.);
\draw [color=blue][dash pattern=on 2pt off 2pt] (3.,1.)-- (4.,2.);
\draw [dash pattern=on 2pt off 2pt] (2.,0.)-- (3.,0.);
\draw [dash pattern=on 2pt off 2pt] (3.,0.)-- (4.,0.);
\draw [dash pattern=on 2pt off 2pt] (4.,0.)-- (5.,0.);
\draw [dash pattern=on 2pt off 2pt] (5.,0.)-- (6.,0.);
\draw (3.,2.)-- (5.,2.);
\draw [dash pattern=on 2pt off 2pt] (5.,2.)-- (6.,2.);
\draw [dash pattern=on 2pt off 2pt] (3.,1.)-- (3.,0.);
\draw [dash pattern=on 2pt off 2pt] (4.,2.)-- (4.,0.);
\draw [dash pattern=on 2pt off 2pt] (5.,2.)-- (5.,0.);
\draw [dash pattern=on 2pt off 2pt] (6.,2.)-- (6.,0.);
\draw [dash pattern=on 2pt off 2pt] (3.,1.)-- (6.,1.);
\begin{scriptsize}
\draw [fill=uuuuuu] (0.,0.) circle (1.0pt);
\draw[color=uuuuuu] (0.1,-0.16) node {$V_{1}$};
\draw [fill=black] (1.,0.) circle (1.0pt);
\draw[color=black] (1.08,-0.16) node {$V_{3}$};
\draw [fill=black] (2.,0.) circle (1.0pt);
\draw[color=black] (2.06,-0.16) node {$V_{5}$};
\draw [fill=black] (2.,1.) circle (1.0pt);
\draw[color=black] (2.06,1.14) node {$V_2$};
\draw [fill=black] (3.,1.) circle (1.0pt);
\draw[color=black] (3.18,1.14) node {$V_4$};
\draw [fill=black] (3.,2.) circle (1.0pt);
\draw[color=black] (3.08,2.14) node {$V_1$};
\draw [fill=black] (3.,0.) circle (1.0pt);
\draw[color=black] (3.08,-0.16) node {$V_{7}$};
\draw [fill=black] (4.,0.) circle (1.0pt);
\draw[color=black] (4.06,-0.16) node {$V_9$};
\draw [fill=black] (5.,0.) circle (1.0pt);
\draw[color=black] (5.06,-0.16) node {$V_{11}$};
\draw [fill=black] (6.,0.) circle (1.0pt);
\draw[color=black] (6.08,-0.16) node {$V_{13}$};
\draw [fill=black] (4.,2.) circle (1.0pt);
\draw[color=black] (4.04,2.14) node {$V_3$};
\draw [fill=black] (4.,1.) circle (1.0pt);
\draw[color=black] (4.18,1.14) node {$V_6$};
\draw [fill=black] (5.,2.) circle (1.0pt);
\draw[color=black] (5.06,2.14) node {$V_5$};
\draw [fill=black] (5.,1.) circle (1.0pt);
\draw[color=black] (5.16,1.14) node {$V_8$};
\draw [fill=black] (6.,2.) circle (1.0pt);
\draw[color=black] (6.1,2.14) node {$V_7$};
\draw [fill=black] (6.,1.) circle (1.0pt);
\draw[color=black] (6.22,1.14) node {$V_{10}$};
\end{scriptsize}
\end{tikzpicture}
\caption{The $(3,2)$ reduction}\label{F:reduction_map}
\end{figure}
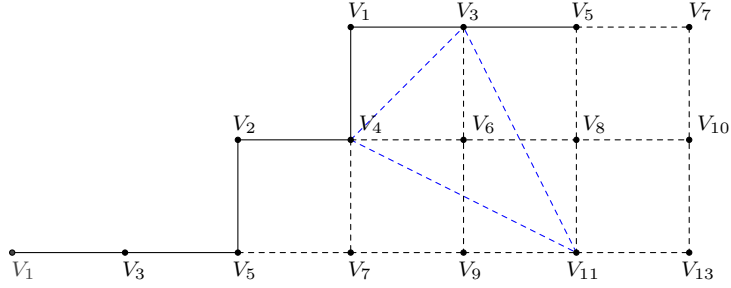

\subsection{Reductions of integrable lattice equations}
The partial difference equation \eqref{E:degreeRelation} for the reduced degrees that holds over a $2 \times 2$ lattice square
specialises via \eqref{periodstair} and \eqref{F:reduction_map} to the difference equation setting. Now
$$ \overline{d}_n:=\deg(numerator(V_n))= \deg(denominator(V_n)),
$$
and \eqref{E:degreeRelation} reduces to
the ordinary difference equation of order $2(p+q)$:
\begin{equation} \label{keydiffeq}
\bar{d}_{n+2(p+q)} = \bar{d}_n + \bar{d}_{n+p+2q} + \bar{d}_{n+2p+q} - \bar{d}_{n+q} - \bar{d}_{n+p}.
\end{equation}
The associated characteristic equation is
$$P(\lambda)=\lambda^{2(p+q)} - \lambda^{p+2q}- \lambda^{2p+q} + \lambda^q + \lambda^p - 1
= (\lambda^p - 1)\,(\lambda^q - 1)\,(\lambda^{p+q} - 1)=0. $$
%(\lambda - 1)^3\,  \Phi_{p}(\lambda)\, \Phi_{q}(\lambda)\, \Phi_{p+q}(\lambda) = 0.$$
%Here $\Phi_n$ is the $n$th cyclotomic polynomial
%$$ \Phi_n (x) = \prod_{1\le k < n: \gcd(k,n)=1} (x - e^{i \frac{2 \pi k}{n}}), $$
%whose roots are the $n$th primitive roots of unity. For $n>3$, the roots come in complex conjugate pairs.

\definecolor{ffqqqq}{rgb}{1.,0.,0.}
\definecolor{qqqqff}{rgb}{0.,0.,1.}
\begin{figure}[h]
\centering
\begin{tikzpicture}[line cap=round,line join=round,>=triangle 45,x=1.5cm,y=1.5cm]
%\clip(-4.32,-7.48) rectangle (12.92,6.32);
\clip(-1.24,-0.42) rectangle (5.72,4.54);
\draw [dash pattern=on 4pt off 4pt] (0.,0.)-- (4.,0.);
\draw [dash pattern=on 4pt off 4pt] (4.,0.)-- (4.,4.);
\draw [dash pattern=on 4pt off 4pt] (0.,4.)-- (4.,4.);
\draw [dash pattern=on 4pt off 4pt] (0.,4.)-- (0.,0.);
\draw [color=qqqqff] (4.,0.)-- (0.,2.);
\draw [color=qqqqff] (0.,2.)-- (2.,4.);
\draw [color=qqqqff] (2.,4.)-- (4.,0.);
\draw [color=ffqqqq] (0.,4.)-- (2.,0.);
\draw [color=ffqqqq] (2.,0.)-- (4.,2.);
\draw [color=ffqqqq] (0.,4.)-- (4.,2.);
\draw [dash pattern=on 4pt off 4pt] (2.,4.)-- (2.,0.);
\draw [dash pattern=on 4pt off 4pt] (0.,2.)-- (4.,2.);
\begin{scriptsize}
\draw [fill=black] (0.,0.) circle (1.0pt);
\draw[color=black] (-0.24,-0.22) node {$\overline{d}_{n+2q}$};
\draw [fill=black] (0.,2.) circle (1.0pt);
\draw[color=black] (-0.44,2.06) node {$\overline{d}_{n+q}$};
\draw [fill=black] (0.,4.) circle (1.0pt);
\draw[color=black] (-0.24,4.04) node {$\overline{d}_n$};
\draw [fill=black] (2.,4.) circle (1.0pt);
\draw[color=black] (2.0,4.2) node {$\overline{d}_{n+p}$};
\draw [fill=black] (4.,4.) circle (1.0pt);
\draw[color=black] (4.5,4.) node {$\overline{d}_{n+2p}$};
\draw [fill=black] (4.,2.) circle (1.0pt);
\draw[color=black] (4.7,2.06) node {$\overline{d}_{n+2p+q}$};
\draw [fill=black] (4.,0.) circle (1.0pt);
\draw[color=black] (4.,-0.22) node {$\overline{d}_{n+2p+2q}$};
\draw [fill=black] (2.,0.) circle (1.0pt);
\draw[color=black] (2.,-0.22) node {$\overline{d}_{n+p+2q}$};
\draw [fill=black] (2.,2.) circle (1.0pt);
%\draw[color=black] (2,2) node {$\overline{d}_{n+p+q}$};
\draw[color=black,anchor=south east] (2,2) node {$\overline{d}_{n+p+q}$};
%\draw[color=black,anchor=north west] (2,2) node {$\overline{d}_{n+p+q}$};
\end{scriptsize}
\end{tikzpicture}
\caption{The specialisation of the reduced degree formula \eqref{E:degreeRelation} for the $(q,p)$ reduction of a lattice equation.
The sum of the degrees on the blue triangle equals the sum on the red triangle.}\label{F:reduction_map}
\end{figure}
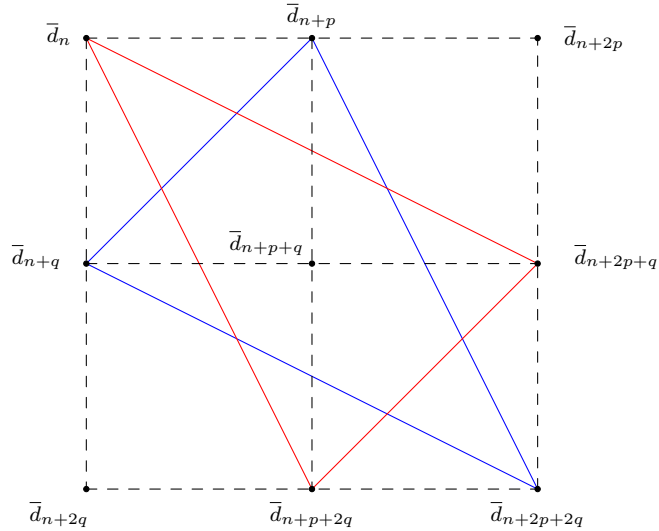

Now for $n \ge 1$
$$ x^n-1 = (x-1) \prod_{1\le k < n} (x - e^{i \frac{2 \pi k}{n}}).$$
By assumption, $p$ and $q$ are coprime so that precisely one of the set $\{q,p,q+p\}$ is even and
there is no divisor greater than one common within the set. Hence
\begin{equation} \label{cpoly}
P(\lambda)= (\lambda - 1)^3\, (\lambda+1)\, \prod_{d \ge 3 \mid \{q,p,q+p\}} \Phi_{d}(\lambda).
\end{equation}
Here $\Phi_i$ is the $i$th cyclotomic polynomial.
Because of the triple eigenvalue $1$ and single eigenvalue $-1$, the solution of \eqref{keydiffeq}  has a part that is generally quadratic in $n$ supplemented with a period-2 part:
$$  \overline{d}_i= c_1 (-1)^i + c_2 + c_3 i + c_4 i^2, i > 0.$$
If the last part of \eqref{cpoly} is present, its roots  come in complex conjugate pairs $e^{\pm\, i \frac{2 \pi k}{d}}$ with $k$ and $d$ coprime.
Each pair of roots leads to a term in the solution of \eqref{keydiffeq} of the form
$$ \overline{d}_i=C_i \cos(\frac{2 \pi i k}{d} + D_i), $$
where $C_i$ and $D_i$ are two arbitary constants determined by initial conditions.
This part of the solution is periodic: $ \overline{d}_{i+d}=\overline{d}_i$ with period $d\ge 3$.
In summary, the solution to \eqref{keydiffeq} will be linear or quadratic in $n$, supplemented by a periodic
part with period $\ge 2$ that is the $\lcm$ of $2$ together with all possible divisors $\ge 3$ of $\{q,p,q+p\}$.

The fact that the case $q=p=1$ in \eqref{keydiffeq} recovers \eqref{11staircase} for the $(1,-1)$ repeating staircase we discussed above
highlights something. The equation \eqref{keydiffeq} will be obtained whenever the  degrees
on a staircase are periodic:
\begin{equation} \label{perioddeg}
\overline{d}_{l,m}=\overline{d}_{l+q,m+p}  \implies {d}_{l,m}=d_n  \quad n:=l p - mq +1
\end{equation}
with $q$ and $p$ coprime.
This does not necessarily require the fields to be periodic on the staircase
as in the $(q,p)$ periodic reduction but certainly \eqref{periodfield} implies \eqref{perioddeg}.
A case in point is when a staircase is $(q,p)$ repeating as defined above and the degrees of the fields assigned on the staircase respect this repetition. In fact, saying that \eqref{perioddeg} holds on a staircase forces its geometry to be $(q,p)$ repeating as a byproduct.

So we have proved:

\begin{theorem} \label{reduction}
Suppose Conjecture \eqref{C:Integrable_Equations} is true
for all the lattice equations given in Table \ref{tab:commonfactor} and (non-autonomous) $Q_V$.
Consider a periodic staircase of boundary values on which the assigned degrees have the periodicity property \eqref{perioddeg}, which includes the case of (non-autonomous) maps obtained from the $(q,p)$ periodic reduction of the lattice equation.
The resulting degree $\overline{d}_n$ as a function of n is the sum of a polynomial part, at most quadratic in $n$,
and an additional (necessarily bounded) periodic part in $n$ .
\end{theorem}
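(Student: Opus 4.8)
The plan is to collapse the two-dimensional degree relation into a scalar linear recurrence and then read off its solution from the roots of the characteristic polynomial. First I would invoke Theorem~\ref{Main}, which (under Conjecture~\ref{C:Integrable_Equations}) furnishes the constant-coefficient partial difference equation \eqref{E:degreeRelation} for the reduced degrees $\overline{d}_{l,m}$ over every $2\times 2$ lattice block. Imposing the periodicity hypothesis \eqref{perioddeg} means $\overline{d}_{l,m}$ depends only on the combination $n=lp-mq+1$; here the coprimality $\gcd(q,p)=1$ is what guarantees, as in \eqref{periodstair}, that a single value $\overline{d}_n$ is consistently assigned to every lattice point on the line $n=lp-mq+1$. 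Substituting this dependence into \eqref{E:degreeRelation} and tracking how each shift $(l\pm1,m),(l,m\pm1),(l\pm1,m\pm1)$ translates into a shift of $n$ by a fixed integer combination of $p$ and $q$ collapses the relation to the ordinary difference equation \eqref{keydiffeq} of order $2(p+q)$. This step is a direct, if bookkeeping-heavy, substitution.

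Next I would factor the characteristic polynomial. Setting $\overline{d}_n=\lambda^n$ in \eqref{keydiffeq} gives $P(\lambda)=(\lambda^p-1)(\lambda^q-1)(\lambda^{p+q}-1)$, so every root lies on the unit circle. The key is the multiplicity bookkeeping via the cyclotomic factorisation $\lambda^k-1=\prod_{d\mid k}\Phi_d(\lambda)$. The factor $\Phi_1=\lambda-1$ divides all three binomials, giving a triple root $\lambda=1$; the factor $\Phi_2=\lambda+1$ divides $\lambda^k-1$ precisely when $k$ is even, and since $p,q$ are coprime exactly one of $\{p,q,p+q\}$ is even, so $\lambda=-1$ is a simple root; and for $d\ge 3$, such a $d$ can divide at most one of $p,q,p+q$, since dividing two of them would force $d\mid\gcd(p,q)=1$. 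Hence each $\Phi_d$ with $d\ge 3$ occurs at most once. This yields the factorisation \eqref{cpoly}, in which every root other than $\lambda=1$ is simple and every root lies on the unit circle.

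Finally I would apply the standard solution theory for constant-coefficient linear recurrences: the general solution is a combination of terms $n^{j}\lambda_0^{\,n}$ with $0\le j<\operatorname{mult}(\lambda_0)$. The triple root $\lambda=1$ contributes the span $\{1,n,n^2\}$, that is, a polynomial in $n$ of degree at most two. Every other root is simple with $|\lambda_0|=1$, so it contributes a bounded term: $\lambda=-1$ gives a period-$2$ contribution $c\,(-1)^n$, while each conjugate pair of primitive $d$-th roots of unity gives a real bounded oscillation of period $d$. Being a finite sum of bounded periodic sequences, the non-polynomial part is itself periodic, with period equal to the $\lcm$ of $2$ together with all divisors $d\ge 3$ of $\{q,p,q+p\}$, and in particular bounded. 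Adding the two pieces gives exactly the asserted form: a polynomial of degree at most two plus a bounded periodic part.

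The crux — and the only place the hypothesis $\gcd(q,p)=1$ genuinely does work — is the multiplicity count that forces every unit-circle root distinct from $1$ to be simple. Were some $\Phi_d$ with $d\ge 2$ to divide two of the three binomials, $P$ would acquire a repeated root on the unit circle and the solution would contain a secular term such as $n\cos(2\pi k n/d)$, which grows linearly and would spoil the boundedness of the periodic part. Thus the main work is verifying the factorisation \eqref{cpoly} carefully; once the root multiplicities are pinned down, the structure of the solution follows immediately from the elementary theory of linear recurrences.
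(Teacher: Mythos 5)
Your characteristic-polynomial analysis --- the factorisation \eqref{cpoly}, the multiplicity count forced by $\gcd(p,q)=1$ (triple root at $1$, simple root at $-1$, each $\Phi_d$ with $d\ge 3$ at most once), and the resulting ``quadratic polynomial plus bounded periodic part'' structure --- is correct and coincides with the paper's argument; you in fact justify the multiplicity bookkeeping more explicitly than the paper does. The gap is in your first step. With the ansatz \eqref{periodstair}, $n=lp-mq+1$, the shifts translate as $(l+1,m)\mapsto n+p$, $(l,m+1)\mapsto n-q$, and crucially $(l+1,m+1)\mapsto n+p-q$, \emph{not} $n+p+q$. Substituting literally into \eqref{E:degreeRelation} therefore yields
\begin{equation*}
\overline{d}_{n+p-q}=\overline{d}_{n+p}+\overline{d}_{n-q}+\overline{d}_{n-p+q}-\overline{d}_{n+q}-\overline{d}_{n-p},
\end{equation*}
whose characteristic polynomial is $(\lambda^{q}-1)(\lambda^{p}-1)(\lambda^{p}-\lambda^{q})$. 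For the $(1,1)$ reduction this is the vacuous identity $0=0$ (so it cannot recover \eqref{11staircase}), and in general it carries the factor $\lambda^{p}-\lambda^{q}$ in place of $\lambda^{p+q}-1$. So the ``direct, bookkeeping-heavy substitution'' you describe does not produce \eqref{keydiffeq}.

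The missing idea is a change of orientation. The $(q,p)$ reduction propagates south-easterly: the map solves for $V_{n+p}=u_{l+1,m}$, the \emph{bottom-right} vertex of the quad, whereas Theorem \ref{Main} and \eqref{E:degreeRelation} were established for evolution towards the top-right vertex. One must therefore apply the degree relation in the clockwise-rotated frame, which is legitimate because the equations of Table \ref{tab:commonfactor} are form-invariant under $u\mapsto u_1$, $u_1\mapsto u_{12}$, $u_2\mapsto u$, $u_{12}\mapsto u_2$, as the paper notes at the end of its staircase discussion. The rotated relation reads
\begin{equation*}
\overline{d}_{l+1,m-1}=\overline{d}_{l+1,m}+\overline{d}_{l,m-1}+\overline{d}_{l-1,m+1}-\overline{d}_{l,m+1}-\overline{d}_{l-1,m},
\end{equation*}
and under \eqref{periodstair} its new vertex $(l+1,m-1)$ sits at $n+p+q$; substituting and shifting $n$ by $p+q$ gives exactly \eqref{keydiffeq} --- this is the ``sum on the blue triangle equals sum on the red triangle'' relation in the paper's figure for the reduction. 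Once \eqref{keydiffeq} is in hand, the remainder of your proof goes through verbatim and matches the paper.
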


Note the presence or absence of the periodic part can be inferred from taking the first or second difference of $\overline{d}_n$:
if the first (second) difference is constant, then $\overline{d}_n$ is linear (quadratic) in $n$ with no periodic part.
Otherwise, the first or second difference will be itself a periodic function of $n$.

We note that numerous numerical studies \cite{HietVial,HydonViallet, Kamp_growth, Viallet} of individual $(q,p)$ reductions of lattice equations relevant to Theorem~\ref{reduction}  (e.g. $ (q,p)=(1,1),\  (1,2),\  (1,3)$)
have exhibited the linear or quadratic degree growth, sometimes with an additional periodic part as evidenced by a non-constant second difference.
Our point here is that the truth of conjecture \eqref{C:Integrable_Equations} will guarantee this, uniformly in $q$ and $p$.

%This conjecture gives us the  reduced version of recursive formula \eqref{E:degreeRelation}.
%It helps to prove polynomial growth of reductions of  integrable equations given in Table 1 excepts %for KdV and MX.
%We have checked  this conjecture for $(2,1), (3,1), (3,2), (4,3)$ reductions of many integrable %
%equations  in this paper for $i<15$ for the first two reductions and $i<30$ for the latter cases.

\subsection{Reductions of non-integrable lattice equations}
We also apply our results in section 5 to  $(q,p)$ reductions of non-integrable lattice equations.

 For equations with `quadratic' factorization (given in Table 3), the recurrence~\eqref{E:Non_actual_degree} becomes as follows
\begin{equation}
\label{E:non_map_degree}
\overline{d}_{n+2(p+q)} = \overline{d}_{n+p+q} + \overline{d}_{n+p+2q} + \overline{d}_{n+2p+q} - \overline{d}_{n+q} - \overline{d}_{n+p}.
\end{equation}
This give us the corresponding characteristic equation
\begin{equation}
\label{E:non_charac}
(\lambda^{p+q}-1)(\lambda^q-\lambda^{q-p}-1)=0.
\end{equation}

 For equation E17, the recursive formula~\eqref{E:E17_recur_bar} reduces to the  following equation
\begin{equation}
\label{E:E17_recur_bar_map}
\overline{d}_{n+2p+2q}=\overline{d}_{n+2p+q}+\overline{d}_{n+p+2q}.
\end{equation}
The characteristic equation of this linear recurrence is
\begin{equation}
\label{E:non_charac_E17}
\lambda^q-\lambda^{q-p}-1=0.
\end{equation}

 We note that for $q>p>0$, equation~\eqref{E:non_charac_E17} and hence
 equation~\eqref{E:non_charac} have at least one real root in $(1,2)$. This implies that the sequence $\overline{d}_i$ grows exponentially for both cases.

\section{Discussion}
We have presented two conjectures that have been used to obtain some results of
 growth of degrees of some integrable and non-integrable  lattice equations including a non-autonomous version of
$Q_V$. In particular, subject to a ocnjecture we have shown that a non-autonomous version of $Q_V$ which has been introduced recently in \cite{LGS} has  vanishing entropy.  Given an autonomous equation or a non-autonomous equation on the square, by looking at the factorization locally at the top right corner of
any $(2,2)$ square, one might be able to predict the integrability in the sense of having vanishing entropy. However, the question here is
`how big is that factor  in order to have vanishing entropy'. We have given some examples in which the algebraic entropy is vanishing. It seems that for these equations, the common factor needs to be `quartic' in terms of the off diagonal variables of the first square. In addition, this factor needs to have the divisible property, i.e. $A_{2,2}|A_{3,3}$.

Furthermore, we also gave some examples where the common factor is `quadratic' in terms of the off diagonal variables of the first square. These equations
turn out to be non-integrable as they have non-vanishing entropy. Therefore, it is worth studying this problem in the future.

In addition, there are some integrable equations where the factorization behaves in a more complicated way such as Tzitzeica and Lotka-Volterra equations. With these equations we need to go beyond the $2\times 2$ square. We have not been able to provide any recursive formula of the gcd for these equations. Thus, it might be interesting to investigate this problem further. Furthermore, there are  non-linear affine equations which are integrable in the sense of possessing a Lax pair \cite{Saha}. It would be
worth to study growth of degrees of these equations  as well.

\section*{Acknowledgements}
This work was supported by the Australian Research Council. We would like to thank  Joshua Capel for useful discussions.

\section*{Appendix}
In this Appendix, we give a factor $A_{l+1,m+1}$ for the autonomous and  non-autonomous $Q_V$ equations \cite{LGS, Q5} -- see Figure 4.  We distinguish $4$ cases  for the non-autonomous equation
$Q_V^{non}$ of \eqref{NonQV} based on the parity of $l$ and $m$. We denote $x_{l,m-1}=n_1, z_{l,m-1}=d_1, x_{l-1,m}=n_2$
and $d_{n-1,m}=d_2$.
The factor at the point $(2,2)$ of $Q_V$ is given by cf.  \cite{RobertsTran14,Q5, Viallet2015}
\begin{align*}
A_:&= \left( p_{{1,0}}p_{{6}}-p_{{2,0}}p_{{3,0}}-p_{{2,0}}p_{{4,0}}+p_{{2,0}}
p_{{5,0}} \right) {n_1}^{2}n_2d_2
+ \left( p_{{2,0}
}p_{{6,0}}-p_{{3,0}}p_{{4,0}} \right) n_1^2d_2^2
\\&
\quad
+\left( p_{{1,0}}p_{{5,0}}-{p_{{2,0}}}^{2} \right) n_1^2n_2^2
 +\left( p_{{1,0}}p_{{6,0}}-p_{{2,0}}p_{{3,0}}-p_{{2,0}}p_{{4,0}}+p_{{2,0}}p_{{5,0}}
 \right) n_1n_2^2d_1
 \\&
\quad
 + \left( p_{{1,0}}p_{{7,0}
}-{p_{{3,0}}}^{2}-{p_{{4,0}}}^{2}+{p_{{5,0}}}^{2} \right)n_1n_2d_1d_2
+ \left( p_{{2,0}}p_{{6,0}}-p_{{3}}p_{{4,0}} \right)n_2^2d_1^2
\\
&\quad
+
\left( p_{{2,0}}p_{{7,0}}-p_{{3,0}}p_{{6,0}}-p_{{4,0
}}p_{{6,0}}+p_{{5,0}}p_{{6,0}} \right) n_1d_1d_2^2+ \left( p_
{{5,0}}p_{{7,0}}-{p_{{6,0}}}^{2} \right)d_1^2d_2^2
\\
&
\quad
+ \left( p_{{2,0}}p_{{7,0}}-p_{{3,0}}p_{{6,0}}-p_{{4,0}}p_{{6,0}}+p_{
{5,0}}p_{{6,0}} \right) n_2d_1^2d_2.
\end{align*}
The factor at the top right corner of any $2\times 2$ square of the non-autonomous $Q_V^{non}$ is given as follows.
\begin{itemize}
\item Case 1 where $(l-1,m-1)=(0,0)\mod 2$, we have
\begin{align*}
A&=\left( p_{{4,0}}p_{{7,0}}-p_{{4,3}}p_{{7,0}}-{p_{{6,0}}}^{2}+2\,p_{{6
,0}}p_{{6,3}}+{p_{{6,1}}}^{2}-2\,p_{{6,1}}p_{{6,2}}+{p_{{6,2}}}^{2}-{p
_{{6,3}}}^{2} \right) {d_{{2}}}^{2}{d_{{1}}}^{2}
\\
&\quad
+ ( p_{{2,0}}p_{{7,0}}-p_{{2,1}}p_{{7,0}}-p_{{2,2}}p_{{7,0}}+p_{{2,3}}p_{{7,0}}-p_{{3,0
}}p_{{6,0}}-p_{{3,0}}p_{{6,1}}+p_{{3,0}}p_{{6,2}}+p_{{3,0}}p_{{6,3}}-p
_{{3,2}}p_{{6,0}}
\\
&\quad
-p_{{3,2}}p_{{6,1}}+p_{{3,2}}p_{{6,2}}+p_{{3,2}}p_{{6,3}}
+p_{{4,0}}p_{{6,0}}+p_{{4,0}}p_{{6,1}}+p_{{4,0}}p_{{6,2}}+p_{{4,0}
}p_{{6,3}}-p_{{4,3}}p_{{6,0}}
-p_{{4,3}}p_{{6,1}}
\\
&\quad
-p_{{4,3}}p_{{6,2}}
-p_
{{4,3}}p_{{6,3}}-p_{{5,0}}p_{{6,0}}+p_{{5,0}}p_{{6,1}}-p_{{5,0}}p_{{6,
2}}+p_{{5,0}}p_{{6,3}}
-p_{{5,1}}p_{{6,0}}+p_{{5,1}}p_{{6,1}}-p_{{5,1}}
p_{{6,2}}
\\
&\quad
+p_{{5,1}}p_{{6,3}} ) d_{{2}}{d_{{1}}}^{2}n_{{2}}
+
 ( p_{{2,0}}p_{{6,0}}+p_{{2,0}}p_{{6,1}}+p_{{2,0}}p_{{6,2}}+p_{{2
,0}}p_{{6,3}}-p_{{2,1}}p_{{6,0}}-p_{{2,1}}p_{{6,1}}-p_{{2,1}}p_{{6,2}}
\\
&
\quad
-p_{{2,1}}p_{{6,3}}-p_{{2,2}}p_{{6,0}}-p_{{2,2}}p_{{6,1}}
-p_{{2,2}}p_{
{6,2}}-p_{{2,2}}p_{{6,3}}+p_{{2,3}}p_{{6,0}}
+p_{{2,3}}p_{{6,1}}+p_{{2,
3}}p_{{6,2}}+p_{{2,3}}p_{{6,3}}
\\
&\quad
-p_{{3,0}}p_{{5,0}}-p_{{3,0}}p_{{5,1}}-
p_{{3,2}}p_{{5,0}}-p_{{3,2}}p_{{5,1}}) {d_{{1}}}^{2}{n_{{2}}}^{2}
+ ( p_{{2,0}}p_{{7,0}}+p_{{2,1}}p_{{7,0}}+p_{{2,2}}p_{{7,0}}+p_
{{2,3}}p_{{7,0}}
\\
&
\quad
-p_{{3,0}}p_{{6,0}}+p_{{3,0}}p_{{6,1}}-p_{{3,0}}p_{{6,
2}}+p_{{3,0}}p_{{6,3}}+p_{{3,2}}p_{{6,0}}-p_{{3,2}}p_{{6,1}}+p_{{3,2}}
p_{{6,2}}-p_{{3,2}}p_{{6,3}}+p_{{4,0}}p_{{6,0}}
\\
&
\quad
-p_{{4,0}}p_{{6,1}}
-p_{{4,0}}p_{{6,2}}+p_{{4,0}}p_{{6,3}}-p_{{4,3}}p_{{6,0}}+p_{{4,3}}p_{{6,1
}}+p_{{4,3}}p_{{6,2}}-p_{{4,3}}p_{{6,3}}-p_{{5,0}}p_{{6,0}}-p_{{5,0}}p
_{{6,1}}
\\
&
\quad
+p_{{5,0}}p_{{6,2}}+p_{{5,0}}p_{{6,3}}+p_{{5,1}}p_{{6,0}}
+p_{{5,1}}p_{{6,1}}-p_{{5,1}}p_{{6,2}}-p_{{5,1}}p_{{6,3}}) d_{{1}}{d
_{{2}}}^{2}n_{{1}}
\\
&
\quad
+ ( p_{{1,0}}p_{{7,0}}+4\,p_{{2,0}}p_{{6,3}}+4
\,p_{{2,1}}p_{{6,2}}+4\,p_{{2,2}}p_{{6,1}}+4\,p_{{2,3}}p_{{6,0}}-{p_{{
3,0}}}^{2}+{p_{{3,2}}}^{2}+{p_{{4,0}}}^{2}-{p_{{4,3}}}^{2}
\\
&
\quad
-{p_{{5,0}}}^{2}
+{p_{{5,1}}}^{2}) d_{{1}}d_{{2}}n_{{1}}n_{{2}}+ ( p_{{
1,0}}p_{{6,0}}+p_{{1,0}}p_{{6,1}}+p_{{1,0}}p_{{6,2}}+p_{{1,0}}p_{{6,3}
}-p_{{2,0}}p_{{3,0}}
-p_{{2,0}}p_{{3,2}}
\\
&
\quad
+p_{{2,0}}p_{{4,0}}
+p_{{2,0}}p_{{4,3}}-p_{{2,0}}p_{{5,0}}-p_{{2,0}}p_{{5,1}}+p_{{2,1}}p_{{3,0}}+p_{{2
,1}}p_{{3,2}}-p_{{2,1}}p_{{4,0}}-p_{{2,1}}p_{{4,3}}-p_{{2,1}}p_{{5,0}}
\\
&
\quad
-p_{{2,1}}p_{{5,1}}-p_{{2,2}}p_{{3,0}}
-p_{{2,2}}p_{{3,2}}-p_{{2,2}}p_{
{4,0}}-p_{{2,2}}p_{{4,3}}+p_{{2,2}}p_{{5,0}}+p_{{2,2}}p_{{5,1}}+p_{{2,
3}}p_{{3,0}}+p_{{2,3}}p_{{3,2}}
\\
&
\quad
+p_{{2,3}}p_{{4,0}}+p_{{2,3}}p_{{4,3}}+
p_{{2,3}}p_{{5,0}}+p_{{2,3}}p_{{5,1}} ) d_{{1}}n_{{1}}{n_{{2}}}^
{2}
\\
&
\quad
+ ( p_{{2,0}}p_{{6,0}}-p_{{2,0}}p_{{6,1}}-p_{{2,0}}p_{{6,2}}+p
_{{2,0}}p_{{6,3}}+p_{{2,1}}p_{{6,0}}-p_{{2,1}}p_{{6,1}}-p_{{2,1}}p_{{6
,2}}+p_{{2,1}}p_{{6,3}}+p_{{2,2}}p_{{6,0}}
\\
&
\quad
-p_{{2,2}}p_{{6,1}}-p_{{2,2}
}p_{{6,2}}+p_{{2,2}}p_{{6,3}}+p_{{2,3}}p_{{6,0}}-p_{{2,3}}p_{{6,1}}-p_
{{2,3}}p_{{6,2}}+p_{{2,3}}p_{{6,3}}-p_{{3,0}}p_{{5,0}}+p_{{3,0}}p_{{5,
1}}
\\
&
\quad
+p_{{3,2}}p_{{5,0}}-p_{{3,2}}p_{{5,1}}) {d_{{2}}}^{2}{n_{{1}
}}^{2}+( p_{{1,0}}p_{{6,0}}-p_{{1,0}}p_{{6,1}}-p_{{1,0}}p_{{6,2}
}+p_{{1,0}}p_{{6,3}}-p_{{2,0}}p_{{3,0}}
\\
&
\quad
+p_{{2,0}}p_{{3,2}}+p_{{2,0}}p_
{{4,0}}+p_{{2,0}}p_{{4,3}}-p_{{2,0}}p_{{5,0}}+p_{{2,0}}p_{{5,1}}-p_{{2
,1}}p_{{3,0}}+p_{{2,1}}p_{{3,2}}+p_{{2,1}}p_{{4,0}}+p_{{2,1}}p_{{4,3}}
\\
&
\quad
+p_{{2,1}}p_{{5,0}}-p_{{2,1}}p_{{5,1}}+p_{{2,2}}p_{{3,0}}-p_{{2,2}}p_{
{3,2}}+p_{{2,2}}p_{{4,0}}+p_{{2,2}}p_{{4,3}}-p_{{2,2}}p_{{5,0}}+p_{{2,
2}}p_{{5,1}}+p_{{2,3}}p_{{3,0}}
\\
&
\quad
-p_{{2,3}}p_{{3,2}}+p_{{2,3}}p_{{4,0}}+
p_{{2,3}}p_{{4,3}}+p_{{2,3}}p_{{5,0}}-p_{{2,3}}p_{{5,1}}) d_{{2
}}{n_{{1}}}^{2}n_{{2}}+
\\
&
\quad
\left( p_{{1,0}}p_{{4,0}}+p_{{1,0}}p_{{4,3}}-{
p_{{2,0}}}^{2}+2\,p_{{2,0}}p_{{2,3}}+{p_{{2,1}}}^{2}-2\,p_{{2,1}}p_{{2
,2}}+{p_{{2,2}}}^{2}-{p_{{2,3}}}^{2} \right) {n_{{1}}}^{2}{n_{{2}}}^{2
}.
\end{align*}
\item Case 2 where $(l-1,m-1)=(1,1)\mod 2$, we obtain
\begin{align*}
A&=(p_{{4,0}}p_{{7,0}}-p_{{4,3}}p_{{7,0}}-{p_{{6,0}}}^{2}+2\,p_{{6
,0}}p_{{6,3}}+{p_{{6,1}}}^{2}-2\,p_{{6,1}}p_{{6,2}}+{p_{{6,2}}}^{2}-{p
_{{6,3}}}^{2}) {d_{{2}}}^{2}{d_{{1}}}^{2}
\\
&
\quad
+
( p_{{2,0}}p_{{
7,0}}+p_{{2,1}}p_{{7,0}}+p_{{2,2}}p_{{7,0}}+p_{{2,3}}p_{{7,0}}-p_{{3,0
}}p_{{6,0}}+p_{{3,0}}p_{{6,1}}-p_{{3,0}}p_{{6,2}}+p_{{3,0}}p_{{6,3}}+p
_{{3,2}}p_{{6,0}}
\\
&
\quad
-p_{{3,2}}p_{{6,1}}
+p_{{3,2}}p_{{6,2}}-p_{{3,2}}p_{{6
,3}}+p_{{4,0}}p_{{6,0}}-p_{{4,0}}p_{{6,1}}-p_{{4,0}}p_{{6,2}}+p_{{4,0}
}p_{{6,3}}-p_{{4,3}}p_{{6,0}}+p_{{4,3}}p_{{6,1}}
\\
&
\quad
+p_{{4,3}}p_{{6,2}}-p_
{{4,3}}p_{{6,3}}-p_{{5,0}}p_{{6,0}}-p_{{5,0}}p_{{6,1}}+p_{{5,0}}p_{{6,
2}}+p_{{5,0}}p_{{6,3}}+p_{{5,1}}p_{{6,0}}+p_{{5,1}}p_{{6,1}}-p_{{5,1}}
p_{{6,2}}
\\
&
\quad
-p_{{5,1}}p_{{6,3}}) {d_{{1}}}^{2}d_{{2}}n_{{2}}+
 ( p_{{2,0}}p_{{6,0}}-p_{{2,0}}p_{{6,1}}-p_{{2,0}}p_{{6,2}}+p_{{2
,0}}p_{{6,3}}+p_{{2,1}}p_{{6,0}}
-p_{{2,1}}p_{{6,1}}-p_{{2,1}}p_{{6,2}}
\\
&
\quad
+p_{{2,1}}p_{{6,3}}+p_{{2,2}}p_{{6,0}}-p_{{2,2}}p_{{6,1}}-p_{{2,2}}p_{
{6,2}}+p_{{2,2}}p_{{6,3}}+p_{{2,3}}p_{{6,0}}-p_{{2,3}}p_{{6,1}}-p_{{2,
3}}p_{{6,2}}+p_{{2,3}}p_{{6,3}}
\\
&
\quad
-p_{{3,0}}p_{{5,0}}+p_{{3,0}}p_{{5,1}}+
p_{{3,2}}p_{{5,0}}-p_{{3,2}}p_{{5,1}} ) {d_{{1}}}^{2}{n_{{2}}}^{
2}+ ( p_{{2,0}}p_{{7,0}}-p_{{2,1}}p_{{7,0}}-p_{{2,2}}p_{{7,0}}
+p_{{2,3}}p_{{7,0}}
\\
&
\quad
-p_{{3,0}}p_{{6,0}}-p_{{3,0}}p_{{6,1}}+p_{{3,0}}p_{{6,
2}}+p_{{3,0}}p_{{6,3}}-p_{{3,2}}p_{{6,0}}-p_{{3,2}}p_{{6,1}}+p_{{3,2}}
p_{{6,2}}+p_{{3,2}}p_{{6,3}}+p_{{4,0}}p_{{6,0}}
\\
&
\quad
+p_{{4,0}}p_{{6,1}}+p_{
{4,0}}p_{{6,2}}+p_{{4,0}}p_{{6,3}}-p_{{4,3}}p_{{6,0}}-p_{{4,3}}p_{{6,1
}}-p_{{4,3}}p_{{6,2}}-p_{{4,3}}p_{{6,3}}-p_{{5,0}}p_{{6,0}}+p_{{5,0}}p
_{{6,1}}
\\
&
\quad
-p_{{5,0}}p_{{6,2}}+p_{{5,0}}p_{{6,3}}-p_{{5,1}}p_{{6,0}}+p_{{
5,1}}p_{{6,1}}-p_{{5,1}}p_{{6,2}}+p_{{5,1}}p_{{6,3}} ) d_{{1}}{d
_{{2}}}^{2}n_{{1}}+ ( p_{{1,0}}p_{{7,0}}+4\,p_{{2,0}}p_{{6,3}}
\\
&
\quad
+4\,p_{{2,1}}p_{{6,2}}+4\,p_{{2,2}}p_{{6,1}}+4\,p_{{2,3}}p_{{6,0}}-{p_{{
3,0}}}^{2}+{p_{{3,2}}}^{2}+{p_{{4,0}}}^{2}-{p_{{4,3}}}^{2}-{p_{{5,0}}}
^{2}+{p_{{5,1}}}^{2}
) d_{{1}}d_{{2}}n_{{1}}n_{{2}}
\\
&
\quad
+  p_{{
1,0}}p_{{6,0}}-p_{{1,0}}p_{{6,1}}-p_{{1,0}}p_{{6,2}}+p_{{1,0}}p_{{6,3}
}-p_{{2,0}}p_{{3,0}}+p_{{2,0}}p_{{3,2}}+p_{{2,0}}p_{{4,0}}+p_{{2,0}}p_
{{4,3}}-p_{{2,0}}p_{{5,0}}
\\
&
\quad
+p_{{2,0}}p_{{5,1}}-p_{{2,1}}p_{{3,0}}+p_{{2
,1}}p_{{3,2}}+p_{{2,1}}p_{{4,0}}+p_{{2,1}}p_{{4,3}}+p_{{2,1}}p_{{5,0}}
-p_{{2,1}}p_{{5,1}}+p_{{2,2}}p_{{3,0}}-p_{{2,2}}p_{{3,2}}
\\
&
\quad
+p_{{2,2}}p_{
{4,0}}
+p_{{2,2}}p_{{4,3}}-p_{{2,2}}p_{{5,0}}+p_{{2,2}}p_{{5,1}}+p_{{2,
3}}p_{{3,0}}-p_{{2,3}}p_{{3,2}}+p_{{2,3}}p_{{4,0}}+p_{{2,3}}p_{{4,3}}+
p_{{2,3}}p_{{5,0}}
\\
&
\quad
-p_{{2,3}}p_{{5,1}}) d_{{1}}n_{{1}}{n_{{2}}}^
{2}
+ ( p_{{2,0}}p_{{6,0}}+p_{{2,0}}p_{{6,1}}+p_{{2,0}}p_{{6,2}}+p
_{{2,0}}p_{{6,3}}-p_{{2,1}}p_{{6,0}}-p_{{2,1}}p_{{6,1}}-p_{{2,1}}p_{{6
,2}}
\\
&
\quad
-p_{{2,1}}p_{{6,3}}-p_{{2,2}}p_{{6,0}}
-p_{{2,2}}p_{{6,1}}-p_{{2,2}
}p_{{6,2}}-p_{{2,2}}p_{{6,3}}+p_{{2,3}}p_{{6,0}}+p_{{2,3}}p_{{6,1}}+p_
{{2,3}}p_{{6,2}}+p_{{2,3}}p_{{6,3}}
\\
&
\quad
-p_{{3,0}}p_{{5,0}}-p_{{3,0}}p_{{5,
1}}-p_{{3,2}}p_{{5,0}}-p_{{3,2}}p_{{5,1}}) {d_{{2}}}^{2}{n_{{1}
}}^{2}+ ( p_{{1,0}}p_{{6,0}}+p_{{1,0}}p_{{6,1}}+p_{{1,0}}p_{{6,2}
}+p_{{1,0}}p_{{6,3}}
\\
&
\quad
-p_{{2,0}}p_{{3,0}}
-p_{{2,0}}p_{{3,2}}+p_{{2,0}}p_
{{4,0}}+p_{{2,0}}p_{{4,3}}-p_{{2,0}}p_{{5,0}}-p_{{2,0}}p_{{5,1}}+p_{{2
,1}}p_{{3,0}}+p_{{2,1}}p_{{3,2}}-p_{{2,1}}p_{{4,0}}
\\
&
\quad
-p_{{2,1}}p_{{4,3}}
-p_{{2,1}}p_{{5,0}}-p_{{2,1}}p_{{5,1}}-p_{{2,2}}p_{{3,0}}-p_{{2,2}}p_{
{3,2}}-p_{{2,2}}p_{{4,0}}-p_{{2,2}}p_{{4,3}}+p_{{2,2}}p_{{5,0}}+p_{{2,
2}}p_{{5,1}}
\\
&
\quad
+p_{{2,3}}p_{{3,0}}
+p_{{2,3}}p_{{3,2}}+p_{{2,3}}p_{{4,0}}+
p_{{2,3}}p_{{4,3}}+p_{{2,3}}p_{{5,0}}+p_{{2,3}}p_{{5,1}}) d_{{2
}}{n_{{1}}}^{2}n_{{2}}
\\
&
\quad
+ \left( p_{{1,0}}p_{{4,0}}+p_{{1,0}}p_{{4,3}}-{
p_{{2,0}}}^{2}+2\,p_{{2,0}}p_{{2,3}}+{p_{{2,1}}}^{2}-2\,p_{{2,1}}p_{{2
,2}}+{p_{{2,2}}}^{2}-{p_{{2,3}}}^{2} \right) {n_{{1}}}^{2}{n_{{2}}}^{2
}.
\end{align*}
\item Case 3  where $(l-1,m-1)=(1,0)\mod 2$, we obtain
\begin{align*}
A&=( p_{{4,0}}p_{{7,0}}+p_{{4,3}}p_{{7,0}}-{p_{{6,0}}}^{2}-2\,p_{{6
,0}}p_{{6,3}}+{p_{{6,1}}}^{2}+2\,p_{{6,1}}p_{{6,2}}+{p_{{6,2}}}^{2}-{p
_{{6,3}}}^{2}) {d_{{1}}}^{2}{d_{{2}}}^{2}
\\
&\quad
+ ( p_{{2,0}}p_{{7,0}}+p_{{2,1}}p_{{7,0}}-p_{{2,2}}p_{{7,0}}-p_{{2,3}}p_{{7,0}}-p_{{3,0
}}p_{{6,0}}+p_{{3,0}}p_{{6,1}}+p_{{3,0}}p_{{6,2}}-p_{{3,0}}p_{{6,3}}-p
_{{3,2}}p_{{6,0}}
\\
&\quad
+p_{{3,2}}p_{{6,1}}+p_{{3,2}}p_{{6,2}}-p_{{3,2}}p_{{6
,3}}+p_{{4,0}}p_{{6,0}}
-p_{{4,0}}p_{{6,1}}+p_{{4,0}}p_{{6,2}}-p_{{4,0}
}p_{{6,3}}+p_{{4,3}}p_{{6,0}}-p_{{4,3}}p_{{6,1}}
\\
&\quad
+p_{{4,3}}p_{{6,2}}-p_
{{4,3}}p_{{6,3}}-p_{{5,0}}p_{{6,0}}-p_{{5,0}}p_{{6,1}}-p_{{5,0}}p_{{6,
2}}-p_{{5,0}}p_{{6,3}}+p_{{5,1}}p_{{6,0}}+p_{{5,1}}p_{{6,1}}+p_{{5,1}}
p_{{6,2}}
\\
&\quad
+p_{{5,1}}p_{{6,3}} {d_{{1}}}^{2}d_{{2}}n_{{2}}+
 ( p_{{2,0}}p_{{6,0}}-p_{{2,0}}p_{{6,1}}+p_{{2,0}}p_{{6,2}}-p_{{2
,0}}p_{{6,3}}+p_{{2,1}}p_{{6,0}}-p_{{2,1}}p_{{6,1}}+p_{{2,1}}p_{{6,2}}
\\
&\quad
-p_{{2,1}}p_{{6,3}}-p_{{2,2}}p_{{6,0}}+p_{{2,2}}p_{{6,1}}-p_{{2,2}}p_{
{6,2}}+p_{{2,2}}p_{{6,3}}-p_{{2,3}}p_{{6,0}}+p_{{2,3}}p_{{6,1}}-p_{{2,
3}}p_{{6,2}}+p_{{2,3}}p_{{6,3}}
\\
&\quad
-p_{{3,0}}p_{{5,0}}+p_{{3,0}}p_{{5,1}}-
p_{{3,2}}p_{{5,0}}+p_{{3,2}}p_{{5,1}}) {d_{{1}}}^{2}{n_{{2}}}^{
2}+
( p_{{2,0}}p_{{7,0}}-p_{{2,1}}p_{{7,0}}+p_{{2,2}}p_{{7,0}}-p_
{{2,3}}p_{{7,0}}
\\
&\quad
-p_{{3,0}}p_{{6,0}}-p_{{3,0}}p_{{6,1}}-p_{{3,0}}p_{{6,
2}}-p_{{3,0}}p_{{6,3}}+p_{{3,2}}p_{{6,0}}+p_{{3,2}}p_{{6,1}}+p_{{3,2}}
p_{{6,2}}+p_{{3,2}}p_{{6,3}}+p_{{4,0}}p_{{6,0}}
\\
&\quad
+p_{{4,0}}p_{{6,1}}-p_{
{4,0}}p_{{6,2}}-p_{{4,0}}p_{{6,3}}
+p_{{4,3}}p_{{6,0}}+p_{{4,3}}p_{{6,1
}}-p_{{4,3}}p_{{6,2}}-p_{{4,3}}p_{{6,3}}-p_{{5,0}}p_{{6,0}}+p_{{5,0}}p
_{{6,1}}
\\
&\quad
+p_{{5,0}}p_{{6,2}}-p_{{5,0}}p_{{6,3}}-p_{{5,1}}p_{{6,0}}+p_{{
5,1}}p_{{6,1}}+p_{{5,1}}p_{{6,2}}-p_{{5,1}}p_{{6,3}} ) d_{{1}}{d
_{{2}}}^{2}n_{{1}}
+ ( p_{{1,0}}p_{{7,0}}-4\,p_{{2,0}}p_{{6,3}}
\\
&\quad
-4\,p_{{2,1}}p_{{6,2}}-4\,p_{{2,2}}p_{{6,1}}-4\,p_{{2,3}}p_{{6,0}}-{p_{{
3,0}}}^{2}+{p_{{3,2}}}^{2}+{p_{{4,0}}}^{2}-{p_{{4,3}}}^{2}-{p_{{5,0}}}
^{2}+{p_{{5,1}}}^{2}) d_{{1}}d_{{2}}n_{{1}}n_{{2}}
\\
&\quad
+ ( p_{{
1,0}}p_{{6,0}}-p_{{1,0}}p_{{6,1}}+p_{{1,0}}p_{{6,2}}-p_{{1,0}}p_{{6,3}
}-p_{{2,0}}p_{{3,0}}-p_{{2,0}}p_{{3,2}}+p_{{2,0}}p_{{4,0}}-p_{{2,0}}p_
{{4,3}}-p_{{2,0}}p_{{5,0}}
\\
&\quad
+p_{{2,0}}p_{{5,1}}-p_{{2,1}}p_{{3,0}}-p_{{2
,1}}p_{{3,2}}+p_{{2,1}}p_{{4,0}}-p_{{2,1}}p_{{4,3}}+p_{{2,1}}p_{{5,0}}
-p_{{2,1}}p_{{5,1}}-p_{{2,2}}p_{{3,0}}-p_{{2,2}}p_{{3,2}}
\\
&\quad
-p_{{2,2}}p_{
{4,0}}+p_{{2,2}}p_{{4,3}}+p_{{2,2}}p_{{5,0}}-p_{{2,2}}p_{{5,1}}-p_{{2,
3}}p_{{3,0}}-p_{{2,3}}p_{{3,2}}-p_{{2,3}}p_{{4,0}}+p_{{2,3}}p_{{4,3}}-
p_{{2,3}}p_{{5,0}}
\\
&\quad
+p_{{2,3}}p_{{5,1}} ) d_{{1}}n_{{1}}{n_{{2}}}^
{2}+ ( p_{{2,0}}p_{{6,0}}+p_{{2,0}}p_{{6,1}}-p_{{2,0}}p_{{6,2}}-p
_{{2,0}}p_{{6,3}}-p_{{2,1}}p_{{6,0}}-p_{{2,1}}p_{{6,1}}+p_{{2,1}}p_{{6
,2}}
\\
&\quad
+p_{{2,1}}p_{{6,3}}+p_{{2,2}}p_{{6,0}}+p_{{2,2}}p_{{6,1}}-p_{{2,2}
}p_{{6,2}}-p_{{2,2}}p_{{6,3}}-p_{{2,3}}p_{{6,0}}-p_{{2,3}}p_{{6,1}}+p_
{{2,3}}p_{{6,2}}+p_{{2,3}}p_{{6,3}}
\\
&\quad
-p_{{3,0}}p_{{5,0}}-p_{{3,0}}p_{{5,
1}}+p_{{3,2}}p_{{5,0}}+p_{{3,2}}p_{{5,1}} ) {d_{{2}}}^{2}{n_{{1}
}}^{2}+ ( p_{{1,0}}p_{{6,0}}+p_{{1,0}}p_{{6,1}}-p_{{1,0}}p_{{6,2}
}-p_{{1,0}}p_{{6,3}}
\\
&\quad
-p_{{2,0}}p_{{3,0}}+p_{{2,0}}p_{{3,2}}+p_{{2,0}}p_
{{4,0}}-p_{{2,0}}p_{{4,3}}-p_{{2,0}}p_{{5,0}}-p_{{2,0}}p_{{5,1}}+p_{{2
,1}}p_{{3,0}}-p_{{2,1}}p_{{3,2}}-p_{{2,1}}p_{{4,0}}
\\
&\quad
+p_{{2,1}}p_{{4,3}}
-p_{{2,1}}p_{{5,0}}-p_{{2,1}}p_{{5,1}}+p_{{2,2}}p_{{3,0}}-p_{{2,2}}p_{
{3,2}}+p_{{2,2}}p_{{4,0}}-p_{{2,2}}p_{{4,3}}-p_{{2,2}}p_{{5,0}}-p_{{2,
2}}p_{{5,1}}
\\
&\quad
-p_{{2,3}}p_{{3,0}}+p_{{2,3}}p_{{3,2}}-p_{{2,3}}p_{{4,0}}+
p_{{2,3}}p_{{4,3}}-p_{{2,3}}p_{{5,0}}-p_{{2,3}}p_{{5,1}} ) d_{{2
}}{n_{{1}}}^{2}n_{{2}}
\\
&\quad
+ ( p_{{1,0}}p_{{4,0}}-p_{{1,0}}p_{{4,3}}-{
p_{{2,0}}}^{2}-2\,p_{{2,0}}p_{{2,3}}+{p_{{2,1}}}^{2}+2\,p_{{2,1}}p_{{2
,2}}+{p_{{2,2}}}^{2}-{p_{{2,3}}}^{2} ) {n_{{1}}}^{2}{n_{{2}}}^{2
}.
\end{align*}
\item Case 4 where $(l-1,m-1)=(0,1)\mod 2$, we obtain
\begin{align*}
A&= \left( p_{{4,0}}p_{{7,0}}+p_{{4,3}}p_{{7,0}}-{p_{{6,0}}}^{2}-2\,p_{{6
,0}}p_{{6,3}}+{p_{{6,1}}}^{2}+2\,p_{{6,1}}p_{{6,2}}+{p_{{6,2}}}^{2}-{p
_{{6,3}}}^{2} \right) {d_{{1}}}^{2}{d_{{2}}}^{2}
\\
&\quad
+ ( p_{{2,0}}p_{{
7,0}}-p_{{2,1}}p_{{7,0}}+p_{{2,2}}p_{{7,0}}-p_{{2,3}}p_{{7,0}}-p_{{3,0
}}p_{{6,0}}-p_{{3,0}}p_{{6,1}}-p_{{3,0}}p_{{6,2}}-p_{{3,0}}p_{{6,3}}+p
_{{3,2}}p_{{6,0}}
\\
&\quad
+p_{{3,2}}p_{{6,1}}+p_{{3,2}}p_{{6,2}}+p_{{3,2}}p_{{6
,3}}+p_{{4,0}}p_{{6,0}}+p_{{4,0}}p_{{6,1}}-p_{{4,0}}p_{{6,2}}-p_{{4,0}
}p_{{6,3}}+p_{{4,3}}p_{{6,0}}+p_{{4,3}}p_{{6,1}}
\\
&\quad
-p_{{4,3}}p_{{6,2}}-p_
{{4,3}}p_{{6,3}}-p_{{5,0}}p_{{6,0}}+p_{{5,0}}p_{{6,1}}+p_{{5,0}}p_{{6,
2}}-p_{{5,0}}p_{{6,3}}-p_{{5,1}}p_{{6,0}}+p_{{5,1}}p_{{6,1}}+p_{{5,1}}
p_{{6,2}}
\\
&\quad
-p_{{5,1}}p_{{6,3}} ) {d_{{1}}}^{2}d_{{2}}n_{{2}}+
 ( p_{{2,0}}p_{{6,0}}+p_{{2,0}}p_{{6,1}}-p_{{2,0}}p_{{6,2}}-p_{{2
,0}}p_{{6,3}}-p_{{2,1}}p_{{6,0}}-p_{{2,1}}p_{{6,1}}+p_{{2,1}}p_{{6,2}}
\\
&\quad
+p_{{2,1}}p_{{6,3}}+p_{{2,2}}p_{{6,0}}+p_{{2,2}}p_{{6,1}}-p_{{2,2}}p_{
{6,2}}-p_{{2,2}}p_{{6,3}}-p_{{2,3}}p_{{6,0}}-p_{{2,3}}p_{{6,1}}+p_{{2,
3}}p_{{6,2}}+p_{{2,3}}p_{{6,3}}
\\
&\quad
-p_{{3,0}}p_{{5,0}}-p_{{3,0}}p_{{5,1}}+
p_{{3,2}}p_{{5,0}}+p_{{3,2}}p_{{5,1}} ) {d_{{1}}}^{2}{n_{{2}}}^{
2}+ ( p_{{2,0}}p_{{7,0}}+p_{{2,1}}p_{{7,0}}-p_{{2,2}}p_{{7,0}}-p_
{{2,3}}p_{{7,0}}
\\
&\quad
-p_{{3,0}}p_{{6,0}}+p_{{3,0}}p_{{6,1}}+p_{{3,0}}p_{{6,
2}}-p_{{3,0}}p_{{6,3}}-p_{{3,2}}p_{{6,0}}+p_{{3,2}}p_{{6,1}}+p_{{3,2}}
p_{{6,2}}-p_{{3,2}}p_{{6,3}}+p_{{4,0}}p_{{6,0}}
\\
&\quad
-p_{{4,0}}p_{{6,1}}+p_{
{4,0}}p_{{6,2}}-p_{{4,0}}p_{{6,3}}+p_{{4,3}}p_{{6,0}}-p_{{4,3}}p_{{6,1
}}+p_{{4,3}}p_{{6,2}}-p_{{4,3}}p_{{6,3}}-p_{{5,0}}p_{{6,0}}-p_{{5,0}}p
_{{6,1}}
\\
&\quad
-p_{{5,0}}p_{{6,2}}-p_{{5,0}}p_{{6,3}}+p_{{5,1}}p_{{6,0}}+p_{{
5,1}}p_{{6,1}}+p_{{5,1}}p_{{6,2}}+p_{{5,1}}p_{{6,3}} ) d_{{1}}{d
_{{2}}}^{2}n_{{1}}+ ( p_{{1,0}}p_{{7,0}}-4\,p_{{2,0}}p_{{6,3}}
\\
&\quad
-4\,p_{{2,1}}p_{{6,2}}-4\,p_{{2,2}}p_{{6,1}}-4\,p_{{2,3}}p_{{6,0}}-{p_{{
3,0}}}^{2}+{p_{{3,2}}}^{2}+{p_{{4,0}}}^{2}-{p_{{4,3}}}^{2}-{p_{{5,0}}}
^{2}+{p_{{5,1}}}^{2}) d_{{1}}d_{{2}}n_{{1}}n_{{2}}
\\
&\quad
+ ( p_{{
1,0}}p_{{6,0}}+p_{{1,0}}p_{{6,1}}-p_{{1,0}}p_{{6,2}}-p_{{1,0}}p_{{6,3}
}-p_{{2,0}}p_{{3,0}}+p_{{2,0}}p_{{3,2}}+p_{{2,0}}p_{{4,0}}-p_{{2,0}}p_
{{4,3}}-p_{{2,0}}p_{{5,0}}
\\
&\quad
-p_{{2,0}}p_{{5,1}}+p_{{2,1}}p_{{3,0}}-p_{{2
,1}}p_{{3,2}}-p_{{2,1}}p_{{4,0}}+p_{{2,1}}p_{{4,3}}-p_{{2,1}}p_{{5,0}}
-p_{{2,1}}p_{{5,1}}+p_{{2,2}}p_{{3,0}}-p_{{2,2}}p_{{3,2}}
\\
&\quad
+p_{{2,2}}p_{
{4,0}}-p_{{2,2}}p_{{4,3}}-p_{{2,2}}p_{{5,0}}-p_{{2,2}}p_{{5,1}}-p_{{2,
3}}p_{{3,0}}+p_{{2,3}}p_{{3,2}}-p_{{2,3}}p_{{4,0}}+p_{{2,3}}p_{{4,3}}-
p_{{2,3}}p_{{5,0}}
\\
&\quad
-p_{{2,3}}p_{{5,1}}) d_{{1}}n_{{1}}{n_{{2}}}^
{2}+ ( p_{{2,0}}p_{{6,0}}-p_{{2,0}}p_{{6,1}}+p_{{2,0}}p_{{6,2}}-p
_{{2,0}}p_{{6,3}}+p_{{2,1}}p_{{6,0}}-p_{{2,1}}p_{{6,1}}+p_{{2,1}}p_{{6
,2}}
\\
&\quad
-p_{{2,1}}p_{{6,3}}-p_{{2,2}}p_{{6,0}}+p_{{2,2}}p_{{6,1}}-p_{{2,2}
}p_{{6,2}}+p_{{2,2}}p_{{6,3}}-p_{{2,3}}p_{{6,0}}+p_{{2,3}}p_{{6,1}}-p_
{{2,3}}p_{{6,2}}+p_{{2,3}}p_{{6,3}}
\\
&\quad
-p_{{3,0}}p_{{5,0}}+p_{{3,0}}p_{{5,
1}}-p_{{3,2}}p_{{5,0}}+p_{{3,2}}p_{{5,1}} ) {d_{{2}}}^{2}{n_{{1}
}}^{2}+ ( p_{{1,0}}p_{{6,0}}-p_{{1,0}}p_{{6,1}}+p_{{1,0}}p_{{6,2}
}-p_{{1,0}}p_{{6,3}}
\\
&\quad
-p_{{2,0}}p_{{3,0}}-p_{{2,0}}p_{{3,2}}+p_{{2,0}}p_
{{4,0}}-p_{{2,0}}p_{{4,3}}-p_{{2,0}}p_{{5,0}}+p_{{2,0}}p_{{5,1}}-p_{{2
,1}}p_{{3,0}}-p_{{2,1}}p_{{3,2}}+p_{{2,1}}p_{{4,0}}
\\
&\quad
-p_{{2,1}}p_{{4,3}}
+p_{{2,1}}p_{{5,0}}-p_{{2,1}}p_{{5,1}}-p_{{2,2}}p_{{3,0}}-p_{{2,2}}p_{
{3,2}}-p_{{2,2}}p_{{4,0}}+p_{{2,2}}p_{{4,3}}+p_{{2,2}}p_{{5,0}}-p_{{2,
2}}p_{{5,1}}
\\
&\quad
-p_{{2,3}}p_{{3,0}}-p_{{2,3}}p_{{3,2}}-p_{{2,3}}p_{{4,0}}+
p_{{2,3}}p_{{4,3}}-p_{{2,3}}p_{{5,0}}+p_{{2,3}}p_{{5,1}}) d_{{2
}}{n_{{1}}}^{2}n_{{2}}
\\
&\quad
+( p_{{1,0}}p_{{4,0}}-p_{{1,0}}p_{{4,3}}-{
p_{{2,0}}}^{2}-2\,p_{{2,0}}p_{{2,3}}+{p_{{2,1}}}^{2}+2\,p_{{2,1}}p_{{2
,2}}+{p_{{2,2}}}^{2}-{p_{{2,3}}}^{2}) {n_{{1}}}^{2}{n_{{2}}}^{2
}.
\end{align*}
\end{itemize}

\bibliographystyle{plain}

\end{document}